\documentclass[onecolumn,a4paper,accepted=2019-04-01]{quantumarticle}

\pdfoutput=1

\usepackage[utf8]{inputenc}
\usepackage[T1]{fontenc}
\usepackage{amsmath}
\usepackage{amssymb}
\usepackage{amsthm}
\usepackage{mathtools}
\usepackage{bbm}
\usepackage{braket}
\usepackage{qcircuit}
\usepackage{hyperref}
\usepackage[backend=bibtex8,bibencoding=ascii,%
    language=auto,%
    style=numeric-comp,%
    doi=true,
    isbn=false,
    url=false,
    maxbibnames=10, 
    backref=true,%
    natbib=true 
    ]{biblatex}
\usepackage{doi}
\usepackage{tikz}
\usepackage{tikz-3dplot}
\usepackage{algorithm}
\usepackage{algpseudocode}
\usepackage{subcaption}
\usepackage{thmtools, thm-restate}

\newcommand{\R}{\mathbb{R}}

\newcommand{\C}{\mathbb{C}}
\newcommand{\Z}{\mathbb{Z}}

\newcommand{\F}{\mathbb{F}}


\DeclareMathOperator{\tr}{tr}
\DeclareMathOperator{\sgn}{sgn}

\DeclareMathOperator{\conv}{conv}

\DeclareMathOperator{\End}{End}
\DeclareMathOperator{\Aut}{Aut}

\DeclareMathOperator{\diag}{diag}
\DeclareMathOperator{\Sym}{Sym}
\DeclareMathOperator{\wt}{wt}
\DeclareMathOperator{\Ad}{Ad}
\DeclareMathOperator{\poly}{poly}


\newcommand{\one}{\mathbbm{1}}

\newcommand{\GL}{\mathrm{GL}}
\newcommand{\Gl}{\GL}

\newcommand{\Sp}{\mathrm{Sp}}

\newcommand{\Po}{\mathcal{P}}
\newcommand{\stab}{\mathrm{stab}}
\newcommand{\SP}{\mathrm{SP}}
\newcommand{\ASP}{\mathrm{Q}}
\newcommand{\graph}{\mathrm{graph}}

\newcommand{\Cl}{\mathcal{C}}

\newcommand{\ECl}{\mathrm{E}\mathcal{C}}

\newcommand{\Pa}{\mathcal{P}}

\newcommand{\G}{G}


\newcommand{\RM}{\mathcal{R}}

\theoremstyle{plain}
\newtheorem{theorem}{Theorem}
\newtheorem{proposition}{Proposition}

\newtheorem{lemma}{Lemma}

\theoremstyle{definition}

\newtheorem{problem}{Problem}

\theoremstyle{remark}

\newcommand{\ie}{i.\,e.}

\newcommand{\eg}{e.\,g.}

\usetikzlibrary{shapes.misc}
\usetikzlibrary{arrows}
\usetikzlibrary{decorations.text}
\usetikzlibrary{shapes.symbols}
\usetikzlibrary{shapes.geometric}
\usetikzlibrary{decorations.pathreplacing}


\newsavebox{\twosubbox}


\addbibresource{bibliography.bib}

\title{Robustness of Magic and\newline Symmetries of the Stabiliser Polytope}
\author{Markus Heinrich}
\email{markus.heinrich@uni-koeln.de}
\orcid{0000-0002-1334-7885}
\author{David Gross}
\affiliation{%
  Institute for Theoretical Physics, University of Cologne, 50937 Cologne, Germany
}%

\begin{document}

\abstract{
We give a new algorithm for computing the \emph{robustness of magic}---a measure of the utility of quantum states as a computational resource.
Our work is motivated by the \emph{magic state model} of fault-tolerant quantum computation. 
In this model, all unitaries belong to the Clifford group. 
Non-Clifford operations are effected by injecting non-stabiliser states, which are referred to as \emph{magic states} in this context.
The \emph{robustness of magic} measures the complexity of simulating such a circuit using a classical Monte Carlo algorithm.
It is closely related to the degree negativity that slows down Monte Carlo simulations through the infamous \emph{sign problem}.
Surprisingly, the robustness of magic is \emph{sub}multiplicative.
This implies that the classical simulation overhead scales subexponentially with the number of injected magic states---better than a naive analysis would suggest.
However, determining the robustness of $n$ copies of a magic state is difficult, as its definition involves a convex optimisation problem in a $4^n$-dimensional space.
In this paper, we make use of inherent symmetries to reduce the problem to $n$ dimensions.
The total run-time of our algorithm, while still exponential in $n$, is super-polynomially faster than previously published methods.
We provide a computer implementation and give the robustness of up to 10 copies of the most commonly used magic states.
Guided by the exact results, we find a finite hierarchy of approximate solutions where each level can be evaluated in polynomial time and yields rigorous upper bounds to the robustness. 
Technically, we use symmetries of the stabiliser polytope to connect the robustness of magic to the geometry of a low-dimensional convex polytope generated by certain \emph{signed quantum weight enumerators}.
As a by-product, we characterised the automorphism group of the stabiliser polytope, and, more generally, of projections onto complex projective 3-designs.
}


\section{Introduction}
\label{sec:intro}

In fault-tolerant quantum computation (for a recent review, see Ref.~\cite{campbell_roads_2017}), each logical qubit is encoded in a non-local subspace of a number of physical qubits.
There are several ways of effecting a unitary transformation of logical qubits.
In the simplest case, logical unitaries can be implemented \emph{transversally}, i.e.\ by local gates acting on the physical qubits.
Unfortunately, a no-go theorem by \textcite{eastin_restrictions_2009} states that there are no quantum codes that allow for a \emph{universal} set of transversal gates.

In the \emph{magic state model} \cite{bravyi_universal_2005}, the logical gate set is chosen to be the Clifford group, which can be implemented transversally in various quantum codes using their physical counterparts.
Any logical non-Clifford gate would promote the Clifford group to universality.
This remaining problem is solved by providing an auxiliary qubit in a non-stabiliser state.
Using a circuit gadget (which only requires Clifford operations), one can turn this auxiliary state into a non-Clifford gate (Fig.~\ref{fig:state_inj}).
The auxiliary qubit state is consumed in the process, so that one such input needs to be injected for each non-Clifford gate.
These inputs are the \emph{magic states} from which the protocol derives its name.

\begin{figure}
 \centering
 \includegraphics[width=0.75\linewidth]{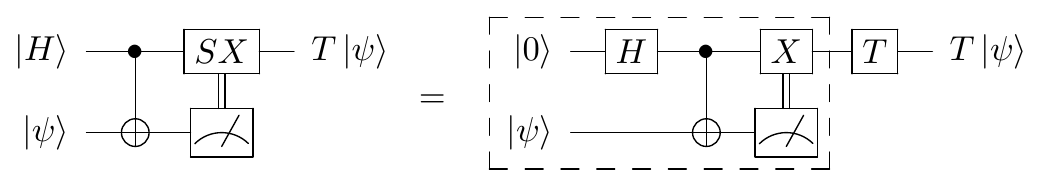}
 \caption{Use of magic state injection to perform a $T$ gate on some input state $\ket\psi$. The state injection circuit can be rewritten as a swap circuit followed by $T$ gate.}
 \label{fig:state_inj}
\end{figure}

A common choice for a non-Clifford gate is the $T$-gate $T=\diag(1,e^{i\pi/4})$, which is realised by the following magic state
\begin{equation}
 \label{H_state}
 \ket{H} := T\ket{+} = \frac{1}{\sqrt{2}} \left (\ket 0 + e^{i\pi/4} \ket 1 \right).
\end{equation}
Moreover, there is a second magic state, $\ket{T}$, which realises the non-Clifford gate $\diag(1,e^{i\pi/6})$. Their Bloch representation is shown in Fig.~\ref{fig:stab_polytope}. 
Interestingly, it has been found that even certain mixed states can ``supply the magic'' to promote a Clifford circuit to universality.
Indeed, a process called \emph{magic state distillation} (Fig.~\ref{fig:magic_state_destillation}) can turn many copies of some mixed state $\rho$ into a pure magic state using Clifford unitaries and computational basis measurements \cite{bravyi_universal_2005,reichardt_quantum_2006}.

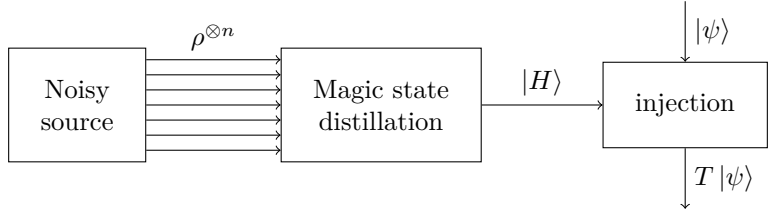
\begin{figure}
\centering
 \begin{tikzpicture}
 \node[draw,inner sep=12pt,align=center] (src) at (0,0) {Noisy \\ source};
 \node[draw,align=center,inner sep=12pt] (dest) at (4,0) {Magic state\\ distillation};
 \node[draw,align=center,inner sep=12pt] (gate) at (8,0) {injection};
 
 \node (in) at (8,1.5) {};
 \node (out) at (8,-1.5) {};
 

 \foreach \i in {-3,...,2}{%
  \draw[->] ([yshift=\i*0.2cm]src.east) to ([yshift=\i*0.2cm]dest.west);
 }
 \draw[->] ([yshift=3*0.2cm]src.east) to node[above,midway] {$\rho^{\otimes n}$} ([yshift=3*0.2cm]dest.west);


 \draw[->] (dest) to node[above,midway] {$\ket{H}$} (gate);
 
 \draw[->] (in) to node[right,midway] {$\ket\psi$} (gate);
 \draw[->] (gate) to node[right,midway] {$T\ket\psi$} (out);
 
\end{tikzpicture}
\caption{Magic state distillation turns a supply of mixed states $\rho$ into a pure magic state, e.g.\ $\ket{H}$ using only Clifford operations.}
\label{fig:magic_state_destillation}
\end{figure}

Magic state distillation motivates the search for quantitative measures of the ``computational utility'' of auxiliary states.
This analysis turns out to be slightly simpler for quantum systems with odd-dimensional Hilbert spaces \cite{veitch_negative_2012,mari_positive_2012,veitch_resource_2014}, as the theory of stabiliser states is somewhat better-behaved in this case, and there is a better-developed toolbox of ``phase space methods'' available in this case (see e.g.~Refs.~\cite{gross_hudsons_2006,zhu2016permutation,karanjai2018contextuality}).
However, as qubits are the paradigmatic systems for quantum computation, quantitative resource theories for multi-qubit magic states have since been developed \cite{howard_application_2017,bravyi_trading_2016}.

The starting point of these theories is the Gottesman-Knill Theorem \cite{nielsen_quantum_2011}. It states that quantum circuits consisting only of preparations of stabiliser states, Clifford unitaries, and computational basis measurements can be efficiently simulated on a classical computer.
Therefore, if the auxiliary states are stabilisers, there can be no quantum computational advantage.
Next, assume that an auxiliary $n$-qubit state $\rho$ is an element of the \emph{stabiliser polytope} $\SP_n$, i.e.\ 
\begin{equation*}
	\rho = \sum_i p_i s_i,
\end{equation*}
where $(p_i)_i$ is a probability distribution and the $s_i=\ket{\psi_i}\bra{\psi_i}$ are stabiliser states.
This readily gives rise to an efficient classical randomised algorithm that will draw outcomes from the same distribution as a quantum computer would \cite{virmani_classical_2005}, provided that one can sample efficiently from the probability distribution $(p_i)_i$:
Indeed, draw $s_i$ with probability $p_i$, and then continue to simulate the further time evolution using Gottesman-Knill.
Thus, density matrices contained in the convex hull of stabiliser states are equally useless as computational resource states in the magic state model (Fig.~\ref{fig:stab_polytope_intro}).

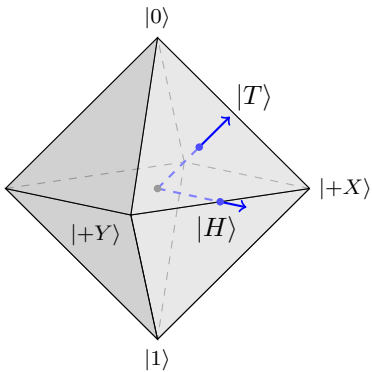
\begin{figure}[h]
\begin{minipage}[c]{0.35\textwidth}
 \begin{tikzpicture}[z=-5,scale=2]
  \coordinate (X1) at (1,0,0);
  \coordinate (X2) at (-1,0,0);
  \coordinate (Z1) at (0,1,0);
  \coordinate (Z2) at (0,-1,0);
  \coordinate (Y1) at (0,0,1);
  \coordinate (Y2) at (0,0,-1);
  \coordinate (H) at (0.71,0,0.71);
  \coordinate (T) at (0.5774,0.5774,0.5774);

  \draw (X1) -- (Y1);
  \draw (X2) -- (Y1);

  \begin{scope}[dashed,opacity=0.6]
  \draw (Y2) -- (Z1);
  \draw (Y2) -- (Z2);
  \draw (X2) -- (Y2);
  \draw (X1) -- (Y2);
  \end{scope}

  \draw (X1) -- (Z1);
  \draw (X2) -- (Z1);
  \draw (Y1) -- (Z1);

  \draw [fill=black!15!white,opacity=0.6] (X1) -- (Z2) -- (Y1) -- (X1);
  \draw [fill=black!30!white,opacity=0.6] (Y1) -- (X2) -- (Z2) -- (Y1);
  \draw [fill=black!15!white,opacity=0.6] (X1) -- (Z1) -- (Y1) -- (X1);
  \draw [fill=black!30!white,opacity=0.6] (Y1) -- (X2) -- (Z1) -- (Y1);

  \draw (X2) -- (Z2);
  \draw (Y1) -- (Z2);
  \draw (X1) -- (Z2);
  
  \draw [dashed,blue!50!white,thick] (0,0,0) -- (0.5,0,0.5);
  \draw [->,blue,thick] (0.5,0,0.5) -- (H);
  \node [below left=3pt,inner sep=0pt] at (H) {$\ket H$};
  \node [circle, fill=blue!70!white, inner sep=1pt] at (0.5,0,0.5) {}; 
  
  \draw [dashed,blue!50!white,thick] (0,0,0) -- (0.333,0.333,0.333);
  \draw [->,blue,thick] (0.333,0.333,0.333) -- (T);
  \node [above right=1pt,inner sep=1pt] at (T) {$\ket T$};
  \node [circle, fill=blue!70!white, inner sep=1pt] at (0.333,0.333,0.333) {};  
  
  \node [circle, fill=gray!80!white, inner sep=1pt] at (0,0,0) {};
  
  \node [above] at (Z1) {\footnotesize $\ket{0}$};
  \node [below] at (Z2) {\footnotesize $\ket{1}$};
  \node [right] at (X1) {\footnotesize $\ket{+X}$};
  \node [below left] at (Y1) {\footnotesize $\ket{+Y}$};

  \end{tikzpicture}
\end{minipage}
\begin{minipage}[c]{0.6\textwidth}
 \caption{Bloch representation of the the two most commonly considered magic states $\ket{H}$ and $\ket{T}$. 
 These states lie outside of the octahedron spanned by 1-qubit stabiliser states having a Bloch vector orthogonal to an edge ($\ket H$) or a facet ($\ket T$) of the stabiliser octahedron. The intersection of their Bloch vector with the facet or edge is marked with a blue dot.
 Certain mixed states can be used to distil these pure states using Clifford unitaries and measurements. However, states lying inside the stabiliser polytope are useless as a resource state.}
 \label{fig:stab_polytope_intro}
\end{minipage}
\end{figure}

Since the stabiliser states $\{s_i\}_i$ span the space of Hermitian operators, any auxiliary state can be expanded as $\rho = \sum_i x_i s_i$, with coefficients $x_i$ that are not necessarily non-negative. However, taking traces on both sides shows that the expansion is \emph{affine}, i.e.\ $\sum_i x_i = 1$.
It is well-known in the theory of Quantum Monte Carlo methods \cite{gubernatis2016quantum} that the probabilistic algorithm sketched above can be extended to the more general scenario. 
However, the runtime will increase with the total amount of ``negativity'' in the expansion coefficients $x_i$.
This is the dreaded \emph{sign problem}.
A precise theory of the simulation runtime in the context of quantum computation has been developed in Ref.~\cite{pashayan_estimating_2015} and applied to the magic state model in Ref.~\cite{howard_application_2017}.
More precisely, they define the \emph{robustness of magic} (RoM) as
\begin{equation}
\label{eq:l1_robustness}
 \RM(\rho) := \min \left\{ \|x\|_1 \; \bigg| \; x\in\R^N: \; \rho = \sum_{i=1}^N x_i s_i \right\},
\end{equation}
where the sum ranges over stabiliser states $\{s_1,\dots,s_N\}$ and  the $\ell_1$-norm
\begin{equation*}
	\|x\|_1 = \sum_{i=1}^N |x_i| = 1+ 2\sum_{i:\, x_i \leq 0} |x_i|
\end{equation*}
measures the ``amount of negativity'' in the affine combination. Then, the number of samples which have to be taken in the Monte Carlo simulation scales as $O(\mathcal{R}(\rho)^2)$ \cite{howard_application_2017,pashayan_estimating_2015}.

In addition to measuring the ``computational utility'' in the above precise sense, the RoM has further interpretations.
For example, it can be used to systematically lower-bound the number of non-Clifford gates required to synthesise certain unitaries, namely those that allow for a magic state realisation \cite{howard_application_2017}.
Lastly, the RoM derives its name from the fact that it quantifies the robustness of a state's computational utility against noise processes.
A precise account of this point of view is given  in Section~\ref{sec:rom}.

Interestingly, the RoM is \emph{sub}multiplicative, i.e.\ $\mathcal{R}(\rho^{\otimes 2}) \leq \mathcal{R}(\rho)^2$, where the inequality is usually strict \cite{howard_application_2017}.
That means that the simulation effort of a magic state circuit grows subexponentially with the number of injected magic states---an intriguing phenomenon.
Therefore, a quantity of interest is the \emph{regularised RoM}:
\begin{equation*}
	\RM_{\mathrm{reg}}(\rho):= \lim_{n\to\infty} \RM(\rho^{\otimes n})^{1/n}.
\end{equation*}
Unfortunately, computing $\mathcal{R}(\rho^{\otimes n})$ seems to be a difficult task. For $\rho$ being a single-qubit state, the tensor power $\rho^{\otimes n}$ lives in an $4^n$-dimensional space, and the sum over the $s_i$ in the definition (\ref{eq:l1_robustness}) of the RoM has to range over the $2^{O(n^2)}$ stabiliser states defined for $n$-qubit systems.
Any direct implementation of the optimisation problem (\ref{eq:l1_robustness}) will thus quickly became computationally intractable---and, indeed, \textcite{howard_application_2017} could carry it out only up to $n=5$.

The starting point of this work is the observation that there is a large symmetry group shared by $\rho^{\otimes n}$ and the stabiliser polytope. Thus, we formulate the optimisation in a space where the joint symmetries have been ``modded out''. 
The space of operators invariant under the joint symmetry group turns out to have a dimension mildly polynomial in $n$. For the especially interesting cases where the state is $\ket H^{\otimes n}$ or $\ket T^{\otimes n}$, the dimension reduces further to exactly $n$.
While the projection of the stabiliser polytope to this invariant space (Fig.~\ref{fig:proj_polytope}) still has exponentially many vertices, it turns out that formulating the optimisation problem in this symmetry-reduced way leads to a super-polynomially faster algorithm.

Equipped with the knowledge of the exact solution to Eq.~\eqref{eq:l1_robustness} for the commonly used magic states $\ket H^{\otimes n}$ and $\ket T^{\otimes n}$ and $n\leq 10$ qubits, we formulate a relaxation of the RoM problem for these states which yields an upper bound for the exact RoM. These approximations are in excellent agreement with the exact data for $n\leq 10$ and can be carried out for up to $26$ qubits. What is more, we can not only compute the RoM bounds for these approximations, but also find the corresponding affine decompositions $\rho^{\otimes n} = \sum_i x_i s_i$, which can directly be used in Monte Carlo simulations. Furthermore, we find a hierarchy of such RoM approximations by restricting to $k$-partite entangled stabiliser states which converges to the exact RoM. Interestingly, every level of the hierarchy can be computed in polynomial time.

Finally, both the exact and approximate results imply a runtime of $O(2^{0.737t})$ for simulating a circuit with $t$ T gates using the RoM algorithm. Moreover, our analysis suggests that this runtime is the optimal one that can be achieved using a RoM algorithm. Our work improves on the previously known runtime of $O(2^{0.753t})$ derived in Ref.~\cite{howard_application_2017}. Note that the RoM algorithm is able to simulate noisy circuits and mixed states. This is in contrast to simulation algorithms based on the so-called \emph{stabiliser rank} which can achieve a runtime of $O(2^{0.48t})$ for pure states \cite{bravyi_improved_2016,bravyi_trading_2016,bravyi_simulation_2018}. 

This paper is organised as follows. Section \ref{sec:rom} is devoted to a short discussion of the Robustness of Magic, giving an alternative definition to the one in the previous section and stating the properties of this resource monotone. Next, a series of techniques is presented which use the symmetries in the definition of the monotone to simplify the computation significantly. To this end, the symmetry group of the stabiliser polytope is characterised in Sec.~\ref{sec:symmetry_red} and certain classes of states are singled out in Sec.~\ref{sec:id_symmetries} which profit from a high degree of symmetry. For these states, we explicitly derive the symmetry-reduced problem by constructing a suitable basis for the invariant subspace in Sec.~\ref{sec:id_symmetries}, followed by enumerating equivalence classes of stabiliser states up to symmetry in Sec.~\ref{sec:representatives}. The numerical solutions for the constructed problems are presented and discussed in Section \ref{sec:computations}. Based on this, we prove a polytime relaxation of the RoM problem in Sec.~\ref{sec:rom_hierarchy}. Our results are summarised in Sec.~\ref{sec:conclusion}.

\begin{figure}
 \centering
 \subcaptionbox{$n=2$}{\includegraphics[width=0.4\textwidth]{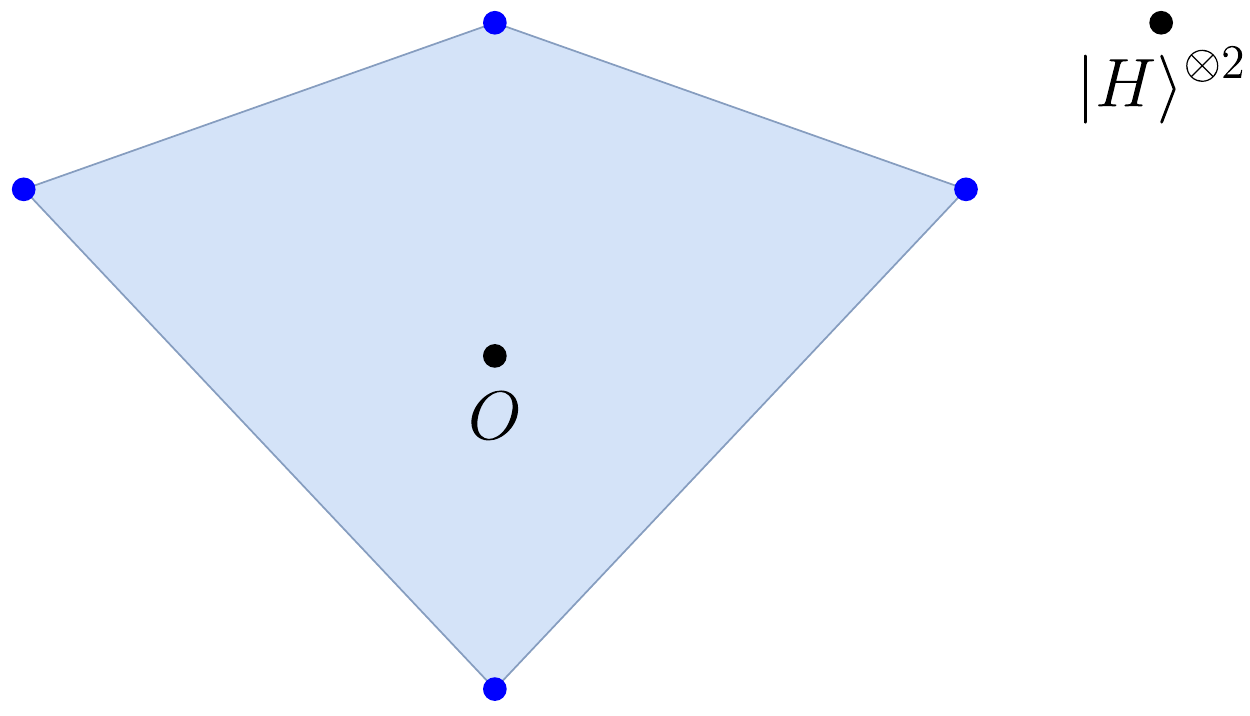}}
 \hspace{2em}
 \subcaptionbox{$n=3$}{\includegraphics[width=0.5\textwidth]{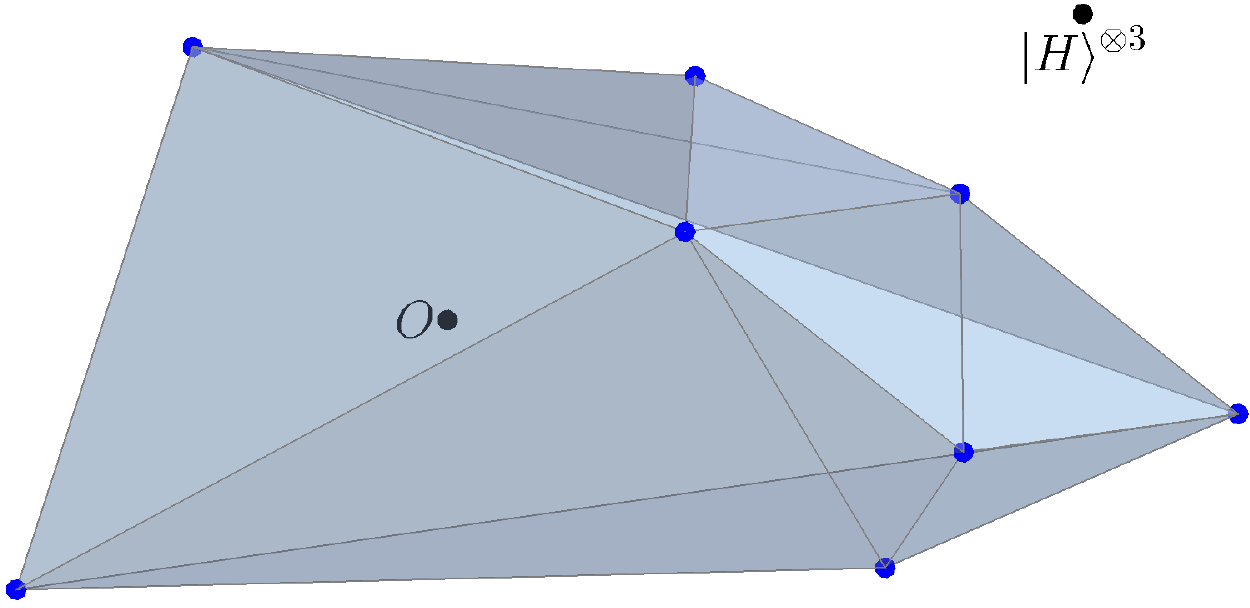}}
 \caption{Projected $n$-qubit stabiliser polytopes with respect to the symmetry group of the magic state $\ket H^{\otimes n}$ and $n=2,3$. We use a Bloch-like representation in the basis constructed in Sec.~\ref{sec:id_symmetries}. The origin $O$ corresponds to the maximally mixed state $\one/2^{n}$ and lies inside the polytope. The complexity of the polytopes is significantly reduced compared to the full 15-dimensional (respectively 63-dimensional) stabiliser polytopes. Visual inspection suggests that no joint symmetries of $\ket H^{\otimes n}$ and the projected polytope remain.
  }
 \label{fig:proj_polytope}
\end{figure}

\section{Robustness of Magic}
\label{sec:rom}

The resource theory of magic states can be developed in analogy to the more-established resource theory of entanglement and the \emph{robustness of entanglement} \cite{vidal_robustness_1999} studied in this context. 
There, the robustness of a state can be interpreted as a measure for the worst-case separable noise that renders the state separable.  However, its construction can be generalised to any resource theory as follows: Given a convex set $S$ of free resources, the robustness of $a$ relative to $b\in S$ is defined as
\begin{equation}
\label{eq:robustness}
 R(a||b) := \inf \left\{ s \geq 0 \; \bigg| \; \frac{1}{1+s}\left( a + s b \right) \in S \right\}.
\end{equation}
Depending on the choice of $b$, the robustness might be infinite. If it is finite, we can express $a$ as a pseudo-mixture 
\begin{equation}
 a = (1+s)b^+ - s b^-, \qquad\text{with }b^\pm\in S.
\end{equation}
Following \textcite{vidal_robustness_1999}, one can define the so-called \emph{total robustness} by minimising over the set of free resources:
\begin{equation}
 \label{eq:tot_robustness}
 R(a) := \inf_{b \in S}R(a||b).
\end{equation}

In the following, we choose $S=\SP_n$ to be the convex polytope spanned by the $n$-qubit stabiliser states.
More precisely, $\SP_n = \conv \stab(n)$, where $\stab(n)=\{s_1, \dots, s_N\}$ is the set of all $n$-qubit stabiliser states. 
Here, and in the following, by a ``quantum state'', we will always mean the density matrix representing it.
In the case of pure states $s_i = \ket{\psi_i}\bra{\psi_i}$, the associated vector $\ket{\psi_i}$ will be referred to as a \emph{state vector}.
The polytope $\SP_n$  is a subset of the real vector space of $(D\times D)$-dimensional Hermitian matrices $H_D$ where $D=2^n$ is the overall dimension of Hilbert space. More specifically, quantum states lie in the $(D^2-1)$-dimensional affine subspace given by $\tr\rho=1$. Within this affine hyperplane, $\SP_n$ is full-dimensional and we usually consider it as the the ambient space of $\SP_n$.

\textcite{howard_application_2017} work with an equivalent robustness measure: the \emph{robustness of magic} (RoM) introduced in Eq.~(\ref{eq:l1_robustness}).
A straightforward calculation (c.f.\ Appendix~\ref{sec:l1proof}) shows that the two measures are related by a simple affine transformation:
\begin{equation}
 \label{eq:l1_tot_robustness}
 \RM(\rho) = 1 + 2 R(\rho).
\end{equation}
The robustness of magic provides a proper resource monotone with the following  properties:
\begin{proposition}[Properties of Robustness of Magic \cite{howard_application_2017}]
 The \emph{robustness of magic} has the following properties:
 \begin{enumerate}
  \item \emph{Faithfulness}: $\RM(\rho)=1$ iff $\rho\in\SP_n$
  \item \emph{Monotonicity}: $\RM(\mathcal{X}(\rho)) \leq \RM(\rho)$ for all stabiliser operations $\mathcal{X}$ with equality if $\mathcal{X}$ is unitary.
  \item \emph{Convexity}: $\RM((1-t)\rho+t\sigma)\leq (1-t)\RM(\rho)+t\,\RM(\sigma)$ for $0\leq t \leq 1$.
  \item \emph{Submultiplicativity}: $\RM(\rho\otimes\sigma) \leq \RM(\rho)\,\RM(\sigma)$.
 \end{enumerate}
\end{proposition}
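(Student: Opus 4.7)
The plan is to deduce all four properties directly from the $\ell_1$-definition \eqref{eq:l1_robustness}, using two structural facts: (i) any expansion $\rho = \sum_i x_i s_i$ is affine, i.e.\ $\sum_i x_i = \tr\rho = 1$, and (ii) stabiliser states are closed under stabiliser operations in the sense that every stabiliser channel maps each $s_i$ to a probabilistic mixture of stabiliser states, and the tensor product of two stabiliser states is again a stabiliser state.

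For \emph{faithfulness}, I would combine (i) with the elementary bound $\|x\|_1 \geq |\sum_i x_i| = 1$ to obtain $\RM(\rho) \geq 1$. Equality forces all $x_i$ to share a sign, which by (i) means $x_i \geq 0$; this is precisely the condition that $\rho$ lie in the convex hull $\SP_n$. Conversely, any convex decomposition of $\rho\in\SP_n$ into stabiliser states attains $\|x\|_1 = 1$.

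For \emph{monotonicity}, fix an optimal expansion $\rho = \sum_i x_i s_i$ and apply $\mathcal{X}$ termwise. By (ii), each $\mathcal{X}(s_i) = \sum_j p_{ij} s'_j$ is a convex mixture of stabiliser states, so $\mathcal{X}(\rho) = \sum_{i,j} x_i p_{ij} s'_j$ is a valid stabiliser expansion with $\ell_1$-norm bounded by $\sum_{i,j} |x_i| p_{ij} = \|x\|_1 = \RM(\rho)$. When $\mathcal{X}$ is a Clifford unitary, it acts as a bijection on $\stab(n)$, so running the same argument on $\mathcal{X}^{-1}$ yields the matching reverse inequality and hence equality.

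\emph{Convexity} and \emph{submultiplicativity} are both one-line consequences of combining optimal expansions: for convexity, $(1-t)\rho + t\sigma = \sum_i \bigl((1-t)x_i + ty_i\bigr) s_i$ and the $\ell_1$-triangle inequality close the argument; for submultiplicativity, $\rho\otimes\sigma = \sum_{i,j} x_i y_j\,(s_i\otimes s'_j)$ is a valid stabiliser expansion by (ii), with $\ell_1$-norm equal to $\|x\|_1\,\|y\|_1 = \RM(\rho)\,\RM(\sigma)$. The only step that is not essentially a definition-chase is the precise form of (ii) used in monotonicity, namely that every stabiliser operation (not just a Clifford unitary, but including ancilla preparation, Clifford evolution, and computational-basis measurement with classical post-processing) sends stabiliser states to probabilistic mixtures of stabiliser states. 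This is the main obstacle, and is handled by a standard Gottesman--Knill-type analysis of the constituent primitives composing a stabiliser channel.
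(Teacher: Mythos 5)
Your proof is correct. Note that the paper itself does not prove this proposition---it is stated with a citation to Howard and Campbell and no argument is given---and your derivation is the standard one: the affine constraint $\sum_i x_i = 1$ plus the triangle inequality for faithfulness and convexity, term-wise application of the channel for monotonicity (with the bijection-on-$\stab(n)$ argument giving equality in the Clifford case), and the product expansion $\rho\otimes\sigma = \sum_{i,j} x_i y_j\,(s_i\otimes s'_j)$ for submultiplicativity, exactly as in the cited reference.
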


\section{Exploiting stabiliser symmetries}
\label{sec:symmetries}

\subsection{Definition of the RoM problem.}
\label{sec:rom_problem}

The Robustness of Magic is defined as the following optimisation problem.
\begin{problem}[Robustness of Magic]
\label{prob:rom}
Let  $\stab(n)=\{s_1,\dots,s_N\}$ be the set of stabiliser states.
Given a state $\rho$, solve the following problem:
\begin{align*}
  \textbf{min} \quad & \|x\|_1 \quad \text{over }x \in \R^{N} \\
  \textbf{s.\,t.}\quad & \rho = \sum_{i=1}^N x_i s_i. 
 \end{align*}
\end{problem}
Using standard techniques, this problem can be reformulated as a linear program (LP) with $D^2+2N$ constraints and $2N$ variables \cite{boyd_2009}. Although the time complexity of LPs is linear in the product of number of constraints and variables, these numbers themselves grow super-exponentially with the number of qubits $n$. Concretely, $N=2^{O({n^2})}$ and $D^2=4^n$. Moreover, the LP needs access to an oracle which provides the $N$ stabiliser states. The implementation of such an oracle would necessarily have super-exponential time complexity itself. However, even if an efficient oracle were provided, the storage of the states would quickly exceed the memory capacity of any computer. In
practice, this limits the evaluation of the problem to $n \leq 5$ on normal computers and renders it infeasible, even on supercomputers, for $n \geq 8$.\footnote{Already the storage of $\stab(7)$ would require around 77 TiB of memory.  For $n=8$, this number increases to around 76 PiB which exceeds the state-of-the-art by a factor of 7.}

A standard method in the analysis of optimisation problems is dualising the problem. Clearly, by Slater’s condition, strong duality holds and thus the dual problem is an equivalent definition for the Robustness of Magic. In Appendix \ref{sec:dual}, we state the dual problem and derive a lower bound from a feasible solution. However, this bound matches the one that was already found in Ref.~\cite{howard_application_2017}.

\subsection{Symmetry reduction}
\label{sec:symmetry_red}

The complexity of the RoM problem can be significantly reduced by exploiting the symmetries of the problem, a procedure that we will call \emph{symmetry reduction} and is well-known in convex optimisation theory, see \eg~ \cite{bachoc_2012}. 
Here, we will explain the basic ideas and refer the interested reader to App.~\ref{sec:symmetry_red_app} for a mathematical review.

By \emph{stabiliser symmetries} $\Aut(\SP_n)$, we mean the linear symmetry group of the stabiliser polytope.
This is the group of linear maps $H_D \to H_D$ that leave $\SP_n$ invariant. 
These maps necessarily have to preserve the set of vertices, \ie~the set of stabiliser states $\stab(n)$. 
Clearly, the group of $n$-qubit Clifford unitaries $\Cl_n$ induces such  symmetry transformations by conjugation. 
Another obvious symmetry of the set of stabilisers is the \emph{transposition}:
\begin{equation*}
	s_i = \ket{\psi_i}\bra{\psi_i} \mapsto s_i^T = \mathcal{C}\ket{\psi_i}\bra{\psi_i}\mathcal{C},
\end{equation*}
where $\mathcal{C}$ is the (anti-unitary) operation of complex conjugation in the computational basis.
The group of unitary and anti-unitary operations generated by Clifford unitaries and complex conjugation is known as the \emph{extended Clifford group} $\ECl_n$ \cite{appleby_symmetric_2005}.
Our first result states that any stabiliser symmetry is induced by the action of an element of the extended Clifford group on the Hilbert space. This is a corollary of the more general Thm.~\ref{thm:design_syms} on symmetries of 3-designs and is proven in App.~\ref{sec:designs}.

\begin{restatable}{corollary}{stabsym}
\label{cor:stabsymmetries}
 The group of stabiliser symmetries $\Aut(\SP_n)$ is given by the adjoint representation of the extended Clifford group $\ECl_n$.
\end{restatable}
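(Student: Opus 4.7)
The plan is to derive this as a short corollary of Thm.~\ref{thm:design_syms} once the stabiliser states are identified as a complex projective 3-design. I would organise the argument in three short steps. First, I would establish the easy inclusion $\ECl_n \subseteq \Aut(\SP_n)$: both Clifford unitaries (acting by conjugation) and the anti-unitary complex conjugation $\mathcal{C}$ map stabiliser states to stabiliser states, so their induced real-linear maps on $H_D$ preserve the vertex set $\stab(n)$ and hence the polytope $\SP_n = \conv\stab(n)$. Second, since every pure state is an extreme point of any convex set of density matrices in which it lies, the vertices of $\SP_n$ are exactly the stabiliser states; any $\phi \in \Aut(\SP_n)$ must therefore permute $\stab(n)$.

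The key step is then to invoke Thm.~\ref{thm:design_syms}: the set $\stab(n)$ is well-known to be a complex projective 3-design, so the theorem applies and guarantees that $\phi$ is induced by a (anti-)unitary, i.e.\ $\phi(X)=UXU^*$ or $\phi(X)=UX^TU^*$ for some unitary $U$ on $\C^D$. Finally, since $\phi$ permutes the stabiliser states, the induced (anti-)unitary $U$ maps $\stab(n)$ to itself. By the standard characterisation of stabiliser-preserving (anti-)unitaries, this forces $U$ to lie in $\ECl_n$ (up to an irrelevant global phase which is lost under conjugation), completing the proof.

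The main work is hidden in the general theorem Thm.~\ref{thm:design_syms}. I would expect its proof to run as follows: if $\phi$ permutes the design points, then $\phi^{\otimes k}$ fixes the moment operator $M_k=\sum_i s_i^{\otimes k}$ for $k \leq 3$, which by the 3-design property is proportional to the projector onto the symmetric subspace of $(\C^D)^{\otimes k}$. The $k=2$ case, using $P_2^{\mathrm{sym}} = \tfrac12(\one + F)$ with swap $F$, already shows that $\phi$ is orthogonal with respect to the Hilbert--Schmidt inner product. The $k=3$ case is then strong enough to show that $\phi$ preserves the Jordan product on Hermitian operators, so that by a Kadison/Wigner-type classification $\phi$ is either a $*$-automorphism or an anti-$*$-automorphism of $M_D(\C)$ restricted to $H_D$, which gives the two (anti-)unitary forms above. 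I expect this Jordan-algebra step to be the main obstacle, and the only point at which the full 3-design property (rather than just the 2-design property) is actually used.
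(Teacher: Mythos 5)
Your argument is correct and follows essentially the same route as the paper: reduce to the fact that $\stab(n)$ is a complex projective 3-design, invoke Thm.~\ref{thm:design_syms} to conclude that any polytope symmetry is conjugation by a (anti-)unitary, and then use the known characterisation of stabiliser-preserving unitaries (handling the anti-unitary case by composing with $\mathcal{C}$) to land in $\ECl_n$. Your sketch of the underlying theorem --- orthogonality from the 2-design property via the swap trick, the Jordan-automorphism identity from the third moment, and the Herstein/Kadison-type classification --- also matches the paper's proof of Thm.~\ref{thm:design_syms}.
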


We emphasise that this is a non-trivial result which is in general wrong for the case of odd-dimensional qudits where it is possible to construct explicit counter-examples. This turns out to be related to the fact that stabiliser states fail to form 3-designs in odd dimensions \cite{zhu_multiqubit_2017,kueng_qubit_2015,webb_clifford_2016}. 

Note that anti-unitary symmetries in $\ECl_n$ act in the adjoint representation as $\Ad(C)\circ T$, where $C\in\Cl_n$ and $T$ is the transposition map. Hence, there are only \emph{global} antiunitary symmetries. Every tensor product of local antiunitary symmetries would involve a partial transposition and such a map could not preserve the set of \emph{entangled} stabiliser states.

Let $\G_\rho < \ECl_n$ be a (not necessarily maximal) subgroup fixing $\rho$.
The projection onto the subspace of $\G_\rho$-fixed points $V^{\G_\rho}\subset H_D$, see App.~\ref{sec:symmetry_red_app}, is given by
\begin{equation}
 \Pi_{\G_\rho}(\sigma) = \frac{1}{|\G_\rho|} \sum_{U\in \G_\rho} U\sigma U^\dagger.
\end{equation}
Note that $\Pi_{\G_\rho}$ is trace-preserving, hence the image of quantum states will again lie in the affine subspace $\tr^{-1}(\{1\}) \cap V^{G_\rho}$.

Recall that we can express the robustness of $\rho$ as a minimisation over $t\geq 0$ and (mixed) stabiliser states $\sigma^{\pm}\in\SP_n$ such that
\begin{equation}
 \rho = (1+t)\sigma^+ - t \sigma^-.
\end{equation}
Since $\Pi_{\G_\rho}$ preserves $\SP_n$, every such decomposition yields a decomposition in terms of $\G_\rho$-invariant mixed stabiliser states:
\begin{equation}
 \rho = \Pi_{\G_\rho}(\rho) =  (1+t)\,\Pi_{\G_\rho}(\sigma^+) - t \, \Pi_{\G_\rho}(\sigma^-),
\end{equation}
In particular, if the decomposition was optimal in the first place, the projected decomposition is also optimal. 

This shows that there is always $\G_\rho$-invariant optimal solution for the problem. Hence, instead of optimising over the whole set of stabiliser states, we only have to optimise over $\G_\rho$-invariant mixed stabiliser states $\overline{\SP}_n:=\SP_n\cap V^{\G_\rho}$. By Lemma \ref{lem:poly_proj} in App.~\ref{sec:symmetry_red_app}, these are exactly given by $\overline{\SP}_n=\Pi_{\G_\rho}(\SP_n)$ and can thus be computed by evaluating the projections $\overline{\stab}(n):=\Pi_{\G_\rho}(\stab(n))$. Since $\Pi_{\G_\rho}(U s U^\dagger) = \Pi_{\G_\rho}(s)$ for all $U\in \G_\rho$ and $s\in\stab(n)$, it is sufficient to compute the projections on representatives of $\stab(n)/\G_\rho$. Finally, we remark that a majority of the projected states $\overline{\stab}(n)$ are not extremal points of the projected polytope $\overline{\SP}_n$. Given an extremal subset $\mathcal{V}_n=\{v_1,\dots,v_M\}\subset \overline{\stab}(n)$, the symmetry-reduced version of Prob.~\ref{prob:rom} is given by substituting $\stab(n)\mapsto\mathcal{V}_n$ and $N\mapsto M$.

\subsection{Identification of symmetries}
\label{sec:id_symmetries}

The first step towards the explicit symmetry-reduced problem is to identify the group $\G_\rho$ that fixes the state $\rho$ of interest. 
Motivated by magic state distillation and the submultiplicativity problem, we are especially interested in the case $\rho=\ket\psi\bra\psi^{\otimes n}$ with $\ket\psi$ being a $m$-qubit state. A large part of the analysis does not depend on the choice of $\ket\psi$, so we keep the discussion as general as possible and specialise later to $m=1$ and particular choices of $\ket\psi$. 
The symmetries of $\ket\psi^{\otimes n}$ can be classified as follows:

\begin{itemize}
 \item[] \textbf{Permutation symmetry} 
 Clearly, $\ket\psi^{\otimes n}$ is invariant under permutations of the $n$ tensor factors.
 Such permutations also preserve the stabiliser polytope.
 Thus, the symmetric group $S_n$ is contained in the symmetry group of the problem.
 \item[] \textbf{Local symmetries} 
 By local symmetries of $\ket\psi^{\otimes n}$ we mean products of $m$-qubit stabiliser symmetries of $\ket\psi$. 
 By Corollary \ref{cor:stabsymmetries}, this class contains only local Clifford operations.
 Let $(\Cl_m)_\psi$ be the stabiliser of $\ket\psi$ within the $m$-qubit Clifford group $\Cl_m$, then the local symmetry group is given by $(\Cl_m)_\psi^{\otimes n}$. 
 \item[] \textbf{Global symmetries} 
 We refer to all other symmetries as global. 
 The global symmetry group contains e.g.\ the transposition $\rho \mapsto \rho^T$.
\end{itemize}
The maximal symmetry group for $\rho=\ket\psi\bra\psi^{\otimes n}$ is given by the subgroup $\Cl_\rho$ that stabilises $\rho$ within $\ECl_n$. Here, we focus on the subgroup of $\Cl_\rho$ which is given by local symmetries and permutations:
\begin{equation}
 \G_{\rho} :=  (\Cl_m)^{\otimes n}_\psi \rtimes S_n.
\end{equation}
The following analysis suggests that for our choices of $\rho$, $\G_\rho$ actually coincides with $\Cl_\rho$, meaning that there are no further global symmetries. However, since the study of symmetries in $\ECl_n$ can be quite involved \cite{gross_schur-weyl_2017}, we can not exclude the possibility that we missed some of the symmetries. 

For the rest of this paper, we will consider the case $m=1$. Note that $\Cl_1$ acts by rotating about the symmetry axes of the stabiliser polytope. It is easy to see that states $\ket\psi$ with non-trivial stabilisers $(\Cl_1)_\psi$ fall into three classes: Stabiliser states (with trivial robustness), and magic states that lie on the Clifford orbit of $\ket H$ or $\ket T$.
Since the RoM is Clifford-invariant, we can pick the following states for concreteness:
\begin{equation}
 \ket{H}\bra{H} = \frac{1}{2}\left( \one + \frac{1}{\sqrt{2}} ( X + Y ) \right), \quad \ket{T}\bra{T} = \frac{1}{2}\left( \one + \frac{1}{\sqrt{3}} ( X + Y + Z ) \right).
\end{equation}
Figure \ref{fig:stab_polytope} shows the two states and their stabiliser symmetries. 
The respective unitary symmetries correspond to a two-fold rotation symmetry about the $\ket H$-axis and three-fold rotation symmetry about the $\ket T$-axis.
In terms of Clifford operations, these stabiliser groups are represented by
\begin{equation}
 (\Cl_1)_H = \langle SX \rangle, \qquad (\Cl_1)_T = \langle SH \rangle.
\end{equation}
Recall that these should be understood in the adjoint representation and thus the order of these groups is indeed $|(\Cl_1)_H|=2$ and $|(\Cl_1)_T|=3$. 

Furthermore, there are antiunitary stabiliser symmetries 
\begin{align}
 \mathcal A:\quad & X \mapsto X,  & \mathcal B:\quad & X \mapsto Y,     & \mathcal C:\quad & X \mapsto Z, \\
         & Y \mapsto Y,  &          & Y \mapsto X, &  & Y \mapsto Y, \\
         & Z \mapsto -Z, &          & Z \mapsto Z, &  & Z \mapsto X,
\end{align}
such that $\ket H$ is fixed by $\mathcal A$ and $\mathcal B$ and $\ket T$ is fixed by $\mathcal B$ and $\mathcal C$. Recall that these can only contribute global symmetries such as $\mathcal A^{\otimes n}$. However, the common $+1$ eigenspace of $\mathcal A^{\otimes n}$ and $\mathcal B^{\otimes n}$ coincides with that of $SX^{\otimes n}$ and thus adding these symmetries to the symmetry group will not further reduce the invariant subspace. A similar argument holds also for the antiunitary symmetries of $\ket T$. 

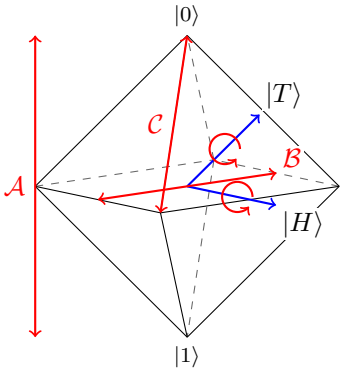
\begin{figure}[h]
\begin{minipage}[c]{0.4\linewidth}
 \begin{tikzpicture}[z=-5,scale=2]
  \coordinate (X1) at (1,0,0);
  \coordinate (X2) at (-1,0,0);
  \coordinate (Z1) at (0,1,0);
  \coordinate (Z2) at (0,-1,0);
  \coordinate (Y1) at (0,0,1);
  \coordinate (Y2) at (0,0,-1);
  \coordinate (H) at (0.71,0,0.71);
  \coordinate (T) at (0.5774,0.5774,0.5774);
  
  \draw [dashed,blue!50!white] (0,0,0) -- (0.5,0,0.5);

  \draw (X1) -- (Y1);
  \draw (X2) -- (Y1);

  \begin{scope}[dashed,opacity=0.6]
  \draw (Y2) -- (Z1);
  \draw (Y2) -- (Z2);
  \draw (X2) -- (Y2);
  \draw (X1) -- (Y2);
  \end{scope}

  \draw (X1) -- (Z1);
  \draw (X2) -- (Z1);
  \draw (Y1) -- (Z1);

  \draw (X2) -- (Z2);
  \draw (Y1) -- (Z2);
  \draw (X1) -- (Z2);
  
  \draw [->,blue,thick] (0,0,0) -- (H);
  \node [below right=1pt,inner sep=1pt,fill=white] at (H) {$\ket H$};
  \draw [->,blue,thick] (0,0,0) -- (T);
  \node [above right=1pt,inner sep=1pt,fill=white] at (T) {$\ket T$};
  \tdplotdrawarc[->,thick,red]{(0.4,0,0.4)}{0.1}{0}{320}{}{}
  \tdplotdrawarc[->,thick,red]{(0.3,0.3,0.3)}{0.1}{0}{320}{}{}
  
  \node [above] at (Z1) {\footnotesize $\ket{0}$};
  \node [below] at (Z2) {\footnotesize $\ket{1}$};

  \draw [<->,red,thick] (-0.5,0,0.5) --  (0.5,0,-0.5) node[above right=1pt,inner sep=1pt,fill=white] {$\mathcal B$};
  \draw [<->,red,thick] (-1,1,0) -- node[midway,left] {$\mathcal A$} (-1,-1,0);
  \draw [<->,red,thick] (Z1) -- node[midway,left] {$\mathcal C$} (Y1);
  
  \end{tikzpicture}
\end{minipage} 
\begin{minipage}[c]{0.5\linewidth}
\caption{Stabiliser symmetries of the magic states $\ket H$ and $\ket T$ and the octahedron of stabiliser states. $\ket H$ is fixed by the antiunitary reflections $\mathcal A$, $\mathcal B$ and unitary $\pi$ rotations around its axis. $\ket T$ is fixed by the antiunitary reflections $\mathcal B$, $\mathcal C$ and unitary $\pi/3$ rotations around its axis.}
\label{fig:stab_polytope}
\end{minipage} 
\end{figure}

Hence, the considered symmetry groups are as follows:
\begin{equation}
 \G_H := \langle SX \rangle^{\otimes n} \rtimes S_n , \qquad \G_T := \langle SH \rangle^{\otimes n} \rtimes S_n.
\end{equation}

Since the symmetric group $S_n$ is always a subgroup of the symmetry group, the  fixed point subspace $V^{\G_\rho}$ is always a subspace of the totally symmetric subspace $\Sym(H_D)$. Let us first consider a generic state $\rho$ with no further symmetries. Then, $V^{\G_\rho}$ coincides with $\Sym(H_D)$. Thus, the trace 1 subspace has dimension $\frac 1 6 (n+3)(n+2)(n+1)-1$ and is thus exponentially smaller than the full space. A basis for the symmetric subspace is given by a Fock-style ``occupation number basis'' constructed from the Pauli basis $\one, X, Y, Z$ as follows
\begin{multline}
 \label{eq:number_basis}
 N_{i,j,k} = \Sym\left( X^{\otimes i} \otimes Y^{\otimes j} \otimes Z^{\otimes k} \otimes \one^{\otimes(n-i-j-k)} \right), \\
 \text{for } i,j,k\in\{0,\dots,n\} \quad \text{such that} \quad i+j+k \leq n.
\end{multline}
Here, the symmetrisation operator $\Sym\equiv\Pi_{S_n}$ is given by averaging over all permutations of the tensor factors. The trace one subspace can be obtained as the span of all basis elements with the $N_{0,0,0}=\one$ component set to $1/D$. 

Due to linearity, the symmetrisation map is completely determined by its action on the Pauli basis. Given a Pauli operator $g$, there is a permutation $\pi\in S_n$ such that $\pi(g)=X^{\otimes i} \otimes Y^{\otimes j} \otimes Z^{\otimes k} \otimes \one^{\otimes(n-i-j-k)}$. The appearing exponents $i=\wt_X(g)$, $j=\wt_Y(g)$ and $k=\wt_Z(g)$ are exactly the \emph{weights} of $g$, \ie~the number of $X$, $Y$, $Z$ factors, respectively. By the invariance of $\Sym$ under permutations, we thus get $\Sym(g)=\Sym(\pi(g))=N_{i,j,k}$. We define \emph{weight indicator functions},
\begin{equation}
 \label{eq:wt_enumerator}
 A_{i,j,k}(g) := \begin{cases}
                    1   &   \text{if } \wt_X = i, \, \wt_Y = j, \, \wt_Z = k, \\
                    0   &   \text{else},
                   \end{cases}
\end{equation}
such that we can write the $S_n$-projection of a Pauli operator $g$ as 
\begin{equation}
 \Sym(g) = \sum_{i,j,k} A_{i,j,k}(g) N_{i,j,k}.
\end{equation}
By extending the functions $A_{i,j,k}$ linearly to $H_D$, we thus get exactly the coefficients of the projection in the number basis.

Let $S<\Pa_n$ be a stabiliser group stabilising a state $s$. The projection of this state is
\begin{equation}
\begin{split}
 \Sym(s) &= \frac{1}{2^n} \sum_{g\in S} \sgn(g) \Sym(g) \\  
 &= \frac{1}{2^n} \sum_{i,j,k} \left( \sum_{g\in S} \sgn(g) A_{i,j,k}(g) \right) N_{i,j,k} \\
&= \frac{1}{2^n} \sum_{i,j,k} A^\pm_{i,j,k}(S)\, N_{i,j,k}.
\end{split}
\end{equation}
The $A^\pm_{i,j,k}(S)$ are the coefficients of the \emph{complete signed quantum weight enumerators} of the stabiliser code $S$. 
Recall that for a classical code $C\subset \F_d^n$, the \emph{complete weight enumerator} is the degree-$n$ polynomial in $d$ variables given by
\begin{equation*}
	\sum_{c\in C} x_0^{\wt_0(c)} \dots x_{d-1}^{\wt_{d-1}(c)}
	=:
	\sum_{i_1, \dots, i_{d-1}} 
	A_{i_1, \dots, i_{d-1}}(C)\,
	x_0^{n-(i_1+\dots+i_{d-1})} x_1^{i_1} \dots x_{d-1}^{i_{d-1}},
\end{equation*}
where $\wt_i(c)$ gives the number of times $i\in\F_d$ appears in $c$ \cite{macwilliams_1977}.
The analogy should be clear.
\emph{Unsigned} weight enumerators for quantum codes have been studied since the early days of quantum coding theory \cite[Ch.~13]{nebe_self-dual_2006}.
Much less seems to be known about their signed counterparts, with Refs.~\cite{rall2017signed,rall_fractal_2017} being the only related references we are aware of. There it is shown that, as their classical analogues, signed quantum weight enumerators are NP-hard to compute.

Finally, we want to return to the cases $\ket\psi=\ket H$ and $\ket\psi=\ket T$ and discuss the invariant subspaces $V^{H,T}:=V^{\G_{H,T}}$ for these states. Let us rotate the Pauli basis such that the first basis vector corresponds to the Bloch representation of $\ket H$ and $\ket T$, respectively:
\begin{align}
  E_1^H &:= \frac{1}{\sqrt 2}\left( X + Y \right), & E_1^T &:= \frac{1}{\sqrt 3}\left( X + Y + Z \right), \\
  E_2^H &:= \frac{1}{\sqrt 2}\left( X - Y \right), & E_2^T &:= \frac{1}{\sqrt 6}\left( X - 2 Y + Z \right) \\
  E_3^H &:= Z, & E_3^T &:= \frac{1}{\sqrt 2}\left( X - Z \right).
\end{align}
Note that this choice of basis is such that the orthogonal decompositions of state space $H_2 = \langle \one \rangle \oplus \langle E_1^H \rangle \oplus \langle E_2^H \rangle \oplus\langle E_3^H \rangle = \langle \one \rangle \oplus  \langle E_1^T \rangle \oplus \langle E_2^T,E_3^T \rangle$ correspond to (real) irreps of the respective Clifford stabilisers $(\Cl_1)_{H,T}$, as can be seen from the matrix representation of the generators in the rotated basis:
\begin{equation}
 SX \simeq \begin{pmatrix}
            1 & 0 & 0 & 0 \\
            0 & 1 & 0 & 0 \\
            0 & 0 & -1 & 0 \\
            0 & 0 & 0 & -1
           \end{pmatrix},\qquad
 SH \simeq \frac 1 2
           \begin{pmatrix}
            2 & 0 & 0 & 0 \\
            0 & 2 & 0 & 0 \\
            0 & 0 & -1 & \sqrt 3 \\
            0 & 0 & -\sqrt 3 & -1
           \end{pmatrix}.
\end{equation}
In general, a basis for the trivial representation of $(\Cl_1)_{H,T}^{\otimes n}$ in the $n$-qubit state space $H_{2^n}$ is given by $\mathcal{B}^{H,T}:=\{\one, E_1^{H,T}\}^{\otimes n}$. To construct a basis for the full invariant subspace, we have to symmetrise $\mathcal{B}^{H,T}$ resulting in $\{N^{H,T}_{i,0,0}=:N^{H,T}_i\,|\,i=0,\dots,n\}$. Here, $N^{H,T}_{i,j,k}$  is the occupation number basis associated to the rotated basis $\{\one,E^{H,T}_1,E^{H,T}_2,E^{H,T}_3\}$ and is constructed analogously to before.

In general, the components of stabiliser states in the rotated bases can be written in terms of weight enumerators by computing the induced basis transformations on $\Sym(H_D)$ from $N_{i,j,k}$ to $N^{H,T}_{i,j,k}$. However, we are only interested in the projection onto $j=k=0$ which simplifies this computation. First, let us rewrite the $n$-qubit Pauli operators in the $H$-basis. Note that every operator with non-vanishing $Z$-weight is already in the orthocomplement of $V^H$.
\begin{equation}
\begin{split}
 X^{\otimes i} \otimes Y^{\otimes j} &= \left(\frac{1}{\sqrt{2}}\right)^{i+j} \left(E_1^H+E_2^H\right)^{\otimes i} \otimes \left(E_1^H-E_2^H\right)^{\otimes j}\\
 &= \left(\frac{1}{\sqrt{2}}\right)^{i+j} \left(E_1^H\right)^{\otimes (i+j)} + \text{orth. terms}
 \end{split}
\end{equation}
Here, we left out possible identity factors and all orthogonal terms on the RHS, \ie~those containing $E_2^H$. This result implies that we can write the projection of a stabiliser state $s$ as
\begin{equation}
 \label{eq:H_proj}
 \Pi_H(s) = \frac{1}{2^n} \sum_{i=0}^{n} \left(\sum_{j=0}^i A^\pm_{i-j,j,0}(S)\right) \frac{N^H_i}{2^{i/2}} =: \frac{1}{2^n} \sum_{i=0}^{n} B^\pm_i(S) \frac{ N^H_i}{2^{i/2}}.
\end{equation}
We call the numbers $B^\pm_i(S)$ the \emph{partial signed quantum weight enumerators} of $S$. The analysis works the same way for the $T$-projection:
\begin{equation}
\begin{split}
 X^{\otimes i} \otimes Y^{\otimes j}\otimes Z^{\otimes k} &= \left(\frac{E_1^T}{\sqrt{3}}+\frac{E_2^T}{\sqrt{6}}+\frac{E_3^T}{\sqrt{2}}\right)^{\otimes i} \otimes \left(\frac{E_1^T}{\sqrt{3}}-\sqrt{\frac{2}{3}}E_2^T\right)^{\otimes j}\otimes \\
 & \qquad\qquad \otimes \left(\frac{E_1^T}{\sqrt{3}}+\frac{E_2^T}{\sqrt{6}}-\frac{E_3^T}{\sqrt{2}}\right)^{\otimes k}\\
 &= \left(\frac{1}{\sqrt{3}}\right)^{i+j+k} \left(E_1^T\right)^{\otimes (i+j+k)} + \text{orth. terms}
 \end{split}
\end{equation}
In this case, the $T$-projection of a stabiliser state $s$ with stabiliser group $S$ involves \emph{total signed quantum weight enumerators} $C^\pm_i(S)$ as follows:
\begin{equation}
 \label{eq:T_proj}
 \Pi_T(s) = \frac{1}{2^n} \sum_{i=0}^{n} \left(\sum_{j=0}^i\sum_{k=0}^{i-j} A^\pm_{i-j-k,j,k}(S)\right) \frac{N^T_i}{3^{i/2}} =: \frac{1}{2^n} \sum_{i=0}^{n} C^\pm_i(S) \frac{N^T_i}{3^{i/2}}.
\end{equation}

Note that all projections $\Sym\equiv\Pi_{S_n}$, $\Pi_H$ and $\Pi_T$ can be computed from the complete signed weight enumerators of the stabiliser codes which themselves are functions of the weight distributions. For numerical purposes, it is convenient to absorb all appearing factors in the bases such that the coefficients of stabiliser states are given by the integer weight enumerators.

Finally, we want to give expressions for the states $\ket{H}^{\otimes n}$ and $\ket{T}^{\otimes n}$ in the respective bases:
\begin{align}
 \ket{H}\bra{H}^{\otimes n} &= \frac{1}{2^n}\left( \one + E_1^H \right)^{\otimes n} = \frac{1}{2^n}\sum_{i=0}^n \binom{n}{i} N^{H}_i, \\
 \ket{T}\bra{T}^{\otimes n} &= \frac{1}{2^n}\left( \one + E_1^T \right)^{\otimes n} = \frac{1}{2^n}\sum_{i=0}^n \binom{n}{i} N^{T}_i.
\end{align}

In general, we are not aware of any method which can predict whether the projection of a stabiliser state will be extremal within the projected polytope. However, the following lemma gives a necessary condition on the extremality of products $s \otimes s'$ of stabiliser states which will be useful later.

\begin{lemma}[Projection of product states]
\label{lem:proj_prod_states}
The following is true for $\Pi=\Sym,\Pi_H,\Pi_T$: If the projection $\Pi(s)$ of an arbitrary stabiliser state $s$ is non-extremal, so is $\Pi(s \otimes s')$ for any other stabiliser state $s'$.
\end{lemma}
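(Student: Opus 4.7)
The plan is to lift non-extremality from the $k$-qubit projection to the $n$-qubit one by factorising $\Pi$ into local averaging followed by $S_n$-symmetrisation and observing that both interact nicely with tensoring. First, unpacking the hypothesis: non-extremality of $\Pi(s)$ in $\overline{\SP}_k:=\Pi(\SP_k)$ gives stabiliser states $s_1,\dots,s_m$ and weights $\lambda_i>0$ with $\sum_i\lambda_i=1$ such that $\Pi(s)=\sum_i\lambda_i\Pi(s_i)$ and at least one $\Pi(s_i)\neq\Pi(s)$. By linearity this is equivalent to $\Delta:=s-\sum_i\lambda_i s_i$ lying in $\ker\Pi$. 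The goal is therefore to show that $\Pi_n(\Delta\otimes s')=0$ and that the resulting identity $\Pi_n(s\otimes s')=\sum_i\lambda_i\Pi_n(s_i\otimes s')$ is a non-trivial convex combination.

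Second, I would factorise $\Pi_n=\Pi_{S_n}\circ\Pi_{\mathrm{loc}}^{(n)}$, where $\Pi_{\mathrm{loc}}^{(n)}$ averages over the local Clifford stabiliser group $(\Cl_1)_\psi^{\otimes n}$. Since this group acts tensor-factor-wise, $\Pi_{\mathrm{loc}}^{(n)}(x\otimes y)=\Pi_{\mathrm{loc}}^{(k)}(x)\otimes\Pi_{\mathrm{loc}}^{(n-k)}(y)$, and the image lies in the tensor product of the local trivial-representation spaces: $\{\one,E_1^{H,T}\}^{\otimes(\cdot)}$ for $\Pi=\Pi_H,\Pi_T$, and $\{\one,X,Y,Z\}^{\otimes(\cdot)}$ for $\Pi=\Pi_{S_n}$. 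On such tensor products $\Pi_{S_n}$ depends only on the combined weight multi-index, so expanding $\Pi_{\mathrm{loc}}^{(k)}(x)=\sum_I a_I E_I$ and $\Pi_{\mathrm{loc}}^{(n-k)}(y)=\sum_J b_J E_J$, the coefficient of $N_i^{H,T}$ (respectively $N_{i,j,k}$) in $\Pi_n(x\otimes y)$ equals the convolution
\begin{equation*}
\sum_{w'+w''=w}\Bigl(\sum_{\wt(I)=w'}a_I\Bigr)\Bigl(\sum_{\wt(J)=w''}b_J\Bigr),
\end{equation*}
where $w$ denotes the relevant weight index.

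Third, apply this to $\Delta\otimes s'$. The hypothesis $\Pi_k(\Delta)=0$ is precisely the statement that every weight-summed coefficient $\sum_{\wt(I)=w'}a_I$ of $\Pi_{\mathrm{loc}}^{(k)}(\Delta)$ vanishes, so the convolution above is identically zero and $\Pi_n(\Delta\otimes s')=0$. This yields $\Pi_n(s\otimes s')=\sum_i\lambda_i\Pi_n(s_i\otimes s')$. To confirm the decomposition is non-trivial I would invoke a simple injectivity argument: the weight-summed coefficient sequence of $s'$ is nonzero (already the identity component $\one^{\otimes(n-k)}/2^{n-k}$ contributes), so, viewed as a polynomial in a formal weight-counting variable, it is a nonzero polynomial and convolution against it is injective on coefficient sequences. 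Hence $\Pi_k(s_i)\neq\Pi_k(s)$ implies $\Pi_n(s_i\otimes s')\neq\Pi_n(s\otimes s')$, giving non-extremality of $\Pi_n(s\otimes s')$.

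The main technical obstacle is the bookkeeping in the second step: making precise that $\Pi_{S_n}$ applied to a local-invariant basis element $E_I\otimes E_J$ produces a definite scalar multiple of $N^{H,T}_{|I|+|J|}$ (or $N_{i_1+j_1,i_2+j_2,i_3+j_3}$), with a normalisation that only depends on the combined weight. This is a counting argument about $S_n$-orbits of multi-indices, but once it is in place the convolutional structure of $\Pi_n$ on product states and the injectivity step are essentially immediate from linearity.
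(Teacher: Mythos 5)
Your argument is correct and rests on the same mechanism as the paper's proof: the signed weight-enumerator data of a stabiliser code is multiplicative, in the sense of a convolution, under tensor products. The paper establishes this directly for the complete enumerators $A^\pm_{i,j,k}$ in Eq.~\eqref{eq:wt_prod_states} and then substitutes the convex combination for $\Sym(s)$ into that convolution, remarking that $B^\pm$ and $C^\pm$ are linear in the $A^\pm$; your version reaches the same convolution identity by factorising $\Pi_{\G_\rho}=\Pi_{S_n}\circ\Pi_{\mathrm{loc}}$ (valid since $\G_\rho$ is a semidirect product, cf.\ Prop.~\ref{prop:G_twirl}) and tracking coefficients in the local invariant basis --- a slightly longer route to the same place, and your rephrasing of the hypothesis as $\Delta=s-\sum_i\lambda_i s_i\in\ker\Pi$ is just a linear repackaging. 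Where you genuinely go beyond the paper is the final step: the paper concludes non-extremality from $A^\pm_{i,j,k}(S\times S')=\sum_l\lambda_l A^\pm_{i,j,k}(S_l\times S')$ without verifying that this convex combination is non-trivial, i.e.\ that $\Pi(s_l)\neq\Pi(s)$ actually forces $\Pi(s_l\otimes s')\neq\Pi(s\otimes s')$. Your observation that the weight-enumerator sequence of $s'$ has non-vanishing constant term (since $A^\pm_{0,0,0}(S')=1$), so that convolution against it is multiplication by a nonzero element of an integral domain and hence injective, closes this gap cleanly. The ``technical obstacle'' you flag --- that $\Pi_{S_n}$ sends a local-invariant basis element to a fixed multiple of the occupation-number basis element determined by its combined weight --- is immediate from the definition of $N_{i,j,k}$ as a symmetrisation, so nothing is missing.
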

\begin{proof}
We prove the statement by showing it on the level of the complete signed weight enumerators $A^\pm_{i,j,k}$. This proves the claim directly for $\Pi=\Sym$ and the other cases follow since the partial and total signed weight enumerators are linear functions of the complete ones. 

Note that the Pauli $X$, $Y$, $Z$ weights are additive under tensor products, \eg~$\wt_X(g\otimes g') = \wt_X(g) + \wt_X(g')$. This implies that we can write the indicator function as $A_{i,j,k}(g\otimes g')=A_{i',j',k'}(g)A_{i-i',j-j',k-k'}(g')$ for $i'$, $j'$, $k'$ being the weights of $g$. However, since $A_{i',j',k'}(g)$ is zero if $i'$, $j'$, $k'$ are \emph{not} the weights of $g$, we can instead sum over all possible decompositions on the right hand side.
Hence, for any two stabiliser codes $S,S'$ we get
\begin{equation}
 \begin{split}
 \label{eq:wt_prod_states}
  A^\pm_{i,j,k}(S\times S') &= \sum_{g\in S,g'\in S'} \sgn(g\otimes g') A_{i,j,k}(g\otimes g') \\ 
   &= \sum_{i'=0}^{i}\sum_{j'=0}^{j}\sum_{k'=0}^{k}  \sum_{g\in S}\sum_{g'\in S'} \sgn(g)\sgn(g') A_{i',j',k'}(g) A_{i-i',j-j',k-k'}(g') \\
   &= \sum_{i'=0}^{i}\sum_{j'=0}^{j}\sum_{k'=0}^{k} A^\pm_{i',j',k'}(S) A^\pm_{i-i',j-j',k-k'}(S').
 \end{split}
\end{equation}
Suppose $S$ is the stabiliser of a state $s$ and $\Sym(s)$ can be written as convex combination,
 \begin{equation}
  \Sym(s) = \sum_{l=1}^M \lambda_l \Sym(s_l) \quad \Leftrightarrow \quad A^\pm_{i,j,k}(S) = \sum_{l=1}^M \lambda_l A^\pm_{i,j,k}(S_l),
 \end{equation}
 with stabiliser states $s_l$, stabilised by the groups $S_l$. Let $s'$ be stabilised by $S'$, then we find by Eq.~\eqref{eq:wt_prod_states},
 \begin{equation}
  A^\pm_{i,j,k}(S \times S') = \sum_{i',j',k'} \sum_{l=1}^M \lambda_l A^\pm_{i',j',k'}(S_l) A^\pm_{i-i',j-j',k-k'}(S')
   = \sum_{l=1}^M \lambda_l A^\pm_{i,j,k}(S_l \times S'),
 \end{equation}
 and hence the projection of the product state $s\otimes s'$ is non-extremal.
\end{proof}

Note that Eq.~\eqref{eq:wt_prod_states} allows us to compute the projection of products $\Pi(s\otimes s')$ from $\Pi(s)$ and $\Pi(s')$ via the signed quantum weight enumerators using $\poly(n)$ operations. This is an important improvement over computing $\Pi(s)$ for a general (fully entangled) stabiliser state $s$ which requires $O(2^n)$ operations.

\subsection{Representatives of inequivalent stabiliser states}
\label{sec:representatives}

Computing the projected polytope involves the computation of the signed quantum weight enumerators for all stabiliser states. However, from the previous discussions we know that we can restrict the computations to the orbits $\stab(n)/\G_\rho$ with respect to the symmetry group $\G_\rho.$ In this section we will construct representatives for these orbits.

Our approach is based on a subset of the set of stabiliser states, the so-called \emph{graph states} $\graph(n)$. For every simple, \ie~self-loop free, graph $G$ of $n$ vertices, there is a state vector $\ket G$ that is stabilised by operators of the form
\begin{equation}
\label{:eq:graph_generators}
 K_j = X_j \prod_{k=1}^n Z_k^{\theta_{jk}}, \qquad (j=1,\dots,n),
\end{equation}
where $X_j,Z_j$ are the Pauli operators on the $j$-th qubit and $\theta$ is the adjacency matrix of the graph $G$. Graph states play a fundamental role in the studies of stabiliser states since \textcite{schlingemann_stabilizer_2001} proved that every stabiliser state is equivalent to a graph state under the action of the local Clifford group $L\Cl_n = \Cl_1^{\otimes n}$:
\begin{equation}
 \stab(n) = L\Cl_n \cdot \graph(n).
\end{equation}
This result can be used to label every stabiliser state vector $\ket{C,G}$ by a local Clifford unitary $C\in L\Cl_n$ and a graph state $\ket{G}\in\graph(n)$ such that $\ket{C,G}= C\ket{G}$. However, $L\Cl_n$-equivalent graph states generate the same $L\Cl_n$-orbit and are equally well suited to represent a stabiliser state. \textcite{nest_graphical_2004,hein_multiparty_2004} discovered that that two graph states are $L\Cl_n$-equivalent if and only if the underlying graphs are related  by a graph theoretic transformation called \emph{local complementation} (LC). Thus, it is sufficient to consider graphs up to local complementation. 

Furthermore, the symmetry group $\G_\rho$ induces additional equivalence relations on the graph state representation. Let us again begin the discussion with the case of a generic state with $S_n$-symmetry. This already allows us to restrict the representation to non-isomorphic graphs, \ie~graphs up to permutation of their vertices, since for any graph state $\ket{G}$ and a permuted version $\ket{\pi G} \equiv \pi\ket{G}$ the $L\Cl_n$-orbits are isomorphic: $\pi C\ket{G} = C_\pi \ket{\pi G}$ with the permuted local Clifford unitary $C_\pi = \pi C \pi^\dagger \in L\Cl_n$. Moreover, it is straightforward to show that the composition of graph isomorphism and local complementation is symmetric and thus a equivalence relation $\sim_{LC,S_n}$ on graphs whose equivalence classes are isomorphic to  $\graph(n)/\sim_{L\Cl_n,S_n}$. These equivalence classes have been studied in the context of graph codes and entanglement in graph states \cite{danielsen_classification_2006,hein_entanglement_2006} and were enumerated by \textcite{danielsen_website}. However, different local Clifford unitaries can still result in equivalent states. To see this, pick some symmetry $\pi\in \Aut(G)$ of the graph, \ie~$\pi G = G$, then the actions of $C$ and $C_\pi$ yield isomorphic states. Hence, it is enough to act with $L\Cl_n/\Aut(G)$ on the graph state $\ket G$.

For the computation of the $L\Cl_n$-orbits it is enough to consider $L\Cl_n/\Pa_n$, since Pauli operators will only change the possible $2^n$ signs of the final generators which are better added by hand. It is well known that the quotient $\Cl_n/\Pa_n$ is isomorphic to the binary symplectic group $\Sp(2n,\Z_2)$ which is the foundation of the phase space formalism. We make use of this formalism to compute the $L\Cl_n$-orbits of graph states $G$ by evaluating the orbits of the local symplectic group $\Sp(2,\Z_2)^{\times n}$ up to the stabiliser of $G$ and $\Aut(G)$.

The additional symmetries in the case of the $\ket H$ and $\ket T$ state can be taken into account by restricting the allowed symplectic transformations using the symplectic maps $\hat S$ and $\hat{S}\hat{H}$ induced by the generators $SX$ and $SH$, respectively. The corresponding cosets are given by the representatives $\Sp(2,\Z_2)/\langle \hat S \rangle \simeq \{\one, \hat H, \hat H \hat S\}$ and $\Sp(2,\Z_2)/\langle \hat{S}\hat{H} \rangle \simeq \{ \one, \hat S \}$, respectively.

However, the described generation procedure will quickly become computationally expensive.  Moreover, most of the projected stabiliser states are non-extremal points for the projected polytope and thus redundant. Unfortunately, there is no simple way of deciding whether a state will be extremal after projection or not. However, Lemma \ref{lem:proj_prod_states} states at least a criterion for product states which allows us to restrict to projecting only \emph{fully entangled stabiliser states}. To this end, we only have to iterate over \emph{connected} graph representatives with respect to $\sim_{LC,S_n}$ and compute the projections of product states directly from lower-dimensional vertices using the appropriate version of Eq.~\eqref{eq:wt_prod_states}.

\section{Computing the robustness of magic}
\label{sec:computations}

Using the enumeration procedure of the last section, we generated the set of $H$- and $T$-projections of \emph{fully entangled} stabiliser states $\overline{\stab}^{H/T}_c(n)=\Pi_{H/T}(\stab_c(n))$ and the set of projected product states from lower-dimensional vertices. In an additional step, we removed non-extremal points from the set of projected states, resulting in vertex sets $\mathcal{V}^{H/T}_n$ of the projected stabiliser polytopes for $n\leq 9$ and $n\leq 10$, respectively. As described in the last section, we are labelling the vertices by certain stabiliser representatives. To this end, we use a notation in terms of ``decorated graph states'' compatible with Refs.~\cite{schlingemann_stabilizer_2001,elliott_graphical_2008}: A graph is decorated by symbols which indicate the action of local Clifford operations on the respective graph state. Nodes with signs indicate a sign change of the respective stabiliser generator, or alternatively, the action of $Z$ on the respective qubit prior the any other gates. A hollow node in the graph denotes a Hadamard gate acting on the respective qubit and self-loops correspond to the action of phase gates (prior to possible Hadamard gates). Figure \ref{fig:Hvertices} shows the vertex sets $\mathcal{V}^H_n$ for $n=1,2,3$. Since the dimension of the polytope is exactly $n$, it can be easily visualised for $n\leq 3$, see also Fig.~\ref{fig:proj_polytope} in Sec.~\ref{sec:rom}. 

The database of vertices and the program code can be found on the arXiv \cite{heinrich_robustness_2018}. For a discussion of the algorithmic details see App.~\ref{sec:numerics}.

\begin{figure}
\newlength{\myheight}
\setlength{\myheight}{1.8cm}

\centering

\subcaptionbox{$n=1$}{%
  \includegraphics[height=0.5\myheight]{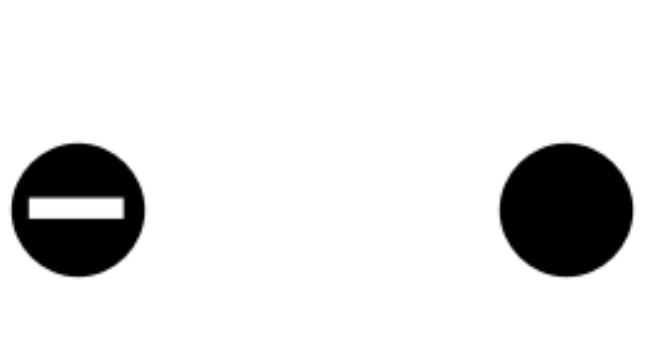}%
}\hspace{4em}
\subcaptionbox{$n=2$}{%
  \includegraphics[height=\myheight]{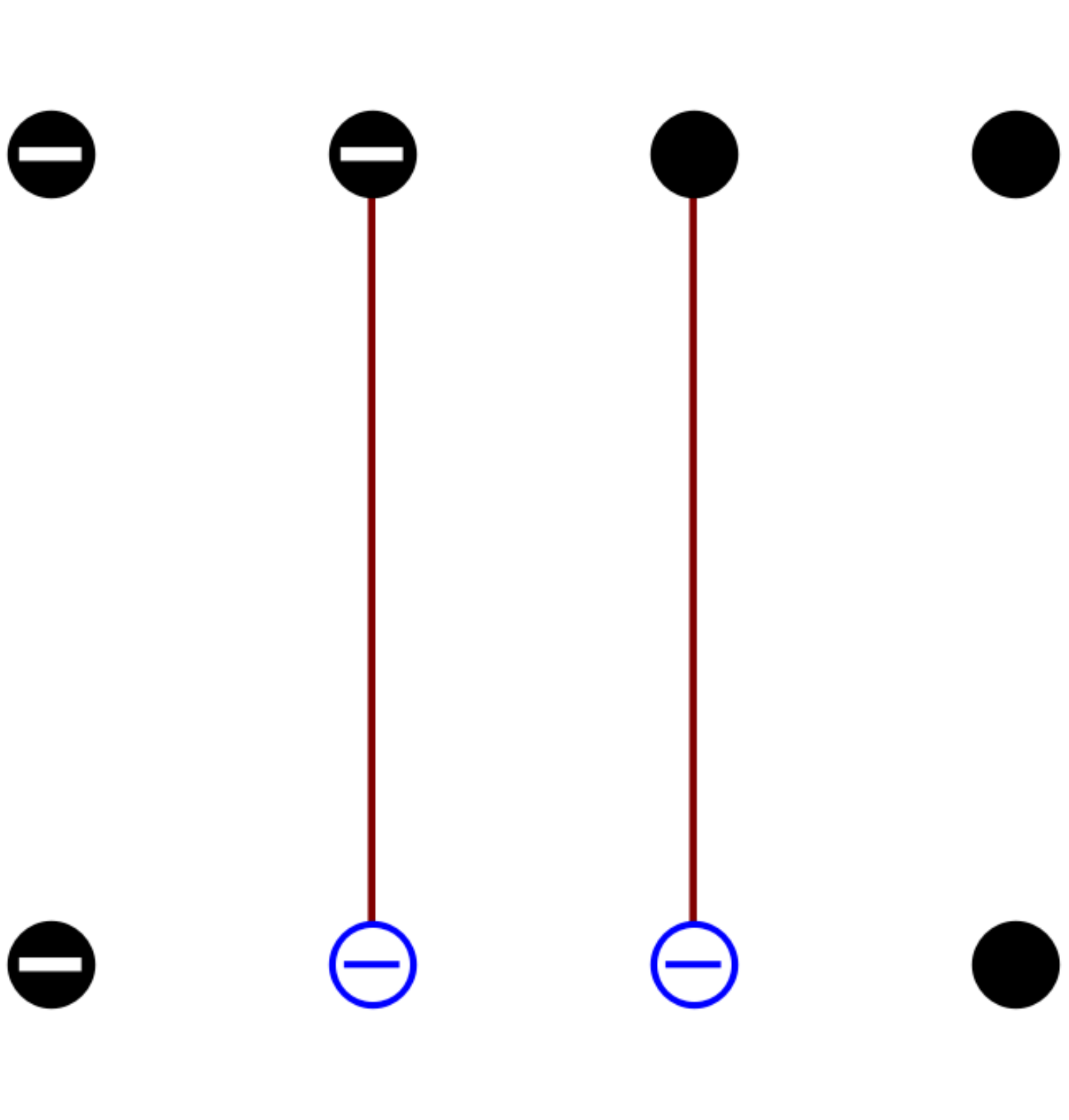}%
}\\[2em]
\subcaptionbox{$n=3$}{%
  \includegraphics[height=2\myheight]{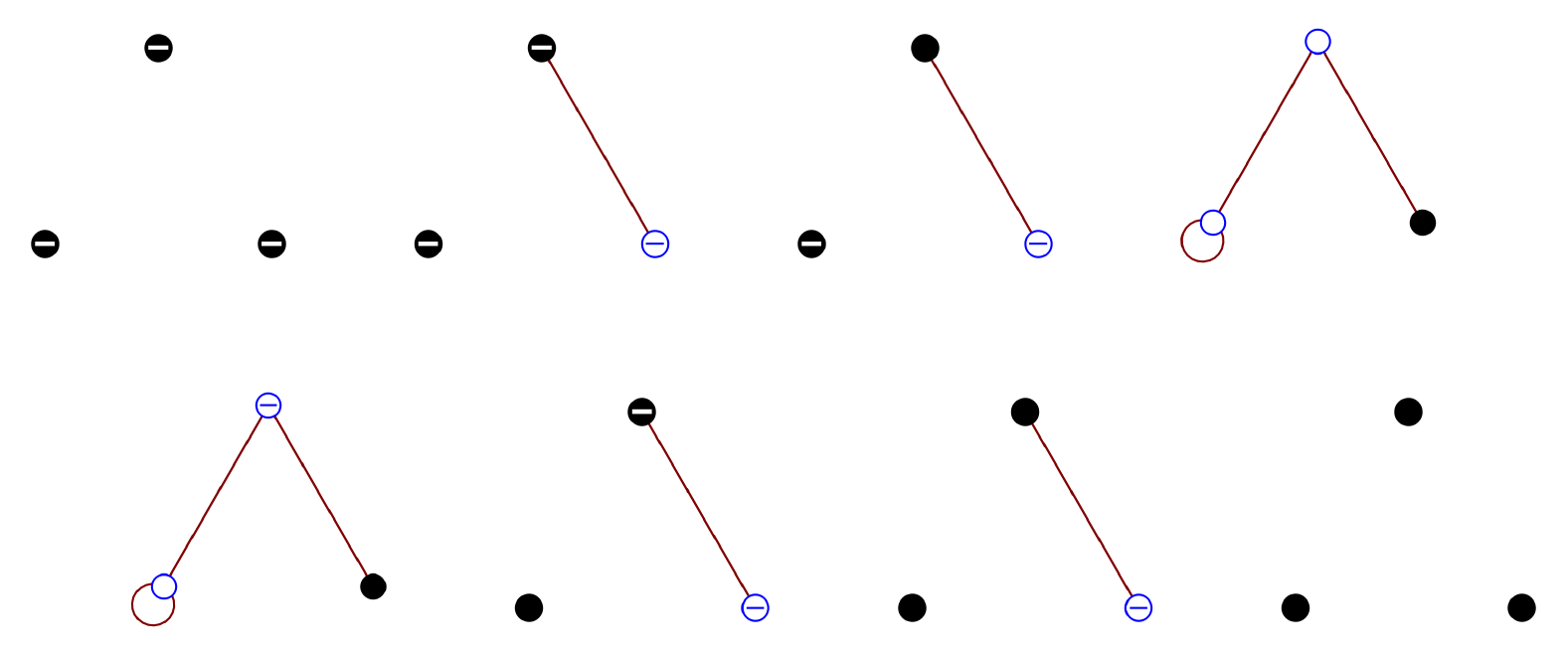}%
}
\caption{Vertices of the projected stabiliser polytope for the $\ket{H}$ symmetry group $\G_H$. These states are represented as decorated graph states compatible with Refs.~\cite{schlingemann_stabilizer_2001,elliott_graphical_2008}, c.f.~the description in the text. The convex hull of these vertices is shown in Fig.~\ref{fig:proj_polytope}.}
\label{fig:Hvertices}
\end{figure}

\begin{table}
\centering
 \begin{tabular}{l|l|l|l|l|l}
   $n$ & $\left|\stab(n)\right|$ & $\big|\overline{\stab}^{H}_c(n)\big|$ + prod. & $\left|\mathcal{V}^H_n\right|$ & $\big|\overline{\stab}^{T}_c(n)\big|$ + prod. & $\left|\mathcal{V}^T_n\right|$  \\ \hline
   1 & 6 & 3+0 & 2 & 2+0 & 2 \\
   2 & 60 & 5+3 & 4 & 4+3 & 4 \\
   3 & 1080 & 11+8 & 8 & 4+8 & 6 \\
   4 & 36720 & 48+18 & 13 & 18+14 & 12 \\
   5 & 2423520 & 252+38 & 32 & 61+26 & 22 \\
   6 & 315057600 & 1881+86 & 60 & 256+57 & 42 \\
   7 & 81284860800 & 20378+208 & 144 & 2151+116 & 66 \\
   8 & 41780418451200 & 331794+510 & 304 & 21475+226 & 131 \\
   9 & 42866709330931200 & 8410183+1270 & 804 & 329712+462 & 238 \\
   10 & --  &   --  &   --  &   5964000+991 &   371
 \end{tabular}
\caption{Number of stabiliser states $|\stab(n)|$ in comparison with the number of projections of fully entangled stabiliser states $|\overline{\stab}_c(n)|$, projected product states and vertices $|\mathcal{V}_n|$ of the projected stabiliser polytope as a function of the number of qubits $n$.}
\label{tab:vertices}
\end{table}

Table \ref{tab:vertices} shows the number of vertices of the projected polytopes in comparison with the original number of stabiliser states. We see that the number of states $N$ that have to be used in the $\ell_1$-minimisation is reduced drastically from $2^{O(n^2)}$ to a scaling which is approximately $2^n$. Additionally, the dimension $d$ of the ambient space is reduced exponentially from $4^n-1$ to exactly $n$. As discussed in Sec.~\ref{sec:symmetry_red}, the required $\ell_1$-minimisation for RoM is computed via a linear program with $2N+d$ constraints and $2N$ variables and has a runtime that is linear in its size $(2N+d)(2N)=4N^2+2Nd$. The runtime is thus reduced as
\begin{equation}
 2^{O(n^2)} \longrightarrow 2^{O(n)},
\end{equation}
leading to a super-polynomial speed-up in the $\ell_1$-minimisation. Although both time and space complexity of the $\ell_1$-minimisation are exponential in $n$, it is in principle  feasible for moderate $n$. Here, the limiting factor is the implementation of the oracle providing the projected states with runtime which is still super-exponential in $n$.

\subsection{Robustness of the $\ket H^{\otimes n}$ and $\ket T^{\otimes n}$ states}
\label{sec:rom_H}

Figure \ref{fig:rom_plot} shows the Robustness of Magic of $\ket{H}^{\otimes n}$ for $n=1,\dots,9$, computed from the vertices $\mathcal V_n^H$ of the projected stabiliser polytope. Note that the data for $n\leq 5$ is in perfect agreement with the so-far computed values in Ref.~\cite{howard_application_2017}. We are particularly interested in the submultiplicative behaviour of $\RM$. Here, the new data for $n>5$ turns out to be helpful: We can observe that the data points quickly approach an apparent exponential scaling with $n$. More precisely, submultiplicativity is clearly observable for $1 \leq n \leq 4$, but the scaling becomes effectively multiplicative for larger $n$. We quantified this using an exponential fit of the data range $3 \leq n \leq 9$ (shown in blue in Fig.~\ref{fig:rom_plot}) resulting in $(1.059\pm 0.015)\times (1.283\pm0.002)^n$. From previous works it is known that the \emph{regularised} robustness $\RM_\mathrm{reg}(\ket H)$ is bounded from below by 1.207. Our work, however, indicates that it converges from above to a constant which is given by the fit as $(1.283\pm0.002)$.

\begin{figure}[h]
\begin{minipage}{0.68\textwidth}
 \input{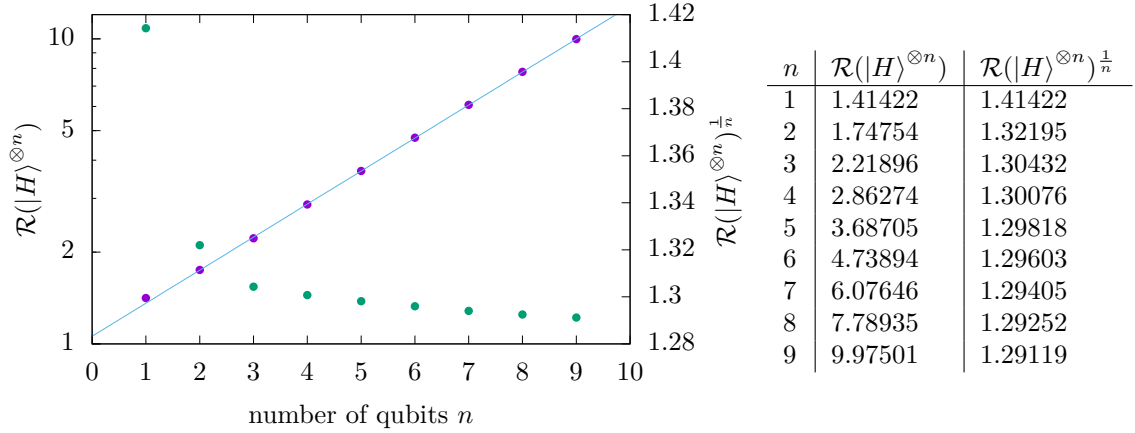}
\end{minipage}
\begin{minipage}{0.2\textwidth}
 \begin{tabular}{l|l|l}
   $n$ & $\mathcal{R}(\ket H^{\otimes n})$ & $\mathcal{R}(\ket H^{\otimes n})^{\frac 1 n}$ \\ \hline
   1 & 1.41422 & 1.41422 \\
   2 & 1.74754 & 1.32195 \\
   3 & 2.21896 & 1.30432 \\
   4 & 2.86274 & 1.30076 \\ 
   5 & 3.68705 & 1.29818 \\
   6 & 4.73894 & 1.29603 \\
   7 & 6.07646 & 1.29405 \\
   8 & 7.78935 & 1.29252 \\
   9 & 9.97501 & 1.29119
 \end{tabular}
\end{minipage}
\caption{Robustness (blue) and regularised robustness (green) of the magic state $\ket{H}^{\otimes n}$ as a function of the number of qubits $n$. The blue line is the exponential fit $(1.059\pm 0.015)\times (1.283\pm0.002)^n$ of the data.}
\label{fig:rom_plot}
\end{figure}

The previously known time complexity for simulating a circuit with $t$ T gates using the RoM algorithm is $O(2^{0.753t})$ \cite{howard_application_2017}. Our findings improve this to $O(\RM(\ket H^{\otimes 9})^\frac{2t}{9})=O(1.667^t)=O(2^{0.737t})$. Moreover, since we already explored an effectively multiplicative regime of the RoM, solving the problem for higher $n > 9$ will not much reduce the runtime. From our estimate for the asymptotic regularised robustness, we can estimate the best possible scaling to be $2^{0.719t}$.

Furthermore, we applied the same procedure to compute the robustness of the magic state $\ket T^{\otimes n}$. Since the $T$-symmetry group is larger than in the previous case, we were able to compute $\RM(\ket T^{\otimes n})$ for up to 10 qubits, see Fig.~\ref{fig:rom_plotT}. Qualitatively, the results agree very well with those of the last section. Quantitatively, the robustness of the $T$ state is considerably higher than the one of the $H$ state. Using again an exponential fit, we find the scaling $(1.169\pm0.011)\times(1.3865\pm0.0014)^n$ which predicts a regularised robustness of $(1.3865\pm0.0014)^n$. By the RoM construction, the 10-qubit solution gives rise to a simulation algorithm with runtime $O(1.984^m)=O(2^{0.988 m})$ where $m$ is the total number of $\ket{T}$ magic states used, or equivalently, the number of $\pi/12$ $Z$-rotation gates.

\begin{figure}
\begin{minipage}{0.68\textwidth}
 \input{gfx/i_romplotT}
\end{minipage}
\begin{minipage}{0.2\textwidth}
 \begin{tabular}{l|l|l}
   $n$ & $\mathcal{R}(\ket T^{\otimes n})$ & $\mathcal{R}(\ket T^{\otimes n})^{\frac 1 n}$ \\ \hline
    1 & 1.73206 & 1.73206 \\
    2 &	2.23206 & 1.49401 \\
    3 &	3.09808 & 1.45780 \\
    4 &	4.33100 & 1.44260 \\
    5 & 6.04494 & 1.43311 \\
    6 &	8.35898 & 1.42460 \\
    7 &	11.5114 & 1.41772 \\
    8 &	15.8436 & 1.41248 \\
    9 & 22.1823 & 1.41110 \\
    10 & 30.7056 & 1.40839
 \end{tabular}
\end{minipage}
\caption{Robustness (blue) and regularised robustness (green) of the magic state $\ket{T}^{\otimes n}$ as a function of the number of qubits $n$. The blue line is the exponential fit $(1.169\pm0.011)\times(1.386\pm0.0014)^n$ of the data.}
\label{fig:rom_plotT}
\end{figure}

\subsection{Analysis of the optimal solutions}
\label{sec:rom_sol}

Additionally, we studied the optimal solutions of the $\ell_1$-minimisation for the previously discussed cases of $\ket H^{\otimes n}$ and $\ket T^{\otimes n}$. For this purpose, it is instructive to use the original formulation of the robustness of a state $\rho$ in terms of an optimal affine combination of two (mixed) stabiliser states $\sigma^\pm\in\overline\SP^{H,T}_n$, cp.~Eq.~\eqref{eq:tot_robustness}:
\begin{equation}
 \rho = \frac{1}{2} \left[ (\RM(\rho)+1) \sigma^+ - (\RM(\rho)-1) \sigma^- \right].
\end{equation}
The states $\sigma^\pm$ can be obtained from the optimal solution of the $\ell_1$-minimisation $\rho=\sum_i x^*_i v_i$ as follows:
\begin{equation}
 \sigma^+ = \frac{2}{\RM(\rho)+1} \sum_{i:\, x^*_i > 0} x^*_i v_i, \qquad \sigma^- = -\frac{2}{\RM(\rho)-1} \sum_{i:\, x^*_i < 0}  x^*_i v_i.
\end{equation}
Recall from the discussion in Sec.~\ref{sec:symmetry_red} that replacing every vertex $v_i$ in the optimal solution by a stabiliser representative in its preimage $\Pi_{H,T}^{-1}(v_i)$ yields an optimal solution for the original problem. Hence, we simply identify the vertices of the projected polytope by their stabiliser representatives constructed in Sec.~\ref{sec:id_symmetries}. Surprisingly, these states seem to have a rather simple structure, especially the positive contributions $\sigma^+$. We will discuss the solutions in the following for the $H$ and $T$ case separately.

\paragraph{Optimal solutions for the $\ket H^{\otimes n}$ state} 

The positive contributions $\sigma^+$ to the $\ket H^{\otimes n}$ state for $n=1,2,3$ are simply given by the graph state $\ket +^{\otimes n}$. Figure \ref{fig:plus_states} shows the remaining states for $n=4,\dots,8$. Note that these states have to lie on a facet of the polytope to minimise the robustness. But instead of the generic $n$ contributions, they can be written using only $\lfloor\log_2 n\rfloor$ terms. The vertices themselves are products of $\ket +$ and the Bell state $\ket{\Psi^+}$.

\begin{figure}

\setlength{\myheight}{1.8cm}

\centering

\subcaptionbox{$n=4$}{%
  \includegraphics[height=\myheight]{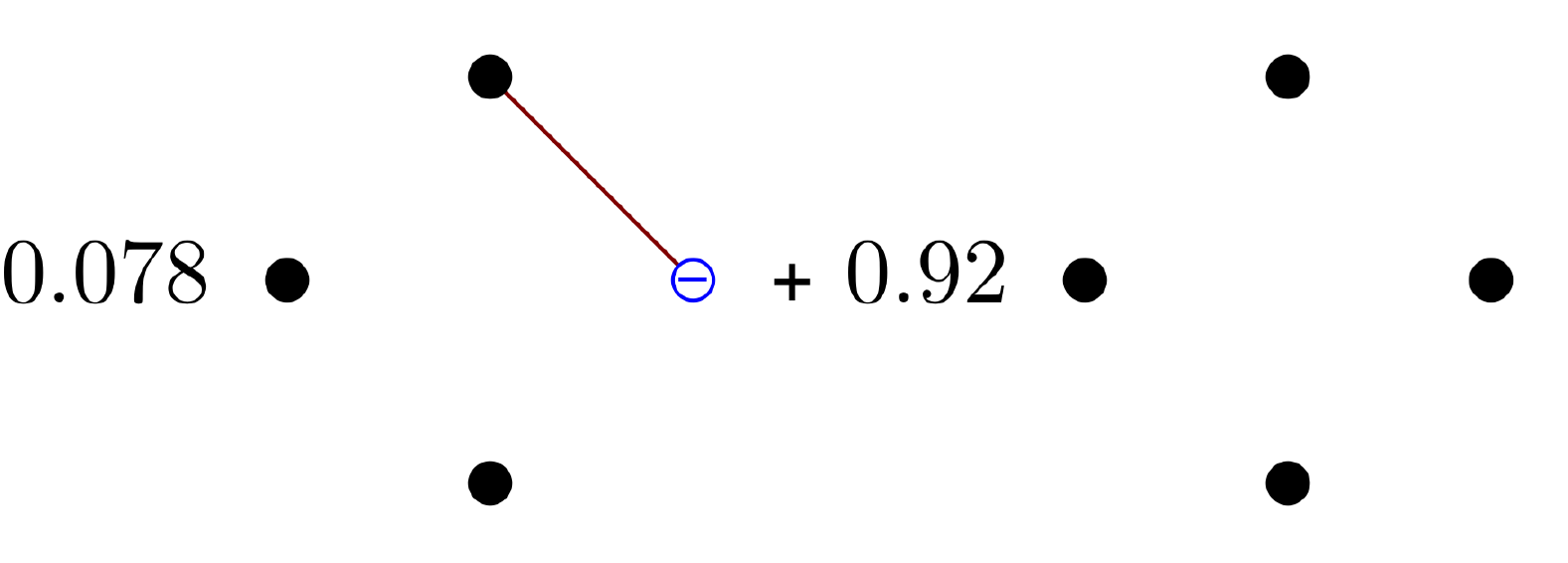}%
}\hspace{4em}
\subcaptionbox{$n=5$}{%
  \includegraphics[height=\myheight]{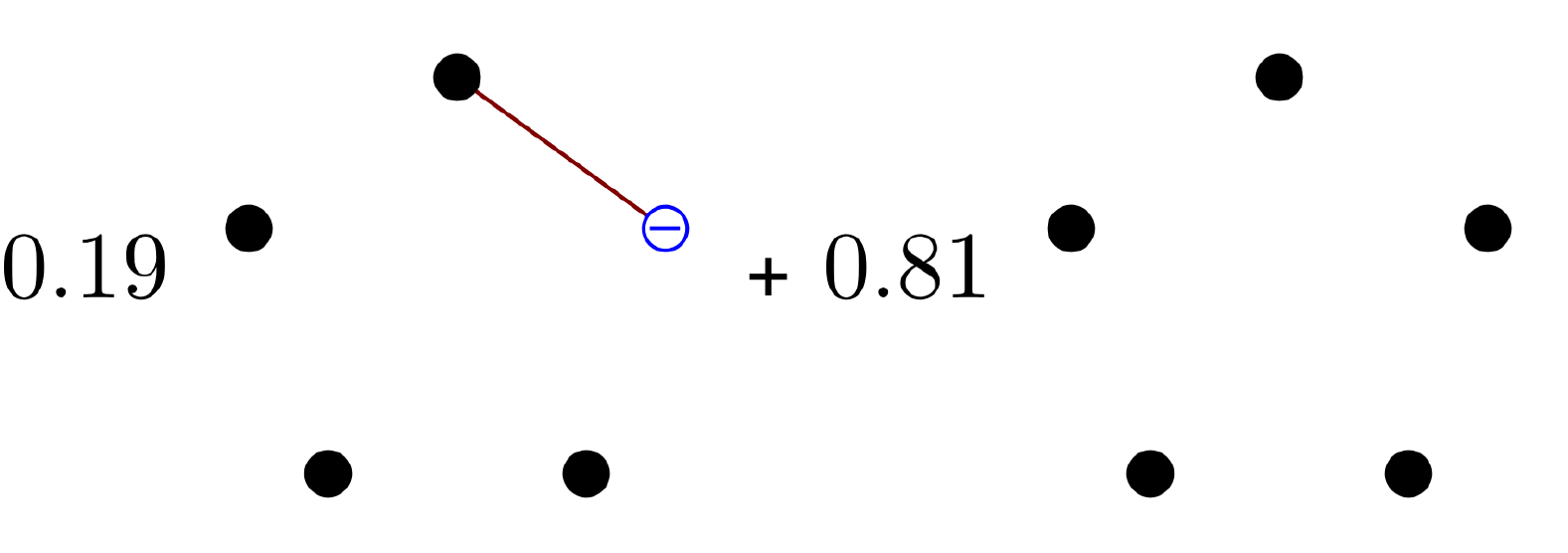}%
}\\[2em]
\subcaptionbox{$n=6$}{%
  \includegraphics[height=\myheight]{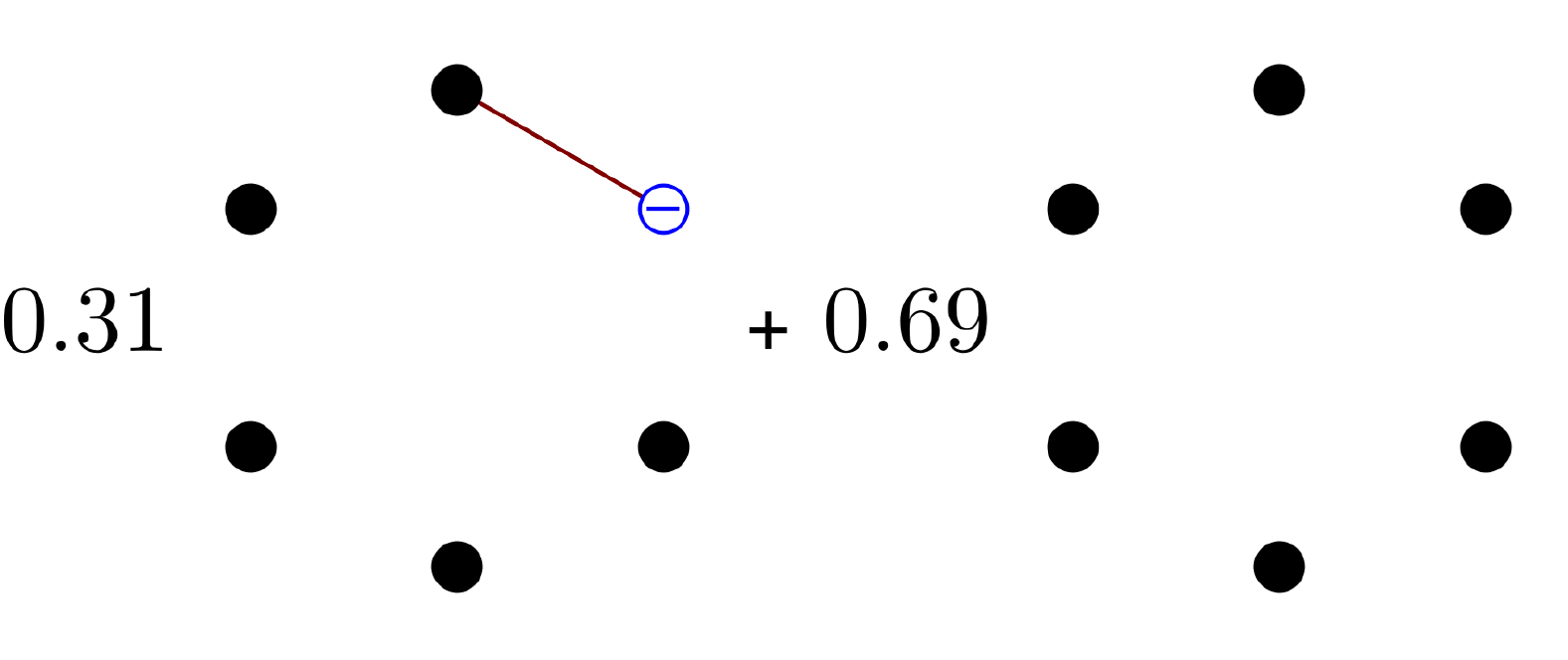}%
}\hspace{4em}
\subcaptionbox{$n=7$}{%
  \includegraphics[height=\myheight]{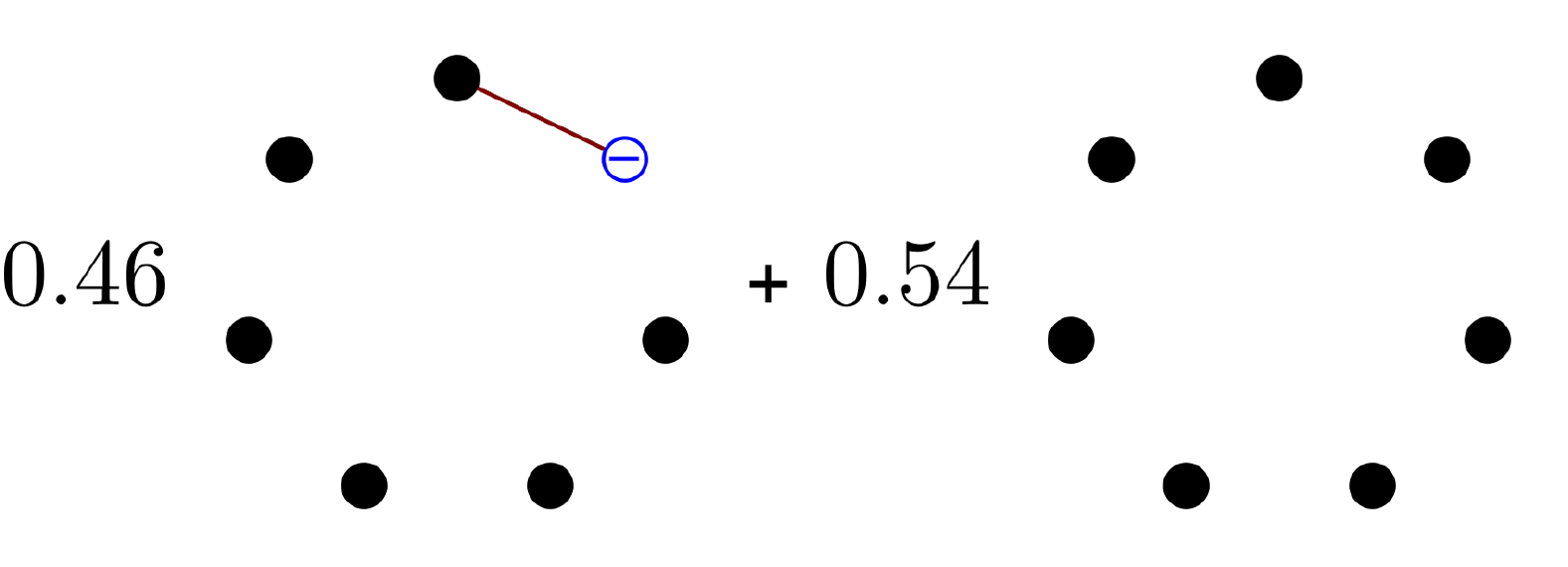}%
}\\[2em]
\subcaptionbox{$n=8$}{%
  \includegraphics[height=\myheight]{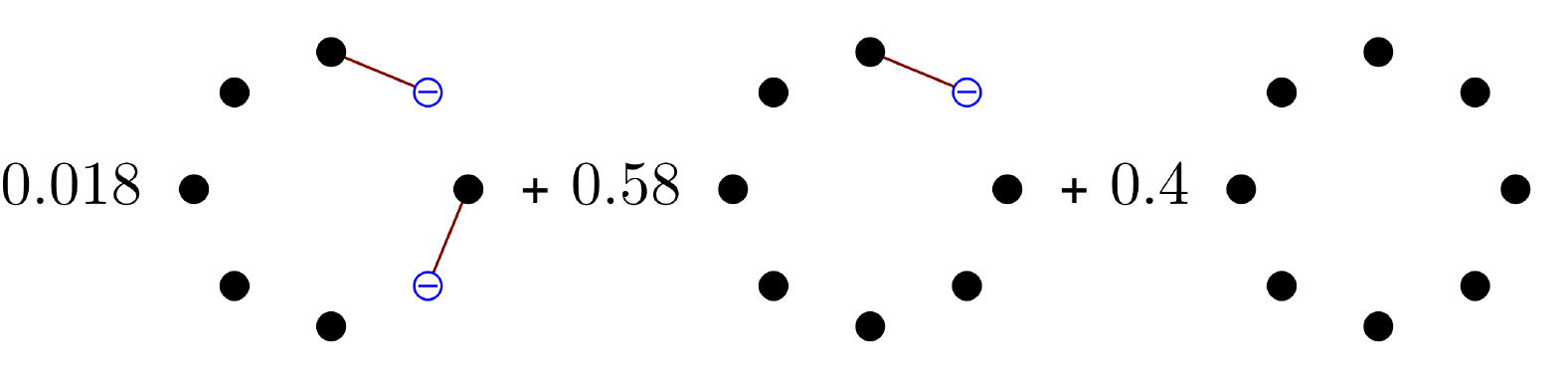}%
}

\caption{Positive contributions to the optimal affine combination for $\ket{H}^{\otimes n}$, written as convex combinations of stabiliser states. These states are represented as decorated graph states where hollow nodes indicate Hadamard action on the respective nodes and signs represent the respective sign of the stabiliser generator. Note that these states have only $\lfloor\log_2 n\rfloor$ contributions which themselves are products of $\ket +$ and Bell states.}
\label{fig:plus_states}
\end{figure}

\begin{figure}

\setlength{\myheight}{1.8cm}

\centering
\subcaptionbox{$n=2$}{%
  \includegraphics[height=1.1\myheight]{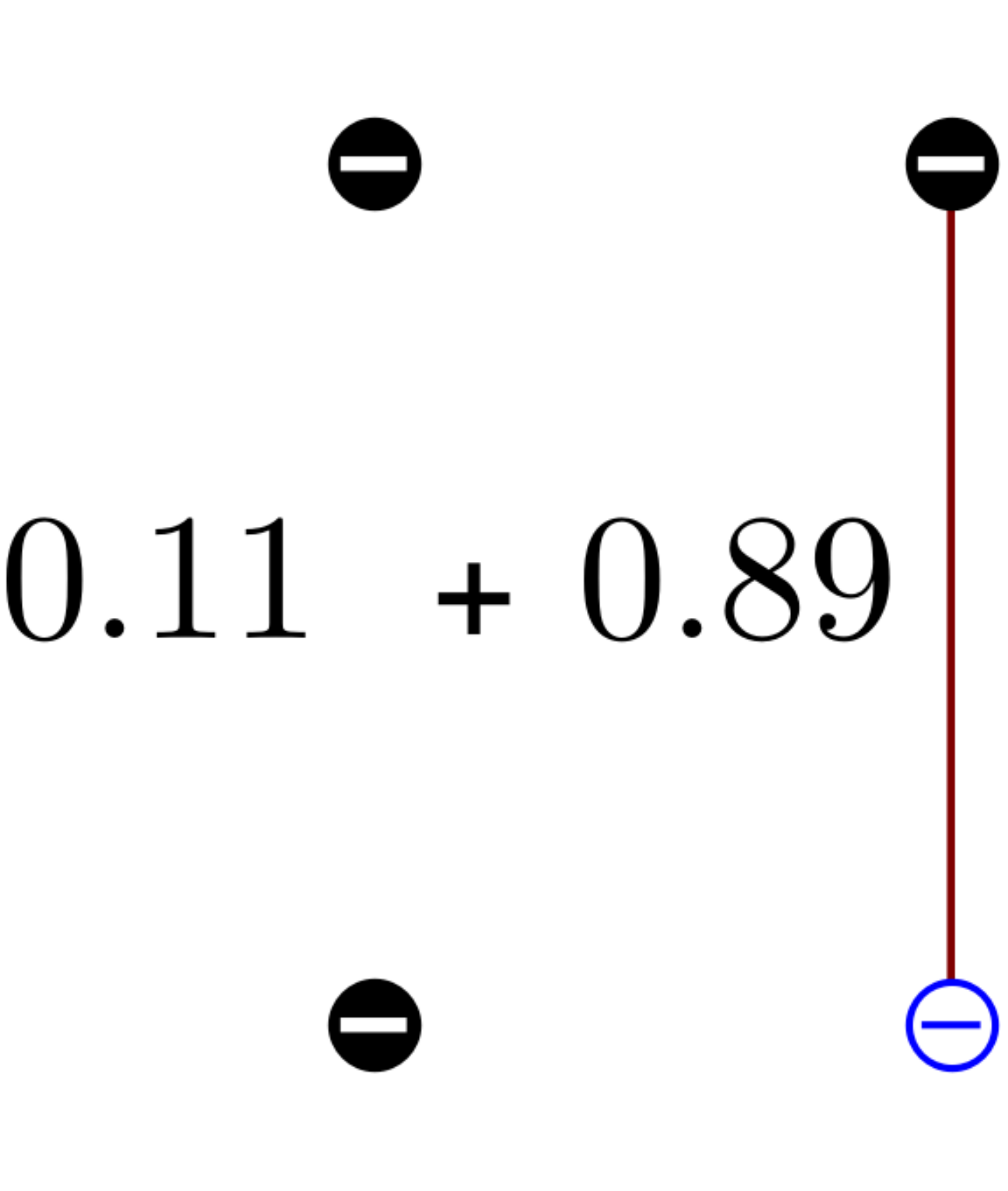}%
}\hspace{4em}
\subcaptionbox{$n=3$}{%
  \includegraphics[height=0.85\myheight]{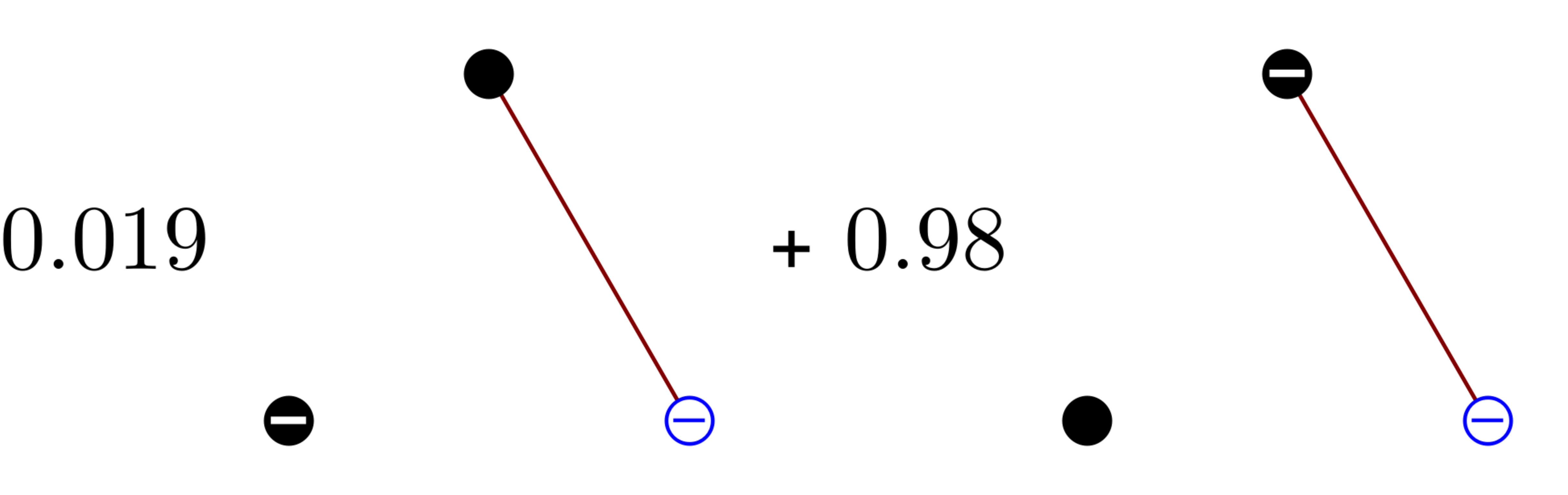}%
}\\[2em]
\subcaptionbox{$n=4$}{%
  \includegraphics[height=\myheight]{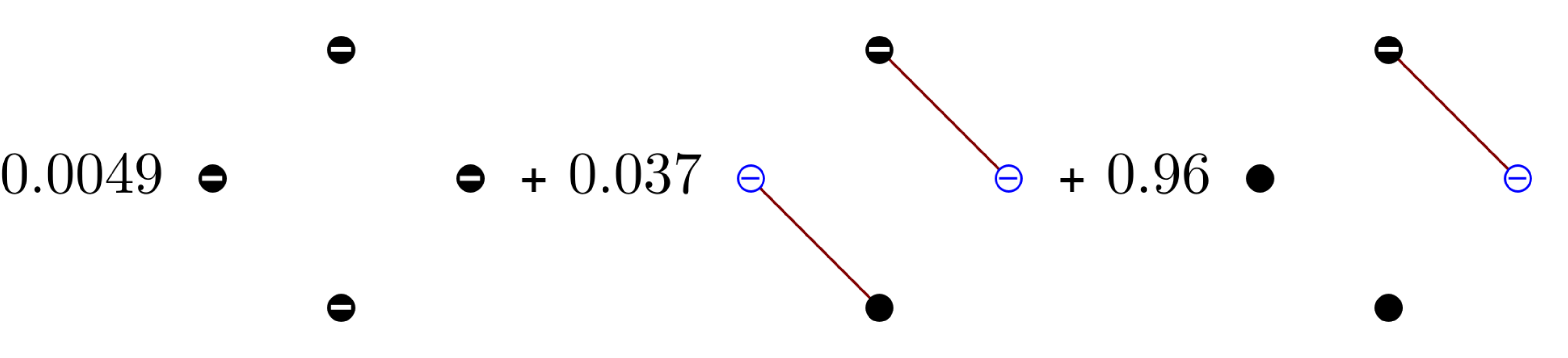}%
}\\[2em]
\subcaptionbox{$n=5$}{%
  \includegraphics[height=\myheight]{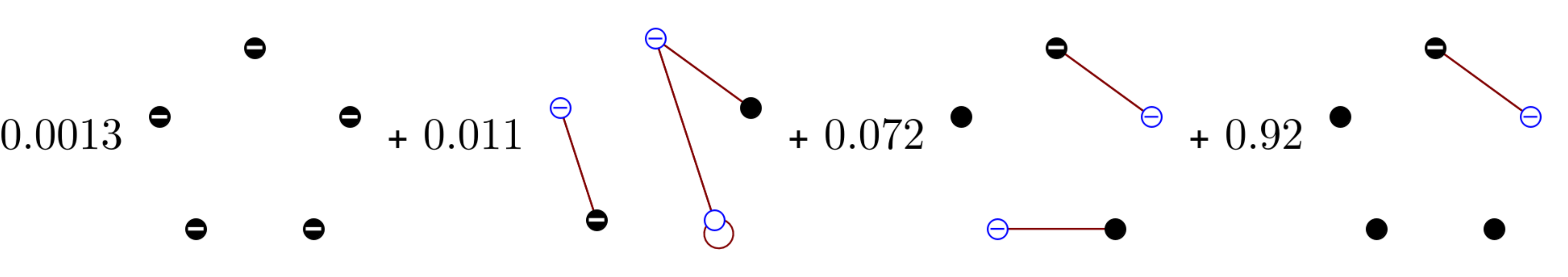}%
}\\[2em]
\subcaptionbox{$n=6$}{%
  \includegraphics[height=\myheight]{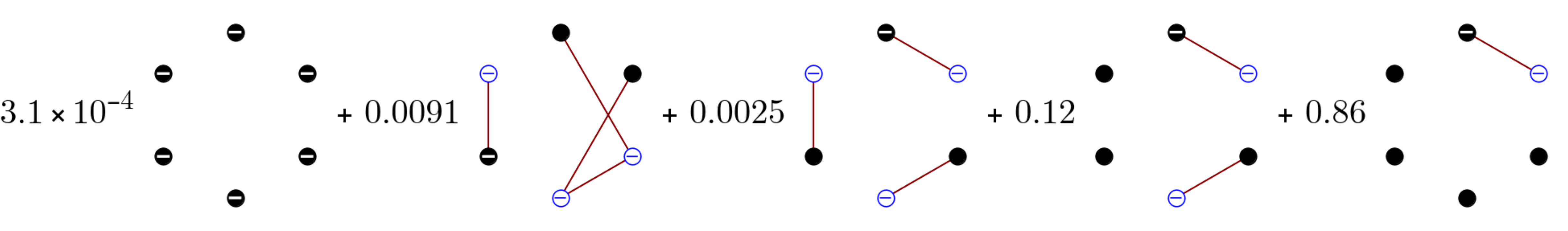}%
}\\[2em]
\subcaptionbox{$n=7$}{%
  \includegraphics[height=2.2\myheight]{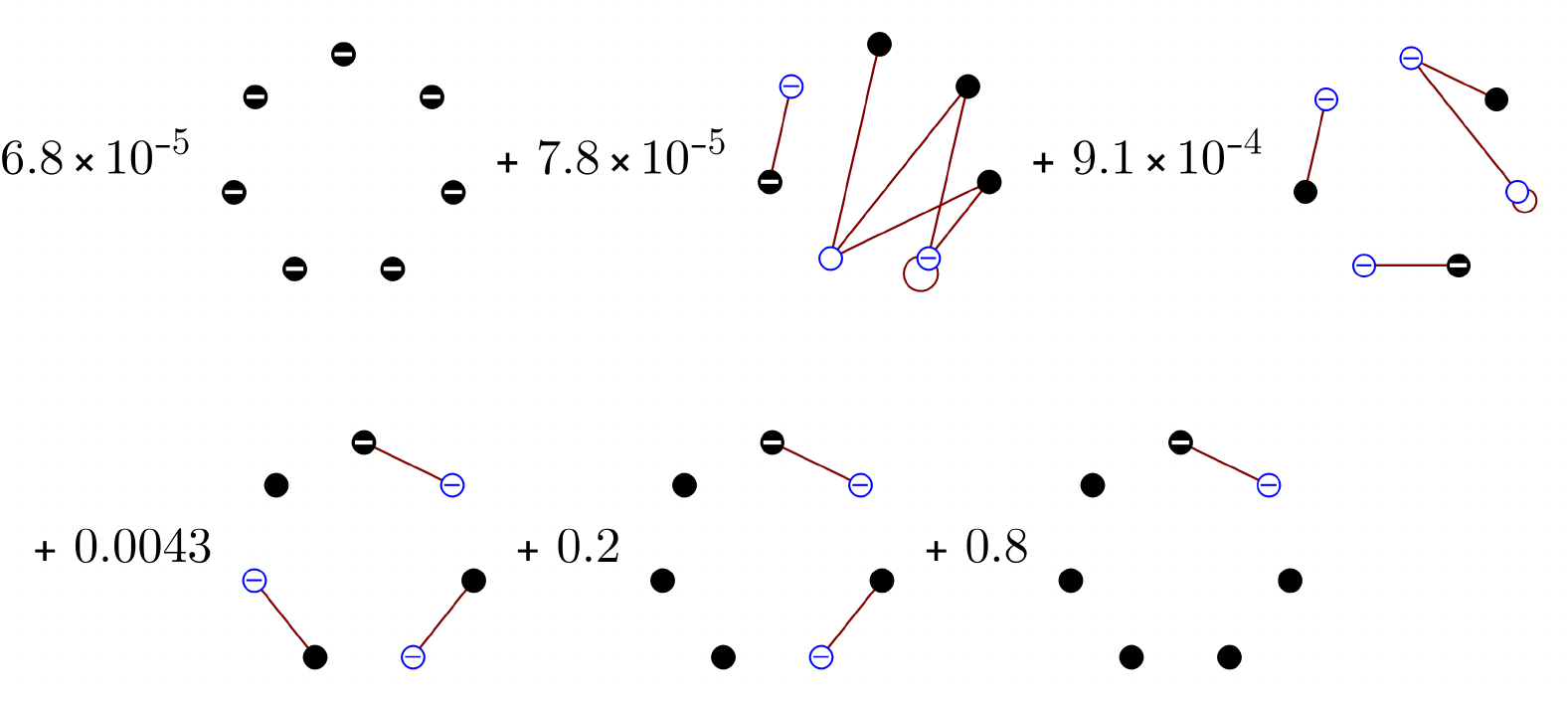}%
}\\[2em]
\subcaptionbox{$n=8$}{%
  \includegraphics[height=2.2\myheight]{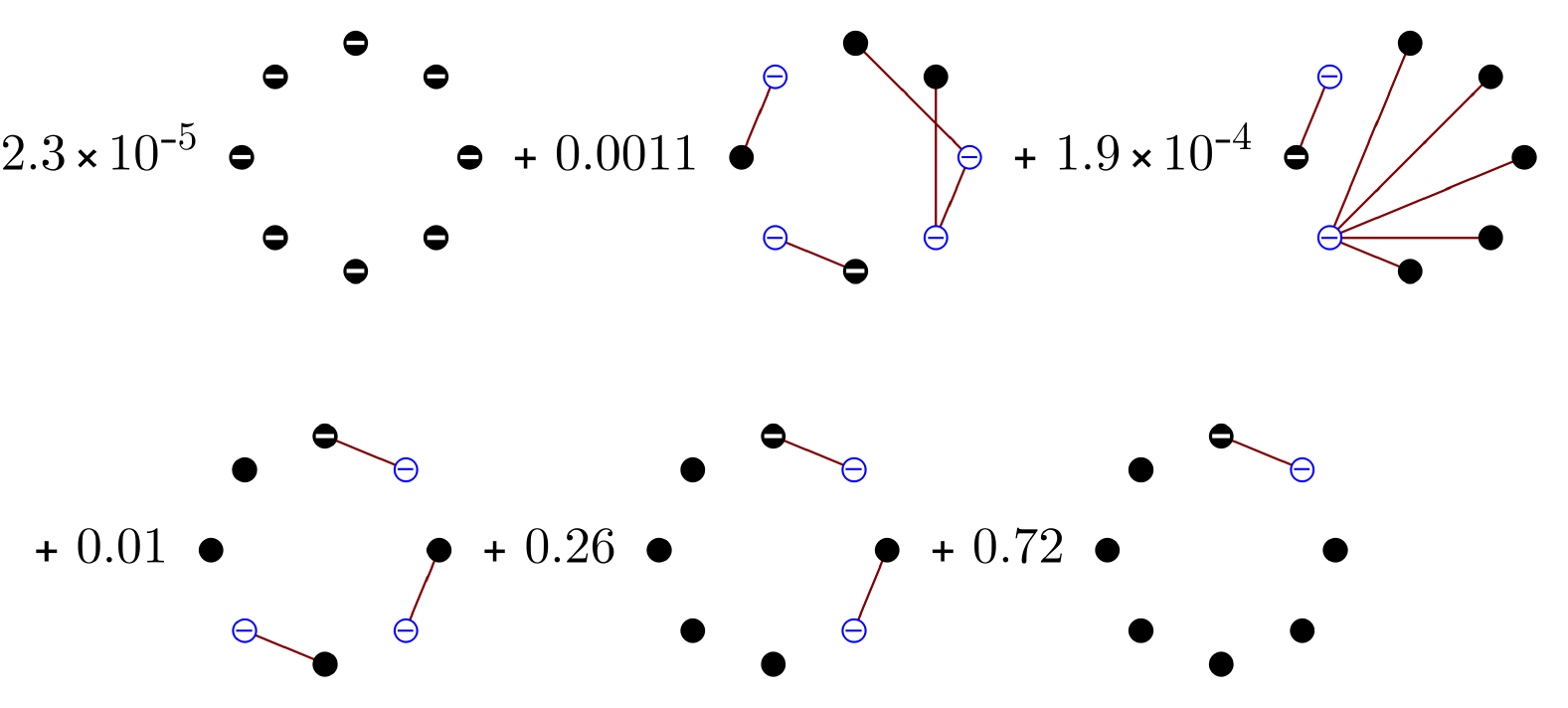}%
}

\caption{Negative contributions to the optimal affine combination for $\ket{H}^{\otimes n}$, written as convex combinations of stabiliser states. Note that these states have more contributions compared to the positive terms and seem partially irregular.}
\label{fig:minus_states}
\end{figure}

In contrast, the negative contributions $\sigma^-$, shown in Fig.~\ref{fig:minus_states}, have less structure and seem to be partially irregular. Of course, $\sigma^-$ has a non-unique convex combination and thus part of structure could be shadowed by the non-uniqueness. Nevertheless, since the dominant part of the contributions consists of products of $\ket{\pm}$ and the Bell states $\ket{\Psi^\pm}$, it is reasonable to assume that the $\sigma^-$ can be approximated by Bell states. We suspect that this approximation is quite good, at least for a moderate number of qubits, due to the apparent suppression of vertices with more complex structure.

Motivated by these observations, we define the following polytope:
\begin{equation}
 \ASP^H_n = \conv \Pi_H\left( \left\{ \text{all } n \text{ qubits states that are products of } \ket\pm \text{ and } \ket{\Psi^\pm} \right\} \right).
\end{equation}
By Eq.~\eqref{eq:wt_prod_states}, we can compute these states efficiently from the signed weight enumerators of $\ket\pm$ and $\ket{\Psi^\pm}$. Note that the projection of $\ket + \otimes \ket -$ is a convex combination of the projected Bell states and thus only states with ``all plus'' or ``all minus'' contributions are extremal in $\ASP^H_n$. Let $\mathcal W^H_n$ be the set of vertices of $\ASP^H_n$ and $m=\lfloor n/2 \rfloor$. We can explicitly enumerate its elements by tuples $(i,j,k)\in \{0,\dots,m\}^3$ such that $i+j+k=m$. Every such tuple corresponds to a product of $i$ $\ket \Psi^+$, $j$ $\ket \Psi^-$ and $(2k+n-2m)$ $\ket \pm$ states. Hence, the number of vertices is 
\begin{equation}
 K := |\mathcal W^H_n| = 2 \sum_{i=0}^m (m+1-i) = (m+1)(m+2).
\end{equation}
We define the approximate robustness of $\ket H^{\otimes n}$ as the robustness with respect to the polytope $\ASP^H_n$:
\begin{equation}
\label{eq:approx_robustness}
r^H_n := \min \left\{ \|x\|_1 \; \bigg| \; x\in\R^K: \; \rho = \sum_{i=1}^K x_i w_i \text{ with } w_i\in\mathcal W^H_n\right\}.
\end{equation}

Since the optimisation is over a subset of all projected stabiliser states, $r^H_n$ is an upper bound for $\RM(\ket H^{\otimes n})$. Moreover, it can be efficiently evaluated since both the complexity of computing $\mathcal{W}^H_n$ and of the $\ell_1$-minimisation is $O(n^4)$. Figure \ref{fig:approx_rom_plot} shows a comparison of $r^H_n$ with the exact robustness. From the previous analysis it is clear that the approximation is exact for $n\leq 4$. The deviation from the exact data for $4 < n \leq 9$ is at most 0.06\% and thus negligible. However, we expect that the deviation becomes larger the higher $n$ is, since it is likely that the importance of multipartite entangled contributions increases. Nevertheless, the approximation seems to be surprisingly good. The approximate data again follows an exponential increase with $n$, predicting an asymptotic regularised robustness of about $(1.2829\pm0.0017)$ which is compatible with the prediction $(1.283\pm 0.002)$ from the exact data.

However, this approach is limited to $n\leq 26$. For larger $n$, the $\ell_1$-minimisation lacks a feasible solution, which can only be the case if $\ket H^{\otimes n}$ is not in the affine span of the product states $\mathcal W^H_n$. This indicates that the dimension of the subpolytope $\ASP^H_n$ becomes too small. A solution to these infeasibility problems will be discussed in Sec.~\ref{sec:rom_hierarchy}.

\begin{figure}[h]
\begin{minipage}{0.68\textwidth}
 \input{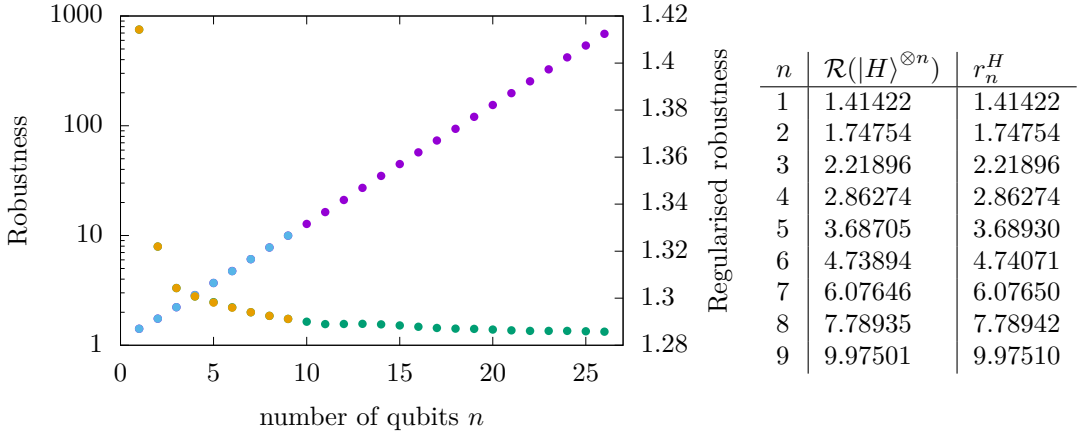}
\end{minipage}
\begin{minipage}{0.2\textwidth}
 \begin{tabular}{l|l|l}
   $n$ & $\mathcal{R}(\ket H^{\otimes n})$ & $r^H_n$ \\ \hline
   1 & 1.41422 & 1.41422 \\
   2 & 1.74754 & 1.74754 \\
   3 & 2.21896 & 2.21896 \\
   4 & 2.86274 & 2.86274 \\ 
   5 & 3.68705 & 3.68930 \\
   6 & 4.73894 & 4.74071 \\
   7 & 6.07646 & 6.07650 \\
   8 & 7.78935 & 7.78942 \\
   9 & 9.97501 & 9.97510
 \end{tabular}
\end{minipage}

\caption{Exact (blue, orange) and approximate (purple, green) robustness and regularised robustness of the magic state $\ket{H}^{\otimes n}$ as a function of the number of qubits $n$.}
\label{fig:approx_rom_plot}
\end{figure}

\paragraph{Optimal solutions for the $\ket T^{\otimes n}$ state} 

As in the previous case, the two connected vertices of the projected 2-qubit polytope constitute a dominant part in the optimal solutions. They are not projections of Bell states, so we will denote their representatives by $\ket{\gamma^\pm}$ and define them to be the states stabilised by $\{X_1Z_2,Z_1X_2\}$ and $\{-X_1Z_2,-Z_1Y_2\}$, respectively. The analysis of the optimal solutions shows that the $\sigma^+$ states are convex combinations of products of $\ket +$ and the maximally entangled state $\ket{\gamma^+}$. Moreover, they seem to be even more sparse than for the previous case, see Fig.~\ref{fig:Tplus_states}. As in the case of $\ket H$, the $\sigma^-$ state shows only partial structure, see Fig.~\ref{fig:Tminus_states} .

\begin{figure}
\setlength{\myheight}{1.8cm}

\centering
\subcaptionbox{$n=3$}{%
  \includegraphics[height=0.9\myheight]{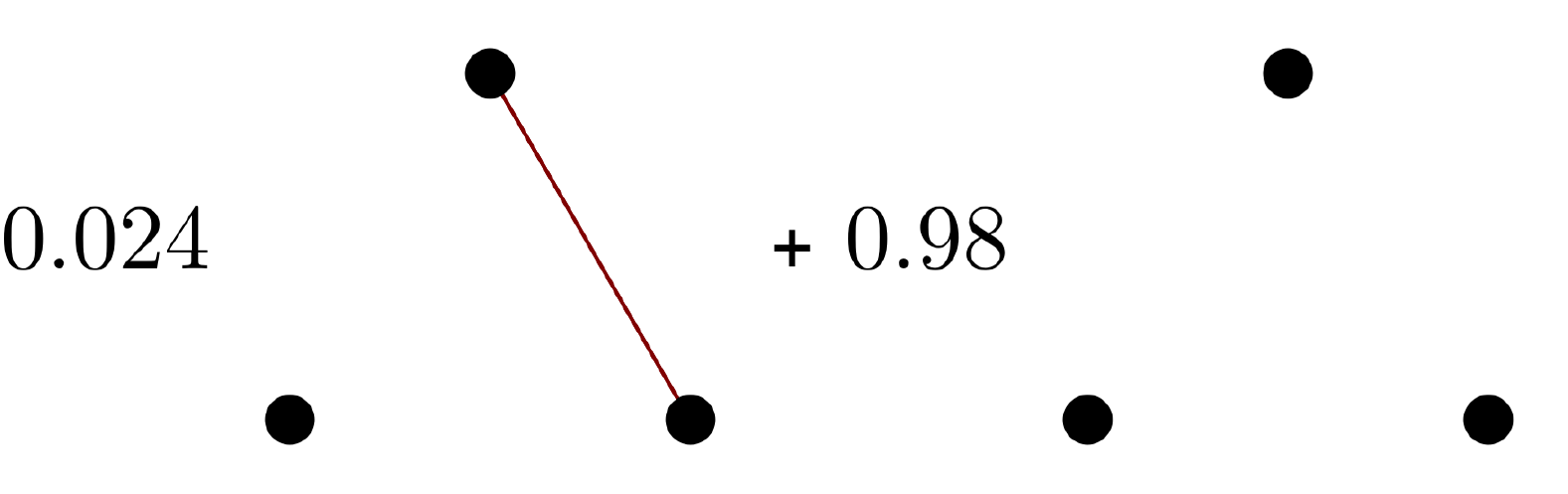}%
}\hspace{4em}
\subcaptionbox{$n=4$}{%
  \includegraphics[height=\myheight]{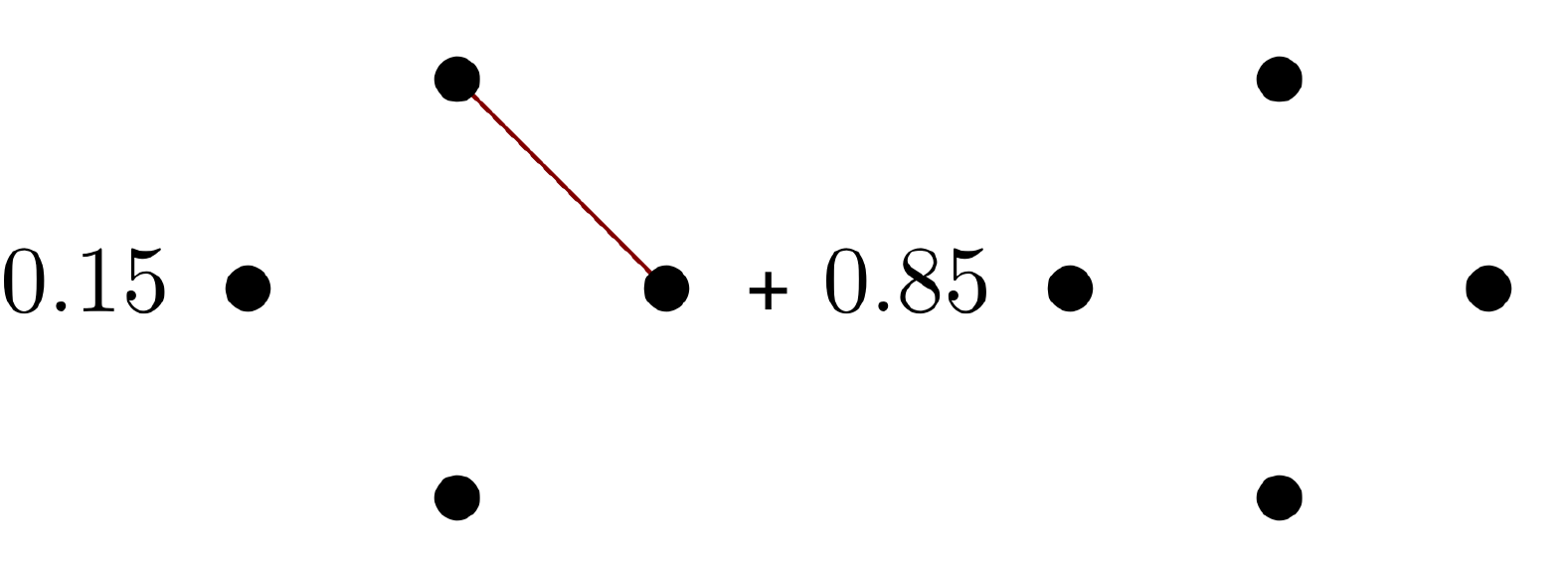}%
}\\[2em]
\subcaptionbox{$n=5$}{%
  \includegraphics[height=\myheight]{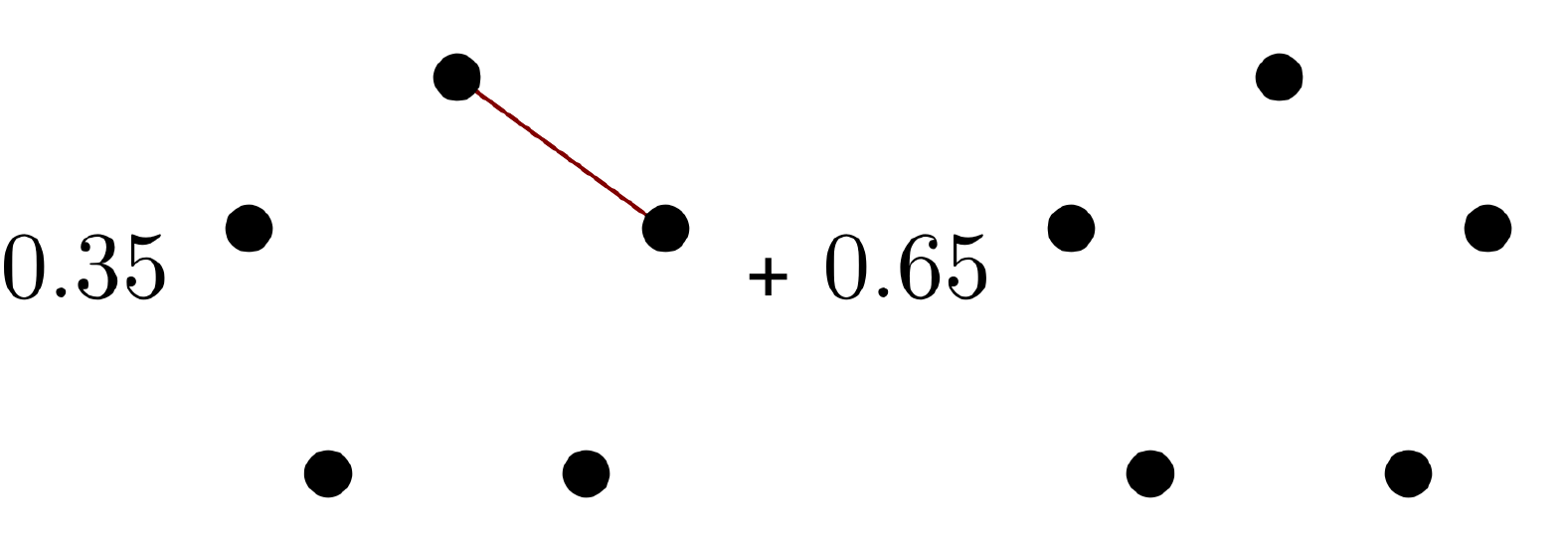}%
}\hspace{4em}
\subcaptionbox{$n=6$}{%
  \includegraphics[height=\myheight]{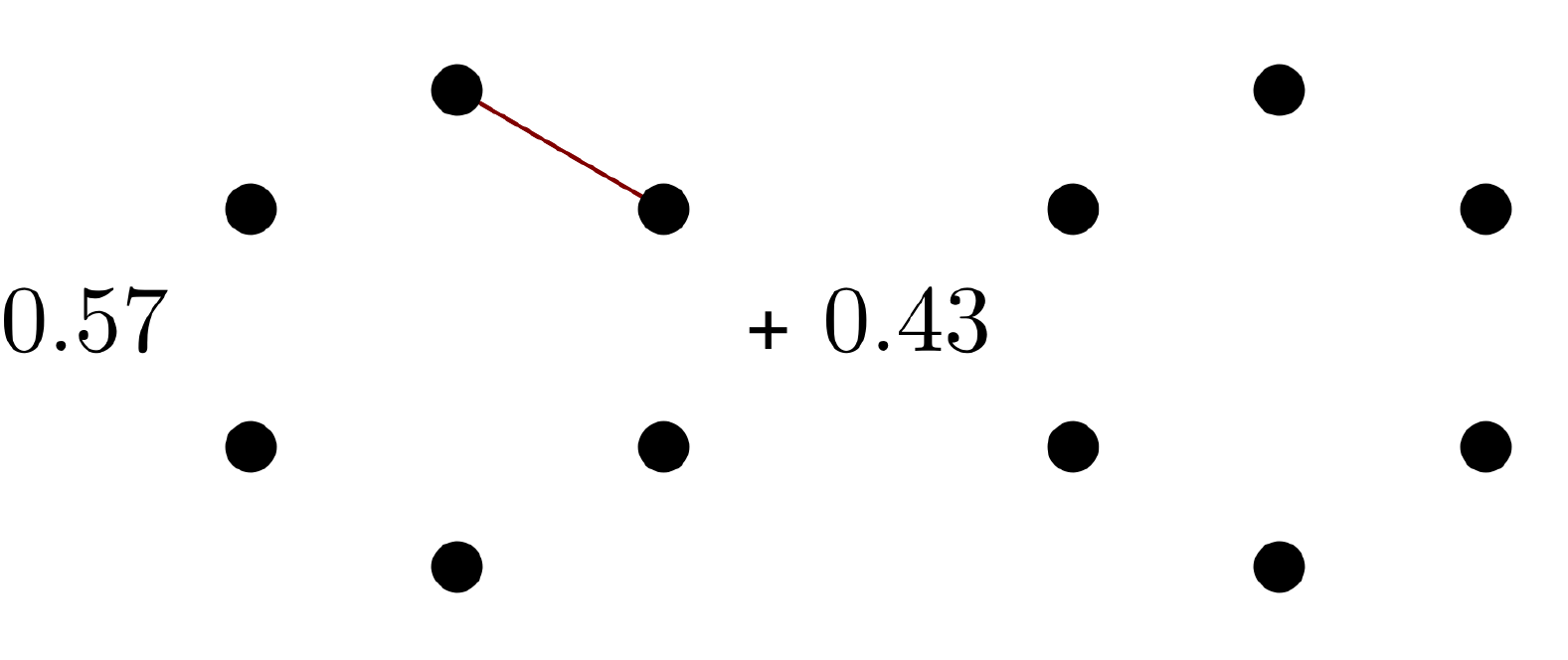}%
}\\[2em]
\subcaptionbox{$n=7$}{%
  \includegraphics[height=\myheight]{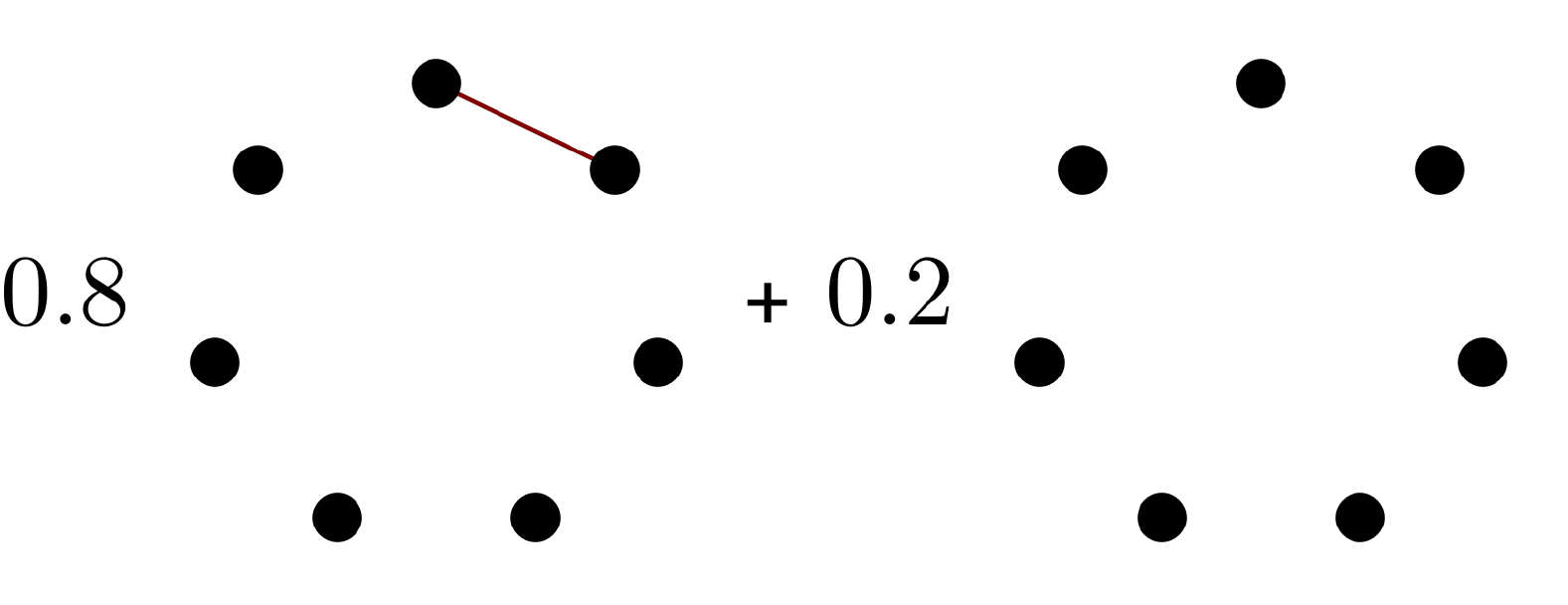}%
}\hspace{4em}
\subcaptionbox{$n=8$}{%
  \includegraphics[height=\myheight]{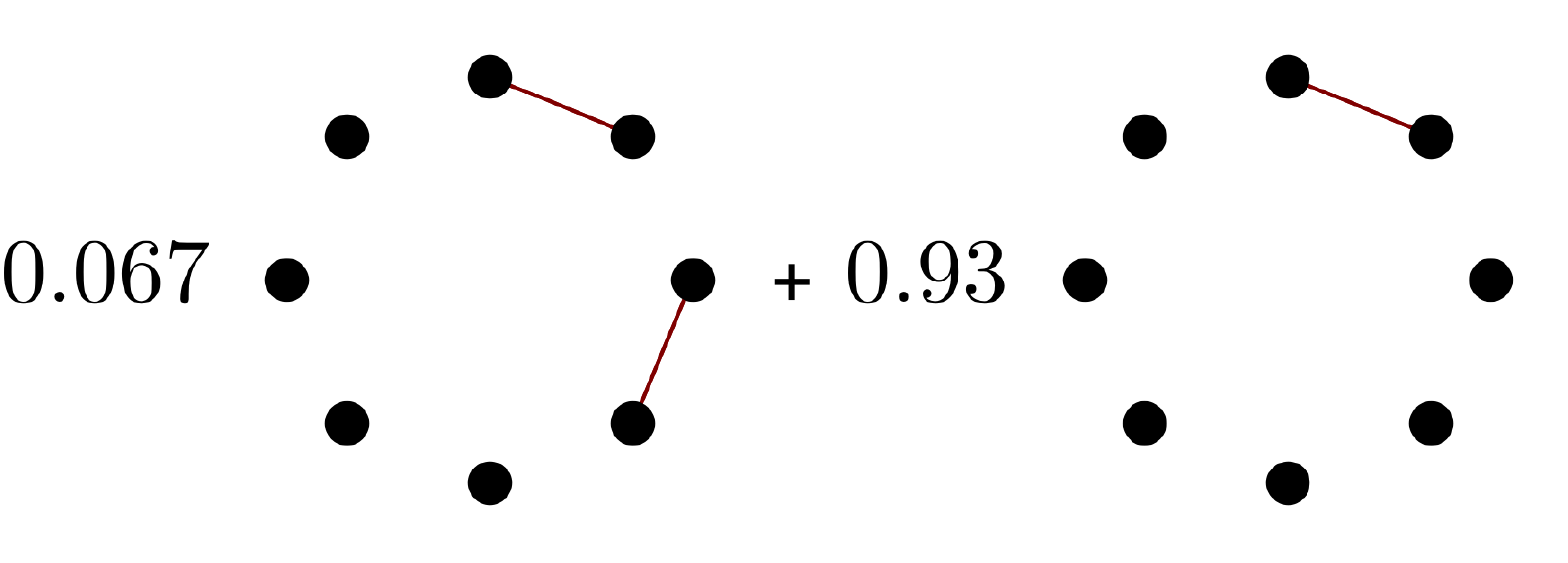}%
}

\caption{Positive contributions to the optimal affine combination for $\ket{T}^{\otimes n}$ and $3 \leq n \leq 8$, written as convex combinations of stabiliser states. These states are again represented as decorated graph states, see Sec.~\ref{sec:computations}. Note that these states have less than $\lfloor\log_2 n\rfloor$ contributions which themselves are product states made from $\ket +$ and Bell states.}
\label{fig:Tplus_states}
\end{figure}

\begin{figure}
\begin{minipage}{0.68\textwidth}
 \input{gfx/i_approx_romplotT}
\end{minipage}
\begin{minipage}{0.2\textwidth}
 \begin{tabular}{l|l|l}
   $n$ & $\mathcal{R}(\ket T^{\otimes n})$ & $r^T_n$ \\ \hline
    1 & 1.73206 & 1.73206 \\
    2 &	2.23206 & 2.23206 \\
    3 &	3.09808 & 3.09808 \\
    4 &	4.33100 & 4.33316 \\
    5 & 6.04494 & 6.04494 \\
    6 &	8.35898 & 8.36006 \\
    7 &	11.5114 & 11.5117 \\
    8 &	15.8436 & 15.8492 \\
    9 & 22.1823 & 22.2499 \\
    10 & 30.7056 & 30.9411
 \end{tabular}
\end{minipage}
\caption{Exact (blue, orange) and approximate (purple, green) robustness and regularised robustness of the magic state $\ket{T}^{\otimes n}$ as a function of the number of qubits $n$.}
\label{fig:approx_rom_plot_T}
\end{figure}

\begin{figure}
\setlength{\myheight}{1.8cm}

\centering
\subcaptionbox{$n=3$}{%
  \includegraphics[height=0.9\myheight]{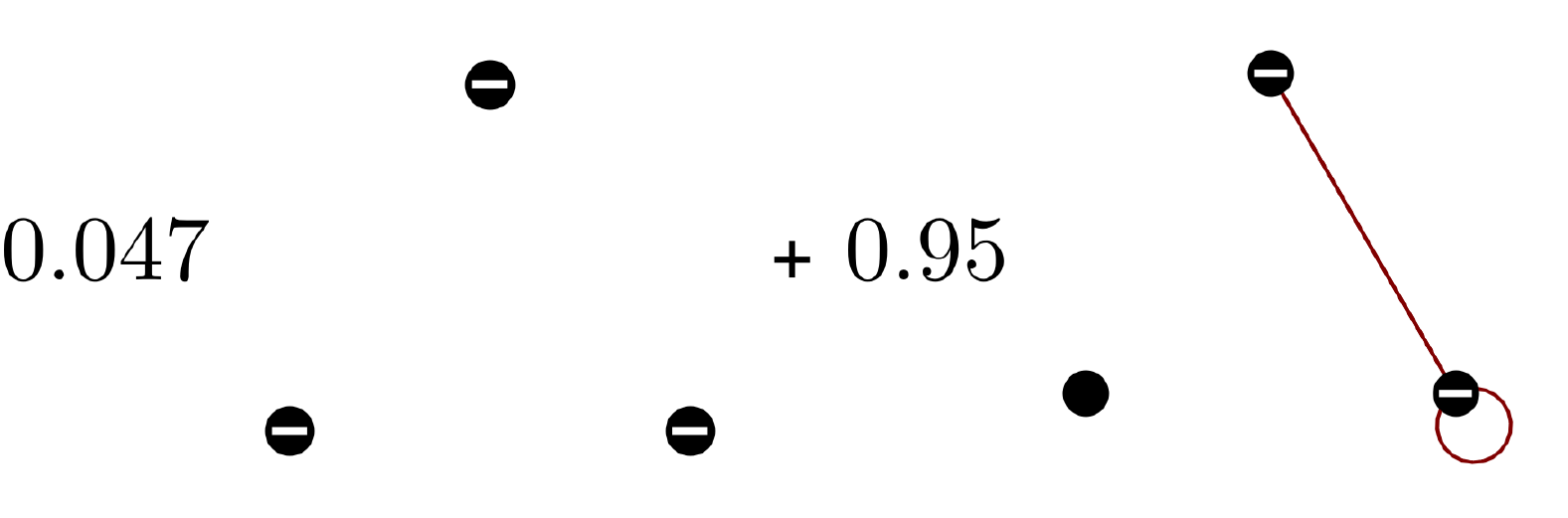}%
}\hspace{4em}
\subcaptionbox{$n=4$}{%
  \includegraphics[height=\myheight]{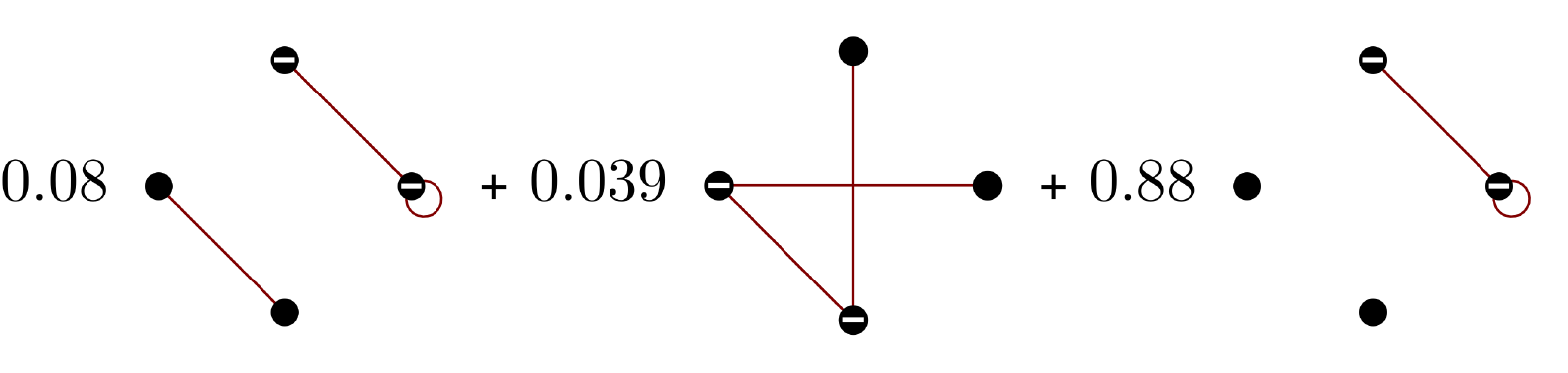}%
}\\[2em]
\subcaptionbox{$n=5$}{%
  \includegraphics[height=\myheight]{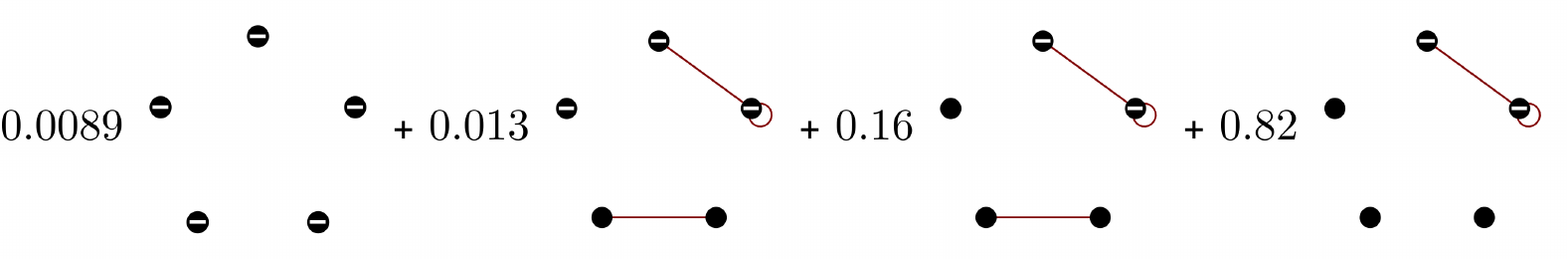}%
}\hspace{4em}
\subcaptionbox{$n=6$}{%
  \includegraphics[height=\myheight]{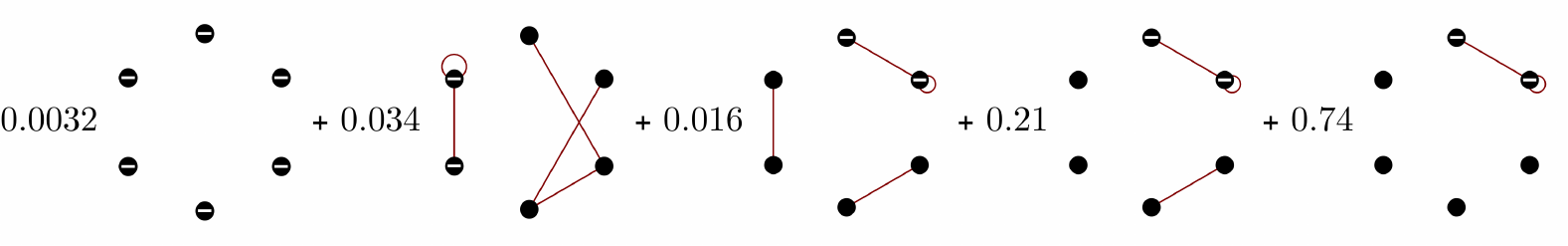}%
}\\[2em]
\subcaptionbox{$n=7$}{%
  \includegraphics[height=2.2\myheight]{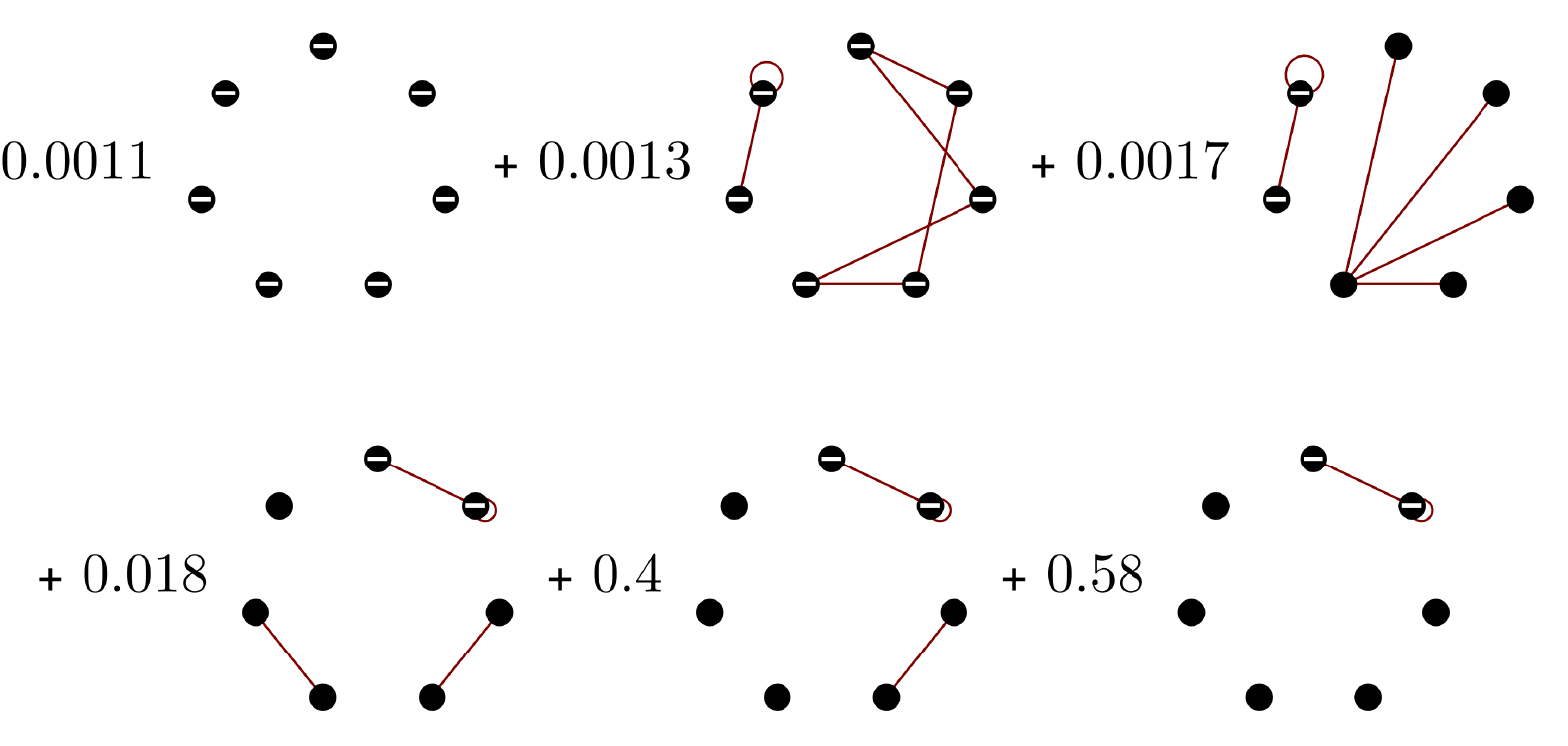}%
}\\[2em]
\subcaptionbox{$n=8$}{%
  \includegraphics[height=2.2\myheight]{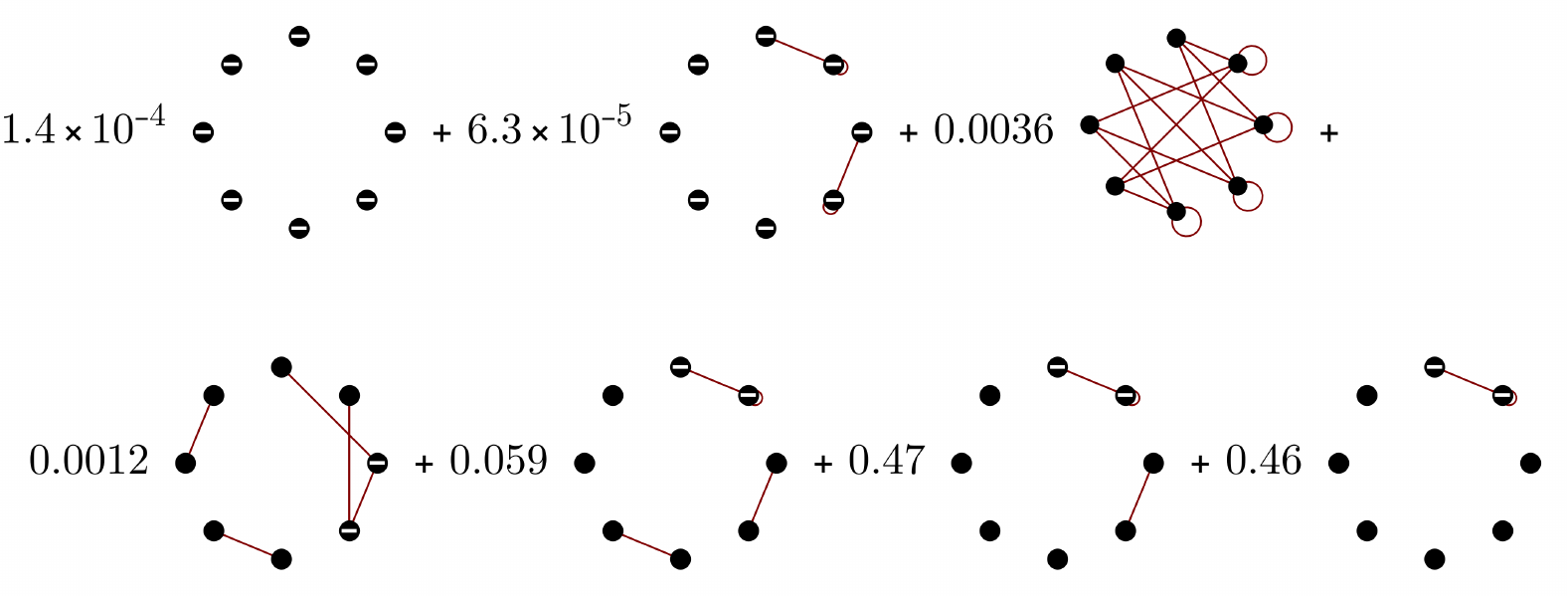}%
}

\caption{Negative contributions to the optimal affine combination for $\ket{T}^{\otimes n}$ and $3\leq n \leq 8$, written as convex combinations of stabiliser states. These states are again represented as decorated graph states, see Sec.~\ref{sec:computations}.}
\label{fig:Tminus_states}
\end{figure}

The similarities suggest that the robustness for $\ket T^{\otimes n}$ can be well approximated using a similar procedure as in the last section. To this end, we define the polytope
\begin{equation}
 \ASP^T_n = \conv \Pi_T\left( \left\{ \text{all } n \text{ qubits states that are products of } \ket{\pm} \text{ and } \ket{\gamma^\pm}  \right\} \right).
\end{equation}
The approximate robustness $r^T_n$ is again defined with respect to this polytope. The vertices $\mathcal W^T_n$ can be efficiently computed using the same procedure as in the $\ket H$ case and the approximation is exact for $n \leq 3$. Figure \ref{fig:approx_rom_plot_T} shows the approximate robustness compared to the exact results. The approximation is again surprisingly good with a maximum deviation from the exact data of around 0.8\%. Although this error is still small, it is an order of magnitude larger than for the $\ket H$ state. The approximation yields an asymptotic regularised robustness of $(1.3916\pm 0.0014)$ which is slightly larger than the result from the exact data. Similar to the last section, the applicability of this approximation is limited to $n\leq 24$ due to the infeasibility of the optimisation problem for larger $n$. In the next section, we will show how to generalise this approximation to overcome the feasibility problems.

\subsection{Finite hierarchy of RoM approximations}
\label{sec:rom_hierarchy}

In general, the idea of restricting to at most $k$-partite entangled stabiliser states leads to a hierarchy of approximations with levels $1\leq k \leq n$. Clearly, for $k=n$ the exact problem is recovered. The set of at most $k$-partite entangled $n$-qubit stabiliser states can be constructed by taking all possible tensor products of states in $\stab(i)$ for $1\leq i\leq k$ which result in $n$-qubit states. However, without the presence of additional symmetries, this will still result in an exponentially large set since already the set of fully separable stabiliser states ($k=1$) has size $6^n$.

Hence, we assume that we want to compute approximations to $\RM(\rho)$ where $\rho$ is a symmetric $n$-qubit state (not necessarily pure) such that the stabiliser symmetry group contains at least the symmetric group $S_n$. In particular, this applies to the magic states $\ket H^{\otimes n}$ and $\ket T^{\otimes n}$. In this case, we are able to give $\mathrm{poly}(n)$ upper bounds on the runtime for every fixed level $k < n$.

Following Lemma \ref{lem:proj_prod_states} and Section \ref{sec:id_symmetries}, the set of $S_n$-projections of $k$-partite entangled $n$-qubit stabiliser states can be constructed from the vertices of the projected polytopes $\overline{\SP}_i=\Sym(\SP_i)$ for $1\leq i\leq k$ which have \emph{fully entangled} representatives. Let us denote the sets of representatives by $V_i\subset \stab(i)$. Since the order does not matter, the possible ways to take tensor products of these sets are exactly captured by (descending) partitions of $n$ into parts with size at most $k$. We will denote such a partition by $\lambda\vdash_k n$. Then, we define the subpolytope of projected $k$-partite entangled states as
\begin{equation}
\label{eq:stab_subpolytope}
 Q_{n,k} := \conv \Sym\left( \bigcup_{\lambda\vdash_k n} \bigotimes_{i\in\lambda} V_i \right),
\end{equation}
and the \emph{$k$-th level of the RoM hierarchy} by the relaxation of Prob.~\ref{prob:rom} to the subpolytope $Q_{n,k}$. Clearly, this defines an upper bound $r_{n,k}(\rho)$ to the exact RoM $\RM(\rho)$.

To bound the runtime of the $k$-th level of the hierarchy, we have to count the vertices $\mathcal{W}_{n,k}$ of $Q_{n,k}$. An upper bound to this number is given by the number of tensor products appearing in Eq.~\eqref{eq:stab_subpolytope} up to permutations. Thus, let $\lambda$ be a (descending) partition of $n$ into $r$ parts, with no part larger than $k$:
\begin{equation}
 n = \lambda_1 + \dots + \lambda_r, \qquad k\geq \lambda_1 \geq \lambda_2 \geq \dots \geq \lambda_r > 0.
\end{equation}
This can be rewritten as
\begin{equation}
 n = \sum_{i=1}^k m_i \, i,
\end{equation}
where $0\leq m_i \leq n$ is the \emph{multiplicity} of $i$ in the partition $\lambda$. Since the permutations of the partition itself were already considered, the number of product states corresponding to the partition $\lambda$ is given, up to permutations, by
\begin{equation}
 \prod_{i=1}^k \left| V_i^{m_i}/S_{m_i} \right| = \prod_{i=1}^k \binom{m_i + L_i - 1}{L_i - 1}, \qquad L_i := |V_i|.
\end{equation}
Using that the number of fully entangled vertices is increasing with $i$, we can bound this number by
\begin{equation}
  \prod_{i=1}^k \binom{m_i + L_i - 1}{L_i - 1} \leq \prod_{i=1}^k m_i^{L_i} \leq n^{k\, L_k}.
\end{equation}
Finally, the number of partitions of $n$ with parts no greater than $k$ coincides with the number of partitions of $n$ into at most $k$ parts and is denoted by $p_k(n)$. A standard result in number theory is that
\begin{equation}
 p_k(n) = \frac{n^{k-1}}{k!(k-1)!} + O(n^{k-2}).
\end{equation}
Thus, we can bound the number of vertices  $\mathcal{W}_{n,k}$ to be
\begin{equation}
 |\mathcal{W}_{n,k}| \leq p_k(n)\, n^{k\, L_k} = \frac{n^{k\,(L_k+1)-1}}{k!(k-1)!} + O(n^{k\,(L_k+1)-2}).
\end{equation}
Since the dimension is $O(n^3)$, this implies that the runtime of the relaxation of Problem \ref{prob:rom} is polynomial in $n$ for a fixed $k$.

Finally, we remark that one has to know the vertex sets $V_i$ up to $k$ to run the $k$-th level of the hierarchy. Moreover, the bounds are very loose due to the fact we have not strictly bound the number of fully entangled vertices $L_i$ which is beyond the scope of this paper. However, by using the actual numbers for $L_i$, one can obtain much better bounds on $|\mathcal{W}_{n,k}|$ by evaluating the binomial coefficients. Let us illustrate this for the case of $\ket H^{\otimes n}$ and $k=2,3$: Using that $p_2(n)=\lfloor \frac{n}{2}\rfloor+1$, 
$p_3(n)=\lfloor\frac{(n+3)^3}{12}+\frac{1}{2}\rfloor$ and $L_1=L_2=L_3=2$, we find 
\begin{align}
 |\mathcal{W}_{n,2}| &\leq \left( \left\lfloor \frac{n}{2}\right\rfloor+1 \right) \binom{n+2-1}{2-1}^2 = O(n^3), \\
 |\mathcal{W}_{n,3}| &\leq \left\lfloor \frac{(n+3)^3}{12}+\frac{1}{2}\right\rfloor n^3 = O(n^6).
\end{align}
Note that we derived $|\mathcal{W}_{n,2}|=O(n^2)$ in the previous section using further information about the extremality of products.

\section{Conclusion \& Outlook}
\label{sec:conclusion}

In this work, we have studied the symmetries of the $n$-qubit stabiliser polytope and showed how to use these to greatly reduce the combinatorical complexity of computing the robustness of single-qubit magic states and to gain insight into the structure of the problem. 

We have determined the symmetry groups for the two types of single-qubit magic states and have constructed explicit stabiliser state representatives of the symmetry orbits. This has allowed us to evaluate the robustness of $\ket H^{\otimes n}$ for $n\leq 9$ and $\ket T^{\otimes n}$ for $n\leq 10$ qubits. Using the structure of the solutions, we have proposed an approximation based on at most bipartite entangled states which is efficient in $n$ and gives an upper bound on the exact robustness. Furthermore, the agreement with the exact data for $n\leq 10$ qubits is excellent. Since the RoM becomes effectively multiplicative for larger $n$, we expect that the approximation is still very good in the regime $n > 10$. Moreover, by restricting to $k$-partite entangled stabiliser states, we obtained a finite hierarchy of approximations which recovers the exact RoM for $k=n$. We showed that a fixed level $k < n$ of the hierarchy can be computed in $\poly(n)$ time.

We feel that the most interesting task left open in this work is to explain why even two-body entangled states are sufficient to produce excellent bounds on the RoM. 
This may be insightful in a wider context.
Indeed, sub-additivity of resource costs occurs in several areas of quantum information theory, most famously for the entanglement of formation \cite{hastings2009superadditivity}.
The violations to additivity in \cite{hastings2009superadditivity} can be proven to exist for randomised constructions in high dimensions.
This makes it hard to study the structure of the optimal solutions, or their behavior in a limit of many copies.
The combinatorial nature of the stabiliser polytope, and the observation that only few-body entanglement is enough to find almost-optimal solutions, suggest that RoM may provide an instance where understanding submultiplicativity is feasible.

\section{Acknowledgements}

We thank 
Earl Campbell, 
Mateus Ara\'ujo,
Felipe M.~Mora,
Felix Huber,
Frank Vallentin,
Arne Heimendahl,
and
Huangjun Zhu
for helpful discussions and comments. In particular, we want to thank Richard Kueng for productive discussions concerning the dual problem.

This work has been supported by the Excellence Initiative of the German Federal and State Governments (Grant ZUK 81), 
the German Research Foundation (DFG project B01 of CRC~183), and
ARO under contract W911NF-14-1-0098 (Quantum Characterization, Verification, and Validation).

\clearpage
\appendix

\section{Equivalence of the two robustness measures}
\label{sec:l1proof}

The equivalence given in Eq.~(\ref{eq:l1_tot_robustness}) is stated implicitly in \cite{howard_application_2017}. 
Here, we give an explicit proof.

\textcite{vidal_robustness_1999} defined the so-called \emph{total robustness} which is given by
\begin{equation}
 \label{eeq:tot_robustness}
 R(a) := \inf_{b \in S}R(a||b).
\end{equation}
For $S$ being a (compact) polytope, this can be rewritten as follows. Since $S$ is compact, the minimum $b^*$ is attained. Hence, $R(a) = R(a||b^*) =: s^*$ and
\begin{equation}
 b^+ := \frac{1}{1+s^*}\left( b^* + s^* a \right) \in S.
\end{equation}
Let $\{v_1,\dots,v_N\}$ be the vertices of $S$ and write $b^+,b^*\in S$ as convex combinations with coefficients $\lambda_i$ and $\mu_i$. It follows:
\begin{equation}
 a = (1+s^*)b^+ - s^* b^* = \sum_{i=1}^N \underbrace{\left((1+s^*)\lambda_i - s^* \mu_i \right)}_{=: x(s^*)_i} v_i.
\end{equation}
The last sum is an affine combination of the vertices since $\sum_i x(s^*)_i =1$. In other words, $x(s^*)$ is a feasible solution for the following minimisation problem:
\begin{equation}
 \RM(a) := \min \left\{ \|x\|_1 \; \bigg| \; x \in \R^N \,:\,  a = \sum_{i=1}^N x_i v_i \text{ and } 1 = \sum_{i=1}^N x_i\right\}.
\end{equation}
Moreover, the optimal value can be bounded as follows:
\begin{equation}
\label{eq:rom_ubound}
 \RM(a) \leq \sum_{i=1}^N |x(s^*)_i| \leq (1+s^*) \sum_{i=1}^N \lambda_i + s^* \sum_{i=1}^N \mu_i = 1 + 2s^* = 1 + 2 R(a).
\end{equation}

Assume $x^*$ is the optimal solution for $\RM(a)$. Then, we can rewrite $\RM(a)$, using $\sum_i x_i =1$, as follows:
\begin{equation}
 \RM(a) = \|x^*\|_1 = \sum_{i:\,x^*_i \geq 0} x^*_i - \sum_{i:\,x^*_i < 0} x^*_i = 1  + 2 s(x^*),  \quad \text{ with } s(x^*) := - \sum_{i:\,x^*_i < 0} x^*_i.
\end{equation}
Hence, the optimal affine combination for $a$ becomes
\begin{align}
  a &= \sum_{i:\,x^*_i \geq 0} x^*_i v_i - \sum_{i:\,x^*_i < 0} |x^*_i| v_i \\
    &= (1+s(x^*)) \underbrace{\sum_{i:\,x^*_i \geq 0} \frac{x^*_i}{1+s(x^*)} v_i}_{=:\beta^+} \; - \; s(x^*) \underbrace{\sum_{i:\,x^*_i < 0} \frac{|x^*_i|}{s(x^*)}}_{=:\beta^-} v_i.
\end{align}
Here,  the renormalised modulus of the affine coefficients form a convex combination and hence $\beta^\pm \in S$. Thus, we found a pseudo-mixture for $a$ and the parameter $s(x^*)$ can not be smaller than the total robustness of $a$:
\begin{equation}
 R(a) \leq s(x^*) \quad \Leftrightarrow \quad \RM(a) \geq 1 + 2 R(a).
\end{equation}
Combined with Eq.~\eqref{eq:rom_ubound}, this shows that the two measures are equivalent:
\begin{equation}
 \RM(a) = 1 + 2 R(a).
\end{equation}

Finally, let us remark that $\beta^{-}$ constructed from the optimal affine combination for $a$ is such that 
\begin{equation}
 R(a) = R(a||\beta^{-}).
\end{equation}

\section{On the dual RoM problem}
\label{sec:dual}

At this point, any analytical insight could be helpful in simplifying the problem. A standard method is dualising the problem. Clearly, by Slater's condition, strong duality holds and thus the dual problem is an equivalent definition for the Robustness of Magic. The dual problem is straightforwardly obtained as follows:
\begin{problem}[Dualised Robustness of Magic]
\label{prob:dual_rom}
Let  $\stab(n)=\{s_1,\dots,s_N\}$ be the set of stabiliser states.
Given a state $\rho$, solve the following problem:
\begin{align*}
  \textbf{max} \quad & \tr(\rho Y) \quad \text{over }Y \in H_D, \\
  \textbf{s.\,t.}\quad & | \tr(Y s_i) | \leq 1. 
 \end{align*}
\end{problem}
This formulation of the RoM has a particularly nice form. Thus, it seems at first that the dual problem might be easier to solve. Indeed, one can guess the following feasible solution:
\begin{equation}
 Y = \frac{1}{2^n}\sum_{i=1}^{4^n} \sgn\left(\tr(\rho w_i)\right) w_i.
\end{equation}
Here, $\{w_1,\dots,w_{4^n}\}$ denote the $n$-qubit Pauli operators which generate the $n$-qubit Pauli group $\Pa_n$.
Feasibility follows from the following calculation for a stabiliser state $s$ with stabiliser group $S<\Pa_n$:
\begin{equation}
\begin{split}
 |\tr(Ys)| &= \bigg|\frac{1}{2^n}\sum_{i=1}^{4^n} \sgn\left(\tr(\rho w_i)\right) \tr(w_i s)\bigg| \\
 &\leq \frac{1}{4^n}\sum_{i=1}^{4^n} \big|\sgn\left(\tr(\rho w_i)\right)\big| \sum_{g\in S} \big|\tr(w_i g)\big| \\
 &= \frac{1}{2^n}\sum_{i=1}^{4^n} ( \delta(w_i \in S) + \delta(-w_i \in S) ) \\
 &\leq \frac{1}{2^n} |S| = 1. 
\end{split}
\end{equation}
The corresponding objective value is
\begin{equation}
 \tr(\rho Y) = \frac{1}{2^n}\sum_{i=1}^{4^n} \sgn\left(\tr(\rho w_i)\right) \tr(\rho w_i) = \sum_{i=1}^{4^n} \left| \frac{\tr(\rho w_i)}{2^n}\right| = \|p(\rho)\|_1,
\end{equation}
where $p(\rho) \in \R^{D^2}$ is the coefficient vector of $\rho$ in the Pauli basis, i.e.~$p(\rho)_i = 2^{-n}\tr(\rho w_i)$. The objective value yields a lower bound to the RoM of $\rho$. Note that this bound, also called \emph{st-norm} $\|\rho\|_\mathrm{st}$, was already found in \cite{howard_application_2017} with different techniques and gives the following lower bound on the RoM of $\ket{H}^{\otimes n}$ and $\ket{T}^{\otimes n}$:
\begin{equation}
 1.207^n \leq \RM(\ket{H}^{\otimes n}), \qquad 1.366^n \leq \RM(\ket{T}^{\otimes n}).
\end{equation}

\section{Symmetries of 3-designs}
\label{sec:designs}

In this section, we characterise the symmery group associated with the projectors of certain \emph{$t$-designs}.

A \emph{complex projective $t$-design} is a finite family $(\psi_i)_{i=1}^N$ of unit vectors in $\C^d$ such that 
\begin{align}
\label{eq:t-design}
  \frac{1}{N} \sum_{i=1}^N \ket{\psi_i}\bra{\psi_i}^{\otimes t} = \frac{\Sym_{[t]}}{D_{[t]}},
\end{align}
where 
\begin{align}
  \Sym_{[t]}: \; (\C^d)^{\otimes t} \longrightarrow (\C^d)^{\otimes t}, \qquad
  \Sym_{[t]}:=
  \frac{1}{t!} \sum_{\pi \in S_t} \pi
  \label{eq:symtwirl}
\end{align}
is the orthogonal projection onto the totally symmetric subspace $\Sym((\C^d)^{\otimes t})$. Furthermore, $D_{[t]}=\binom{d+t-1}{t}$ is its dimension and $\pi\in S_t$ acts by permuting the factors of the tensor product $(\C^d)^{\otimes t}$.
Taking a partical trace of Eq.~\eqref{eq:t-design} shows that a $t$-design is also a $t-1$ design. 

As in the main part of this paper, we denote by $H_d$ the real vector space of Hermitian $d \times d$ matrices with the induced Hilbert-Schmidt inner product $(A,B):=\tr(AB)$. With respect to this inner product, we denote by $L^\dagger$ the adjoint of a linear map $L:\,H_d\rightarrow H_d$ and call $L$ orthogonal if it preserves the inner product, or equivalently, if $L^\dagger = L^{-1}$.

\begin{theorem}
 \label{thm:design_syms}
  Let $(\psi_i)_{i=1}^N \subset \C^d$ be a set of unit vectors.
 Let $L\in\End(H_d)$ be a linear map on Hermitian operators that permutes the projectors $(\ket{\psi_i}\bra{\psi_i})_{i=1}^N$. 
 \begin{enumerate}
 	\item
	  If $(\psi_i)_{i=1}^N$ is a 1-design, then $L$ is unital (i.e.\ $L(\one)=\one$).
 	\item
	  If $(\psi_i)_{i=1}^N$ is a 2-design, then $L$ is orthogonal and trace-preserving.
 	\item
	  If $(\psi_i)_{i=1}^N$ is a 3-design, then $L$ is of the form $L= U\cdot U^\dagger$, where $U$ is either a unitary or an antiunitary operator on $\C^d$.
\end{enumerate}
\end{theorem}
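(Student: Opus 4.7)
\textbf{Parts (1) and (2): setup via partial trace.} My plan is to apply $L^{\otimes t}$ to both sides of the $t$-design identity \eqref{eq:t-design}. Since $L$ permutes the projectors, the left-hand side is invariant, so we obtain
\begin{equation*}
  L^{\otimes t}\bigl(\Sym_{[t]}\bigr) = \Sym_{[t]}.
\end{equation*}
For $t=1$, this reads $L(\one) = \one$, giving unitality. For $t=2$, write $\Sym_{[2]} = \tfrac{1}{2}(\one + F)$, where $F$ is the swap operator, to conclude $(L\otimes L)(F)=F$. Expanding $F = \sum_\alpha \sigma_\alpha \otimes \sigma_\alpha$ in an orthonormal basis $\{\sigma_\alpha\}$ of $H_d$ (the ``reproducing kernel'' representation of the inner product), I pair both sides against $A\otimes B$ to obtain
\begin{equation*}
  \tr\bigl(L^\dagger(A)\, L^\dagger(B)\bigr) = \tr(AB) \qquad \forall\, A,B\in H_d,
\end{equation*}
so $L^\dagger$, hence $L$, is orthogonal. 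Trace-preservation then follows from $\tr(L(A)) = \tr(L(A)L(\one)) = \tr(A)$.

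\textbf{Part (3): extracting the Jordan structure.} The 3-design condition gives $L^{\otimes 3}(\Sym_{[3]}) = \Sym_{[3]}$. The six permutations in $\Sym_{[3]}$ split into the identity, the three transpositions $F_{ij}$, and the two 3-cycles $P_{(123)}, P_{(132)}$. The identity and the transpositions are individually fixed by $L^{\otimes 3}$ (the latter from part (2), applied factor-wise with $L(\one)=\one$ on the spectator factor). Hence $L^{\otimes 3}$ fixes $P_{(123)}+P_{(132)}$. Pairing with $A\otimes B\otimes C$ and using $\tr(P_{(123)}\, A\otimes B\otimes C) + \tr(P_{(132)}\, A\otimes B\otimes C) = 2\,\Re\tr(ABC) = \tr(\{A,B\}C)$, where $\{\cdot,\cdot\}$ is the anticommutator, the argument of part~(2) applied to this trilinear form yields
\begin{equation*}
  \tr\!\bigl( L^\dagger(A)\, \{L^\dagger(B), L^\dagger(C)\} \bigr) = \tr(A\,\{B,C\}).
\end{equation*}
Combining this with the orthogonality of $L^\dagger$ established in part (2), non-degeneracy in $A$ then forces
\begin{equation*}
  L^\dagger\bigl(\{B,C\}\bigr) = \{L^\dagger(B), L^\dagger(C)\},
\end{equation*}
so $L^\dagger$ is a unital, trace-preserving Jordan automorphism of $H_d$ equipped with the Jordan product $A\circ B = \tfrac{1}{2}\{A,B\}$.

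\textbf{Final step and main obstacle.} I would conclude by invoking the classical theorem (essentially a Wigner-type result) that the Jordan automorphisms of the special Jordan algebra $H_d$ are exactly the maps $A\mapsto UAU^\dagger$ with $U$ unitary or antiunitary: every Jordan automorphism extends to an associative automorphism or anti-automorphism of $M_d(\C)$, and the latter are inner up to composition with transposition, which in turn corresponds to antiunitary conjugation. Taking adjoints then yields $L = U(\cdot)U^\dagger$ for a (possibly different) (anti)unitary $U$, as claimed. The main conceptual obstacle is this last step — the passage from ``preserves the anticommutator'' to ``(anti)unitary conjugation'' — which is not automatic from part (2) and requires the Jordan-automorphism classification; the rest of the argument is a mechanical unpacking of the invariance $L^{\otimes t}(\Sym_{[t]})=\Sym_{[t]}$.
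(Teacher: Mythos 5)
Your proposal is correct and follows essentially the same route as the paper: unitality from the first moment, orthogonality from the second moment via the swap trick (the paper phrases this as the projectors forming a tight frame), preservation of $\tr(ABC)+\tr(BAC)$ from the third moment, the resulting Jordan-automorphism identity $L(\{A,B\})=\{L(A),L(B)\}$, and finally the classification of Jordan automorphisms of $M_d(\C)$ as inner automorphisms or anti-automorphisms. The only cosmetic difference is that you package everything as invariance of $\Sym_{[t]}$ under $L^{\otimes t}$ and peel off the lower-order permutation operators, whereas the paper works directly with the invariant multilinear forms $\tr(\Sym_{[t]}\,A_1\otimes\cdots\otimes A_t)$; the content is identical.
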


\begin{proof}
  Define $\rho_i:=\ket{\psi_i}\bra{\psi_i}$.

  \emph{1.}---Using $\Sym_{[1]}=\one$, we get
  \begin{equation}
   L(\one) = \frac{d}{N} \sum_{i=1}^N L(\rho_i) = \frac{d}{N} \sum_{i=1}^N \rho_i = \one,
  \end{equation}
  hence $L$ is unital.

  \emph{2.}---Let us define the traceless operators operators $f_i:=\rho_i - \one/d$. Using the fact that $\{\psi_i\}_i$ forms a 2-design, Eq.~(\ref{eq:symtwirl}), and the ``swap trick''
  \begin{align}\label{eq:swap trick}
	\tr(A B)=\tr(A\otimes B\,\pi)
  \end{align} 
  valid for the non-trivial element $\pi$ of $S_2$, one verifies the following for any traceless Hermitian operator $A\in H_d^0$:
  \begin{equation}
   \begin{split}
    \frac{1}{N} \sum_{i=1}^N \left(f_i, A\right)^2 &= \frac{1}{N} \sum_{i=1}^N \left[ \tr(\rho_i A) - \frac 1 d \tr(A) \right]^2 \\
    &= \frac{1}{N} \sum_{i=1}^N \tr(\rho_i^{\otimes 2} A^{\otimes 2}) \\
    &= \frac{1}{D_{[2]}} \tr(\Sym_{[2]} A^{\otimes 2}) \\
    &= \frac{1}{2D_{[2]}} \left( \tr(A)^2 + \tr(A^2) \right) \\
    &= \frac{\|A\|_2^2}{2D_{[2]}}.
   \end{split}
  \end{equation}
  In other words, the operators $(f_i)$ form a tight frame for the subspace $H_d^0\subset H_d$ of traceless Hermitian matrices.
  Moreover, setting $f_0 = c \one$ with $c^2=\frac{N(1-d)}{d^2+d^3}$, a similar calculation shows that the set $\{f_0, \dots, f_N\}$ forms a tight frame for all of $H_d$.
  
  By \emph{1.}, $L$ is unital and thus permutes the tight frame $\{f_0, \dots, f_N\}$. However, any map permuting the elements of a tight frame is orthogonal. Finally, orthogonal and unital maps preserve the trace:
 \begin{equation}
   \tr L(A) = \tr\one L(A) = \tr L^\dagger(\one) A = \tr L^{-1}(\one) A =  \tr \one A = \tr A.
 \end{equation}
 
 \emph{3.}---Consider the following trilinear function on $H_d$:
\begin{equation}
 F(A,B,C) := \frac{1}{N}\sum_{i=1}^N\tr(A\otimes B\otimes C\,\rho_i^{\otimes 3}).
\end{equation}
$F$ is invariant under $L$ since $L^\dagger = L^{-1}$ is also a symmetry of the projectors $\rho_i$:
\begin{equation}
\begin{split}
 F(L(A),L(B),L(C)) &= \frac{1}{N}\sum_{i=1}^N\tr\left( L(A)\otimes L(B)\otimes L(C)\,\rho_i^{\otimes 3} \right) \\
    &= \frac{1}{N}\sum_{i=1}^N\tr A\otimes B\otimes C\,\left(L^\dagger(\rho_i)\right)^{\otimes 3} \\
    &= \frac{1}{N}\sum_{i=1}^N\tr A\otimes B\otimes C\,\rho_i^{\otimes 3} \\
    &= F(A,B,C).
\end{split}
\end{equation}
We can explicitely evaluate $F$ by expanding $\Sym_{[3]}$ in terms of permutations and arguing as in Eq.~(\ref{eq:swap trick}).
This yields
\begin{equation}
\begin{split}
 F(A,B,C) &= \tr \left(A\otimes B\otimes C\,\frac{1}{N}\sum_{i=1}^N\rho_i^{\otimes 3} \right) \\
	&= \frac{1}{D_{[3]}} \tr \left(A\otimes B\otimes C\,\Sym_{[3]} \right) \\
    &=  \frac{1}{6 D_{[3]}} \Big( \tr(A)\tr(B)\tr(C) + \tr(A)\tr(BC) + \\ 
    & \qquad + \tr(AB)\tr(C) + \tr(AC)\tr(B) + \tr(ABC) + \tr(BAC) \Big).
\end{split}
\end{equation}
The first four terms are individually $L$-invariant since $L$ preserves the trace and is orthogonal. 
Hence, the $L$-invariance of $F$ implies
\begin{equation}
\begin{split}
  \tr(ABC) + \tr(BAC) 
  &= \tr L(A)L(B)L(C) + \tr L(B) L(A) L(C) \\
    &= \tr L^\dagger\big(L(A)L(B)\big)C + \tr L^\dagger\big( L(B) L(A) \big) C.
\end{split}
\end{equation}
Since this holds $\forall C\in H_d$, we get
\begin{align}
  AB + BA &= L^\dagger\big(L(A)L(B)\big) + L^\dagger\big( L(B) L(A) \big) \nonumber\\
 \Leftrightarrow L(AB + BA) &= L(A)L(B) + L(B)L(A) \nonumber\\
  \Leftrightarrow L(\{A,B\}) &= \{L(A),L(B)\}. \label{eq:jordan}
\end{align}
A linear automorphism on a matrix algebra fulfilling (\ref{eq:jordan}) is called a \emph{Jordan automorphism}. 
Our goal is to apply a known structure theorem that restricts that form of such maps \cite{herstein_1956}.
For the theorem to be applicable, we have to extend $L$ from a map on the real vector space of Hermitian matrices, to a map on the algebra $M_d(\C)$ of all matrices.
To this end, we use that every $A\in M_d(\C)$ can be written uniquely as $A = A_1 + i A_2$ where $A_{1,2}\in H_d$, and set
\begin{equation}
 \hat L(A) := L(A_1) + i L(A_2) \in M_d(\C).
\end{equation}
Clearly, this continuation yields a linear automorphism on $M_d(\C)$. Morover, since the anticommutator $\{\cdot,\cdot\}$ is bilinear, we get $\forall A,B\in M_d(\C)$:
\begin{equation}
 \hat L(\{A,B\}) = \{\hat L(A),\hat L(B)\},
\end{equation}
\ie~the continuation $\hat L$ to $M_d(\C)$ is a Jordan automorphism. 
It is also straightforward to check that orthogonality of $L$ implies that $\hat L$ is unitary with respect to the trace inner product.

It is known that every Jordan automorphism is either an algebra automorphism or algebra anti-automorphism \cite{herstein_1956}. Since every algebra automorphism is inner and $\hat L$ is unitary, $\hat L$ (and thus also $L\equiv\hat L|_{H_d}$) can in the first case be written as $\hat L=U\cdot U^\dagger$ for some $U\in U(d)$. In the second case, we can write $\hat L$ as a composition $\hat L=\hat L' \circ T$, where $\hat L'=U\cdot U^\dagger$ is an algebra automorphism and $T$ is the transposition map. For every Hermitian matrix, transposition coincides with complex conjugation as $A^T = (A^\dagger)^* = A^*$. Hence, we can write $L=U\mathcal{C}\cdot\mathcal{C}U^\dagger$, where $U\in U(d)$ and $\mathcal C$ is complex conjugation on $\C^d$. Hence, $L$ is in this case given by conjugation with the anti-unitary operator $U\mathcal{C}$.
\end{proof}

Since the qubit stabiliser state vectors in Hilbert space form a complex projective 3-design \cite{zhu_multiqubit_2017,kueng_qubit_2015,webb_clifford_2016}, we get the following corollary:

\stabsym*

\begin{proof}
  Theorem \ref{thm:design_syms} implies that every qubit stabiliser symmetry is given by conjugation with either an unitary or anti-unitary operator on the Hilbert space $\C^{2^n}$. 
 
  Theorem 2 in \cite{cormick_classicality_2006} implies that every unitary operator that preserves the set of stabiliser states is an element of the Clifford group, up to a global phase. 
 
 
 Furthermore, note that complex conjugation $\mathcal C$ preserves the set of stabiliser states. Thus, if $A$ is an anti-unitary operator preserving this set, $\mathcal{C}A$ is a perserving unitary operator. Hence, up to a phase, $\mathcal{C}A$ is Clifford and thus $A$ is anti-Clifford. Finally, this implies our claim that $\Aut(\SP_n)=\Ad(\ECl_n)$
\end{proof}

We note that the result is in general wrong for stabiliser states on odd-dimensional qudits.
This also means that the third conclusion of Thm.~\ref{thm:design_syms} is not in general true for 2-designs.
Concretely, take $(\psi_i)_i$ to be the set of stabiliser state vectors for $\C^d$, with $d$ a prime number larger than or equal to 5.
Then $(\psi_i)_{i=1}^N$ is a 2-design, but the group of linear symmetries of $\{\ket{\psi_i}\bra{\psi_i}\}_i$ contains maps that cannot be represented by a linear or anti-linear operator on $\C^d$.

\begin{proof}[Sketch of proof]
We sketch the proof of this claim in the language of \cite{gross_hudsons_2006}.
With each $a\in\Z_d^2$, one can associate a \emph{phase space point operator} $A(a)$. 
The $\{A(a)\}_a$ form a basis for $H_d$. 
The finite general linear group $GL(\Z_d^2)$ acts on this basis by permuting the indices $g\,A(a)=A(g\,a)$.
The expansion coefficients $W_\rho(a)$ of an operator $\rho$ with respect to the phase space point basis are the \emph{Wigner function} of the operator.
The stabiliser state $\rho_i=\ket{\psi_i}\bra{\psi_i}$ are exactly the set of Hermitian operators whose Wigner function is the indicator function of an affine line in $\Z_d^2$ \cite{gross_hudsons_2006}.
Clearly, the $GL(\Z_d^2)$-action introduced above preserves the set of affine lines and thus permutes the $\rho_i$.
As argued in the proof of Corollary~\ref{cor:stabsymmetries}, the group of (anti-)linear operators acting on the state vectors $\psi_i$ is the extended Clifford group $\ECl_n$.
To each $U$ in $\ECl_n$, one can associate a $g\in \Z_d^2$ such that $U A(a) U^{-1} = A(g\,a)$. 
But $g$'s that arise this way have determinant $\det g=1\mod d$ (if $U$ is unitary) or $\det g = -1\mod d$ (if $U$ is anti-unitary) \cite{appleby_symmetric_2005}. 
The claim follows,
as for $d\geq 5$, there are elements $g\in\GL(\Z_d^2)$ with determinant different from $\pm 1$.
\end{proof}

\section{Numerical implementation}
\label{sec:numerics}


Based on the discussion in Sec.~\ref{sec:id_symmetries}, we can construct a generic algorithm for generating projected stabiliser states by calling various oracles. $\Call{GraphRepresentatives}{n}$ generates suitable representatives of graph states. Here, these are given by \emph{connected} representatives of $\graph(n)/\sim_{LC,S_n}$ which were classified by \textcite{danielsen_classification_2006} up to 12 qubits and can by found in Ref.~\cite{danielsen_website}. $\Call{GeneratorMatrix}{G}$ computes the binary generator matrix of the graph state $\ket G$. Furthermore, $\Call{LocalSymplectic}{n,G}$ returns the set of local symplectic matrices, ideally up to the considered symmetry group. For the discussed cases in Sec.~\ref{sec:id_symmetries}, this is either the set of direct sums of $\{\one, \hat H, \hat H \hat S\}$ or $\{\one, \hat S\}$ up to the symmetry of the graph $G$. Finally, \Call{ProjectState}{$M',s$} and \Call{ProductState}{$v_1,\dots,v_k$} basically evaluate the weight enumerator formulas \eqref{eq:wt_enumerator} and \eqref{eq:wt_prod_states}.

\begin{algorithm}
 \caption{Algorithm for generating vertices of the projected stabiliser polytope}
 \label{alg:generation}
 \begin{algorithmic}
  \Require Maximum number of qubits $n_\mathrm{max}\geq 1$, set of vertices $\mathcal V_n$
  \For{$n=1,\dots,n_\mathrm{max}$}
    \For{$G \in \Call{GraphRepresentatives}{n}$}
        \State $M\gets \Call{GeneratorMatrix}{G}$
        \For{$S \in \Call{LocalSymplectic}{n,G}$}
            \State $M'\gets S \cdot M$
            \For{$s \in \{-1,1\}^{\times n}$}
                \State Add \Call{ProjectState}{$M',s$} to $\mathcal V_n$
            \EndFor
        \EndFor
    \EndFor
    \For{$(i_1,\dots,i_k) \in \Call{Partitions}{n}$}
        \For{$v_1 \in \mathcal V_{i_1},\dots, v_k\in \mathcal V_{i_k}$}
            \State Add \Call{ProductState}{$v_1,\dots,v_k$} to $\mathcal V_n$.
        \EndFor
    \EndFor
  \EndFor
 \end{algorithmic}
\end{algorithm}

Furthermore, we use a output-sensitive algorithm by \textcite{dula_new_1996} to compute the extremal points of the projected stabiliser polytope. This algorithm has time complexity $O(d N m)$ where $d$ is the dimension, $N$ the input size and $m$ the output size, \ie~the number of extremal points. It performs way better than a naive approach since the input size $N=O(2^{n^2})$ is much larger than the number of extremal points $m=O(2^n)$.

\section{Symmetry reduction of convex optimisation problems}
\label{sec:symmetry_red_app}

\subsection{Group projections}

The central tool for performing a symmetry reduction with respect to some (finite) group $G$ is the so-called \emph{$G$-projection}. Suppose $g$ is represented by $\rho:\,G\rightarrow \Gl(V)$ on a (real or complex) vector space $V$, we define a linear map $\Pi_G:\,V \rightarrow V$, the \emph{$G$-projection}, by
\begin{equation}
  \label{eq:G_twirl}
  \Pi_G := \frac{1}{|G|} \sum_{g\in G} \rho(g).
\end{equation}
The $G$-projection is well known in the representation theory of finite groups. In the physics literature, it is often called a \emph{twirl} or \emph{twirling operation}. Thus, we will also sometimes refer to it as $G$-twirl. 
For reference, we state some elementary properties of these maps without proof.

\begin{proposition}[Properties of $G$-projections]
\label{prop:G_twirl}
 Let $\Pi_G:\,V \rightarrow V$ be a $G$-projection. Then the following holds:
 \begin{enumerate}
  \item 
  $\Pi_G$ is a projection operator. 
  Its image is the subspace $V^G$ of fixed points of $G$. 
  \item If $V$ is an inner product space and $\rho$ is an orthogonal/unitary representation, then the projection is orthogonal/unitary.
  \item For all $x\in V$, it holds that
    \begin{equation}
     \Pi_G(x) = \frac{1}{|G\cdot x|} \sum_{y \in G\cdot x} y.
    \end{equation}
  \item $\Pi_G$ is constant on every orbit in $V/G$
  \item If $G=N \rtimes H$, then $\Pi_G = \Pi_N \circ \Pi_H =  \Pi_H  \circ \Pi_N$.
 \end{enumerate}
\end{proposition}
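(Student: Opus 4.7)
The central observation is the translation-invariance of the averaging sum: for any fixed $h\in G$, reindexing via $g\mapsto hg$ (resp.\ $g\mapsto gh$) gives $\rho(h)\,\Pi_G = \Pi_G\,\rho(h) = \Pi_G$. This single identity already does most of the work for (1): composing $\Pi_G$ with itself on the right yields $\Pi_G^2 = \Pi_G$, so $\Pi_G$ is idempotent. It also shows that any image vector $\Pi_G(x)$ is fixed by every $\rho(h)$, so $\im(\Pi_G) \subseteq V^G$. Conversely, if $x\in V^G$ then every term in the average equals $x$, so $\Pi_G(x) = x$; hence $V^G \subseteq \im(\Pi_G)$ and the two coincide. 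For (2), an orthogonal/unitary representation satisfies $\rho(g)^\dagger = \rho(g^{-1})$; substituting $g\mapsto g^{-1}$ in the defining sum gives $\Pi_G^\dagger = \Pi_G$, so $\Pi_G$ is a self-adjoint idempotent, i.e.\ an orthogonal projection in the appropriate real/complex sense.

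For (3), I would invoke the orbit--stabiliser theorem. Writing $S_x = \{g\in G : \rho(g)x = x\}$ for the stabiliser of $x$, each orbit element $y = \rho(g)x$ is produced by exactly $|S_x|$ distinct group elements (the full coset $g\,S_x$). Grouping the sum $\sum_g \rho(g)(x)$ by orbit element and using $|G| = |S_x|\cdot|G\cdot x|$ yields the claimed formula. Claim (4) is then an immediate corollary of the invariance identity from (1): if $y = \rho(h)(x)$, then $\Pi_G(y) = \Pi_G\rho(h)(x) = \Pi_G(x)$.

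For (5), I would exploit the fact that in a semi-direct product every $g\in G$ has a unique decomposition $g = nh$ with $n\in N$, $h\in H$, so that the sum over $G$ factorises as a double sum over $N$ and $H$. Pulling the factors apart in the defining average gives $\Pi_G = \Pi_N\circ\Pi_H$ directly. For the reverse factorisation $\Pi_G = \Pi_H\circ\Pi_N$, I would either use the alternative unique decomposition $g = h\,n'$ with $n' = h^{-1}nh\in N$ (which is available precisely because $N$ is normal in $G$), or observe that $\Pi_N$ commutes with every $\rho(h)$ for $h\in H$ by the $G$-invariance of the average $\Pi_N$ itself.

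None of the individual steps poses a genuine obstacle; the proof reduces to the translation-invariance identity, orbit--stabiliser counting, and the structure theory of semi-direct products. The only point requiring attention is tracking the use of normality of $N$ when reversing the order of $\Pi_H$ and $\Pi_N$ in (5)---without normality, the alternative decomposition $g = h n'$ would not stay inside $N \times H$ and the second equality would fail.
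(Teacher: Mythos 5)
Your proof is correct in all five parts; note that the paper deliberately states this proposition \emph{without} proof (``For reference, we state some elementary properties of these maps without proof''), so there is nothing to compare against, and your arguments---the translation-invariance identity $\rho(h)\Pi_G=\Pi_G\rho(h)=\Pi_G$, orbit--stabiliser counting for (3), and the unique factorisation $g=nh$ (together with normality of $N$ for the reversed order) for (5)---are exactly the standard ones. The only point worth flagging is interpretive: in (2) the word ``orthogonal/unitary'' must be read as ``orthogonal projection'' (\ie\ self-adjoint idempotent), since a genuine orthogonal or unitary map is invertible and the only invertible projection is the identity; your computation $\Pi_G^\dagger=\frac{1}{|G|}\sum_g\rho(g^{-1})=\Pi_G$ establishes precisely this intended reading.
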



\subsection{Symmetries in convex optimisation}

A convex optimisation is the problem of minimising a convex function $F$ over a convex set $\mathcal{X}$. It can always be rewritten in standard form as follows: Let $F:\,\R^N\rightarrow\R$ be a convex function and $C:\,\R^N\rightarrow \R^K$ be a (generalised) convex function with respect to the component-wise partial order $\preceq$ on $\R^K$, \ie~every component of $C$ is convex. Furthermore, let $A:\,\R^N\rightarrow \R^M$ be an affine function. The problem is defined as \cite{boyd_2009}
\begin{equation}
\begin{split}
  \textbf{Minimise} \quad & F(x), \quad \text{for } x \in \R^N \\
  \textbf{subject to}\quad & A(x) = 0, \\
                    \quad & C(x) \preceq 0.
\end{split}
\label{eq:conv_opt}
\end{equation}
Here, the function $F$ is called the \emph{objective function} and the functions $C$ and $A$ are the (in-)equality \emph{constraints}. Depending on the convex set that is modelled, one distinguishes between many subclasses such as linear, conic or semi-definite programming.

We call $G$ a symmetry of the problem \eqref{eq:conv_opt}, if it acts on $\R^N$ such that the feasible set
\begin{equation}
 \mathcal{X} = \{ x \in \R^N \; | \; A(x) = 0, \; C(x) \preceq 0 \},
\end{equation}
and the objective function $F$ are left invariant. In particular, this will be the case if $G$ acts linearly on all vector spaces such that the objective function is $G$-invariant and the constraints are $G$-equivariant, \ie~for all $x\in \R^N$ and $g\in G$ it holds
\begin{equation}
\begin{split}
 F(g\cdot x) &= F(x), \\
 A(g\cdot x) &= g \cdot A(x), \\
 C(g\cdot x) &= g \cdot C(x).
\end{split}
\label{eq:conv_opt_equivariance}
\end{equation}
Again, note that the $G$-action is different on the left and right hand side. Additionally, for $G$ to be a proper symmetry, we require that its representation on $\R^K$ is given by \emph{order automorphisms}, \ie~
\begin{equation}
 p \preceq q \quad \Longleftrightarrow \quad g\cdot p \preceq g\cdot q \quad \forall p,q\in\R^K,\;g\in G
\end{equation}
Consequently, both the inequality $C(g\cdot x)\preceq 0$ and the equality constraint $A(g\cdot x)=0$ are fulfilled if and only if they hold for $x$. Hence, $x\in \R^N$ is a feasible solution of Eq.~\eqref{eq:conv_opt} iff its orbit is feasible. Moreover, the objective function is constant on every orbit and thus any optimal solution $x^*$ will have an orbit of optimal solutions. 

The key point for the simplification of the problem is that all functions are \emph{convex} ($A$ is even affine). Let us again slightly abuse notation and denote with
\begin{equation}
 \Pi_G = \frac{1}{|G|} \sum_{g\in G} g,
\end{equation}
all $G$-projections on the respective spaces. Using this, we will derive two important consequences of $G$-equivariance of $A$ and $C$. First, we evaluate the affine function $A$:
\begin{equation}
\label{eq:APiG}
  A\circ\Pi_G(x) = \frac{1}{|G|} \sum_{g\in G} A\circ g(x) = \frac{1}{|G|} \sum_{g\in G} g \circ A(x) = \Pi_G \circ A(x).
\end{equation}
Recall that $C$ is convex w.r.t.~to the component-wise order $\preceq$ and that every $g\in G$ preserves this order. Thus, $\Pi_G$ preserves order, too, and it follows:
\begin{equation}
\label{eq:CPiG}
  C\circ\Pi_G(x) \preceq \frac{1}{|G|} \sum_{g\in G} C\circ g(x) = \frac{1}{|G|} \sum_{g\in G} g \circ C(x) \preceq \Pi_G \circ C(x).
\end{equation}
Suppose $x$ is a feasible solution, then by these relations, its $G$-projection $x^G=\Pi_G(x)$ is feasible, too. Following the same argument as above, we get $F(x^G)\leq F(x)$. Finally, we find the following results:

\begin{lemma}
 Every $G$-symmetric convex optimisation problem has a $G$-invariant optimal solution.
\end{lemma}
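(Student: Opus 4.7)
The plan is to start from any optimal solution $x^\ast$ and show that its $G$-projection $x^G := \Pi_G(x^\ast)$ is still feasible and still optimal. The whole argument is essentially a bookkeeping exercise combining the two identities~\eqref{eq:APiG} and \eqref{eq:CPiG} already established together with convexity of $F$ and its $G$-invariance.

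First I would verify feasibility of $x^G$. For the affine constraint, \eqref{eq:APiG} gives $A(x^G) = A\circ\Pi_G(x^\ast) = \Pi_G\circ A(x^\ast) = \Pi_G(0) = 0$, since $\Pi_G$ is linear and fixes the origin. For the inequality constraint, \eqref{eq:CPiG} gives $C(x^G) = C\circ\Pi_G(x^\ast) \preceq \Pi_G\circ C(x^\ast)$; then, using that the $G$-representation on $\R^K$ consists of order automorphisms (so $C(x^\ast)\preceq 0$ implies $g\cdot C(x^\ast)\preceq 0$ for every $g$), the average $\Pi_G\circ C(x^\ast)$ is a convex combination of nonpositive elements and hence itself $\preceq 0$. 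Thus $C(x^G)\preceq 0$ and $x^G$ is feasible.

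Next I would compare objective values. By convexity of $F$ and $G$-invariance $F(g\cdot x^\ast) = F(x^\ast)$,
\begin{equation*}
  F(x^G) = F\!\left(\frac{1}{|G|}\sum_{g\in G} g\cdot x^\ast\right) \leq \frac{1}{|G|}\sum_{g\in G} F(g\cdot x^\ast) = F(x^\ast).
\end{equation*}
Since $x^\ast$ was optimal and $x^G$ is feasible, equality must hold, so $x^G$ is itself an optimal solution. Finally, $x^G\in V^G$ by Proposition~\ref{prop:G_twirl}(1), so $x^G$ is the desired $G$-invariant optimal solution.

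There is essentially no obstacle beyond assuming that an optimal solution exists in the first place; this is automatic when the feasible set is compact, and otherwise one can phrase the statement as: whenever an optimal solution exists, a $G$-invariant optimal solution exists. A small subtlety worth flagging is that the inequality $C(x^G)\preceq \Pi_G(C(x^\ast))$ together with $\Pi_G(C(x^\ast))\preceq 0$ uses the order-automorphism assumption twice — once to say $g\cdot 0 = 0$ and once to say the average of nonpositive vectors is nonpositive — but both are immediate from linearity and the hypotheses on the $G$-action on $\R^K$.
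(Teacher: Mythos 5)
Your proposal is correct and follows essentially the same route as the paper: project the optimal solution with $\Pi_G$, check feasibility via Eqs.~\eqref{eq:APiG} and \eqref{eq:CPiG}, and use convexity plus $G$-invariance of $F$ to get $F(\Pi_G(x^\ast))\leq F(x^\ast)$. The paper's proof is just a one-line condensation of this argument, leaning on the preceding discussion for the feasibility step that you spell out explicitly.
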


\begin{proof}
 Be $x^*$ a optimal solution, then $\Pi_G(x^*)$ is $G$-invariant, feasible and $F(\Pi_G(x^*)) \leq F(x^*)$. Hence, $\Pi_G(x^*)$ is optimal, too.
\end{proof}

\begin{theorem}[Symmetry reduction of convex optimisation problems]
\label{thm:sym_red_co}
 The convex optimisation problem \eqref{eq:conv_opt} with symmetry group $G$ is equivalent to the following, \emph{symmetry-reduced} convex optimisation problem: 
 \begin{equation}
    \begin{split}
    \textbf{Minimise} \quad & F^G(x), \quad \text{for } x \in X^G \\
    \textbf{subject to}\quad & A^G(x) = 0, \\
                        \quad & C^G(x) \preceq 0.
    \end{split}
    \label{eq:conv_opt_red}
 \end{equation}
 With $F^G:\,X^G\rightarrow \R$, $A^G:\,X^G\rightarrow Y^G$ and $C^G:\,X^G\rightarrow Z^G$ being functions such that
 \begin{equation}
 \label{eq:red_funcs}
  F^G\circ \Pi_G = F, \qquad A^G\circ \Pi_G = \Pi_G \circ A, \qquad C^G\circ \Pi_G = \Pi_G \circ C,
 \end{equation}
 and $X^G$, $Y^G$, $Z^G$ being the $G$-invariant subspace of $X=\R^N$, $Y=\R^M$ and $Z=\R^K$.
\end{theorem}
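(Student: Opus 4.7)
The plan is to use the preceding lemma to restrict attention to the invariant subspace $X^G$ and then show that, on this subspace, the feasibility conditions and the objective value of the reduced problem \eqref{eq:conv_opt_red} agree with those of the original problem \eqref{eq:conv_opt}. First I would interpret \eqref{eq:red_funcs} on $X^G$: since $\Pi_G$ acts as the identity on its image, the relations read $F^G(x) = F(x)$, $A^G(x) = \Pi_G A(x)$ and $C^G(x) = \Pi_G C(x)$ for every $x \in X^G$, which is all that is needed for the symmetry reduction.

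Next I would verify that, for $x \in X^G$, the feasibility conditions match. For the equality constraint, applying \eqref{eq:APiG} at a $G$-invariant $x$ gives $A(x) = A(\Pi_G(x)) = \Pi_G A(x) = A^G(x)$, so in particular $A(x) \in Y^G$ and $A(x) = 0$ if and only if $A^G(x) = 0$. For the inequality constraint, one direction follows from \eqref{eq:CPiG}: $C(x) = C(\Pi_G(x)) \preceq \Pi_G C(x) = C^G(x)$, so $C^G(x) \preceq 0$ implies $C(x) \preceq 0$. The reverse direction uses the assumption that each $g \in G$ acts on $\R^K$ as an order automorphism: from $C(x) \preceq 0$ we obtain $g \cdot C(x) \preceq g \cdot 0 = 0$ for every $g$, and averaging yields $\Pi_G C(x) = C^G(x) \preceq 0$. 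Combined with $F(x) = F^G(x)$ on $X^G$, this shows that \eqref{eq:conv_opt_red} and the restriction of \eqref{eq:conv_opt} to $X^G$ have identical feasible sets and identical objective values; the preceding lemma then guarantees that the optimum of the original problem is already attained on $X^G$, which completes the equivalence.

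The main delicate point is the backward implication for the inequality constraint, which relies crucially on the standing hypothesis that $G$ acts on $\R^K$ by order-preserving linear maps. Without it, averaging could destroy the sign pattern and the reduction would break. All other steps are direct consequences of the equivariance identities \eqref{eq:APiG}--\eqref{eq:CPiG} and the definition of $\Pi_G$ as an idempotent averaging map onto the invariant subspace.
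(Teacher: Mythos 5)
Your proposal is correct and follows essentially the same route as the paper's proof: restrict to $X^G$ via the lemma on $G$-invariant optimal solutions, then use the equivariance identities \eqref{eq:APiG}--\eqref{eq:CPiG} together with the order-preservation of $\Pi_G$ to show the feasible sets and objective values coincide there. The only item the paper includes that you omit is the routine verification that $F^G$, $C^G$ are convex and $A^G$ is affine (so that the reduced problem is itself a convex program), which is a well-posedness check rather than part of the equivalence argument.
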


\begin{proof}
 First, it should be clear that the functions $F^G$, $A^G$ and $C^G$ exist and are well-defined by Eq.~\eqref{eq:red_funcs}. Moreover, we compute for $x,y\in X^G$ and $t\in[0,1],s\in\R$:
 \begin{equation}
  \begin{split}
   F^G( t x + (1-t) y ) &= F( t x + (1-t) y ) \\ 
        &\leq t F(x) + (1-t) F(y) \\
        &= t F^G(x) + (1-t) F^G(y), \\
   A^G( s x + (1-s) y ) &= \Pi_G\circ A( s x + (1-s) y ) \\ 
        &=  s \Pi_G \circ A(x) + (1-s) \Pi_G\circ A(y) \\
        &= s A^G(x) + (1-s) A^G(y), \\
   C^G( t x + (1-t) y ) &= \Pi_G\circ C( t x + (1-t) y ) \\
        &\preceq  t \Pi_G \circ C(x) + (1-t) \Pi_G\circ C(y) \\
        &= t C^G(x) + (1-t) C^G(y).
  \end{split}
 \end{equation}
 Hence, $F^G$ and $C^G$ are convex and $A^G$ as an affine function.

 Suppose $x\in \R^N$ is a feasible solution of the original problem \eqref{eq:conv_opt} which can be assumed to be $G$-invariant, \ie~$x\in X^G$. It will be feasible for the reduced problem since
 \begin{equation}
  \begin{split}
   A^G(x) &= A^G\circ\Pi_G(x) = \Pi_G\circ A(x) = 0, \\
   C^G(x) &= C^G\circ\Pi_G(x) = \Pi_G\circ C(x) \preceq 0,
  \end{split}
 \end{equation} 
 and $F^G(x)=F(x)$. 
 
 Next, suppose $x^G$ is feasible for the reduced problem. By the same line of argumentation we get due to Eq.~\eqref{eq:APiG}:
 \begin{equation}
   0 = A^G(x^G) = A^G\circ\Pi_G(x) = \Pi_G\circ A(x) = A\circ\Pi_G(x) = A(x).
 \end{equation} 
 In the same fashion, we compute using Eq.~\eqref{eq:CPiG}:
 \begin{equation}
   0 \succeq C^G(x^G) = C^G\circ\Pi_G(x) = \Pi_G\circ C(x) \succeq C\circ\Pi_G(x) = C(x).
 \end{equation} 
 Hence, $x^G$ is feasible for the original problem and $F^G(x^G)=F^G\circ\Pi_G(x^G) = F(x^G)$.
 
 Finally, this implies that the optimal objective values have to agree: Suppose $x^*$ and $x^G_*$ are ($G$-invariant) optimal solutions for the original and the reduced problem, respectively. Then, $F^G(x^*)=F(x^*)$ and $F(x^G_*)=F^G(x^G_*)$. But since both $x^*$ and $x^G_*$ are feasible for both problems, $F(x^G_*)\neq F(x^*)$ would be a contradiction to the optimality of the solutions.

\end{proof}

\subsection{Affine constraints and symmetries}

In the remainder of this work, both $A$ and $C$ will be affine maps and originate from a set of points $\mathcal{V}$ that span a polytope $\mathcal{P}$. The symmetry group $G$ leaves $\mathcal{P}$ invariant and hence introduces permutations on $\mathcal{V}$. This will lead naturally lead to $G$-equivariance of these functions, as we will see in the following.

To simplify the discussion, we will focus on the function $A$. We can write the affine function $A$ as
\begin{equation}
 A(x) = \sum_{i=1}^N x_i v_i + v_0
\end{equation}
Here, $\mathcal{V}:=\{v_1,\dots,v_N\}\subset Y$ are the column vectors of the matrix representing the linear part of $A$ and $v_0$ is its affine part. Suppose $G$ is represented on $Y$ such that it leaves the set $\mathcal{V}$ invariant and fixes $v_0$\footnote{In general, $G$-equivariance requires that the action preserves the range of $A$ which is a weaker condition.}. Hence, it can by identified with the left action of some subgroup of the symmetric group $S_N$ on the index set $[N]=\{1,\dots,N\}$ via $g\cdot y_i =: y_{\pi_g(i)}$ for some $\pi_g\in S_N$. We can associate a right action on $X$ with this left action by $(x\cdot g)_i := x_{\pi_g^{-1}(i)}$. This action is clearly linear and such that for all $g\in G$:
\begin{equation}
 \sum_{i=1}^N x_i\, (g\cdot v_i) = \sum_{i=1}^N x_i \, v_{\pi_g(i)} = \sum_{i=1}^N x_{\pi_g^{-1}(i)}\, v_i = \sum_{i=1}^N (x\cdot g)_i\, v_i.
\end{equation}
In particular, the function $A$ is $G$-equivariant:
\begin{equation}
 g\cdot A(x) = \sum_{i=1}^N x_i\,(g\cdot v_i) + g\cdot v_0 =\sum_{i=1}^N (x\cdot g)_i\, v_i + v_0 = A(x\cdot g).
\end{equation}
To make use of Thm.~\ref{thm:sym_red_co}, we have to compute the function $A^G$. Note that $\Pi_G$ is constant on the every orbit $O\in [N]/G$ and hence $\Pi_G(v_j)=: w_O$ for all $j \in O$:
\begin{equation}
\begin{split}
 \Pi_G\circ A(x) &= \sum_{i=1}^N x_i \, \Pi_G(v_i) + v_0 \\
            &= \sum_{O\in [N]/G}\sum_{j\in O} x_j \, \Pi_G(v_j) + v_0 \\
            &= \sum_{O\in [N]/G} \left(\sum_{j\in O} x_j\right) w_O + v_0 \\
            &= \sum_{O\in [N]/G} y_O\, w_O + v_0,
\end{split}
\end{equation}
where in the last step we set $y_O = \sum_{j\in O} x_j$. Finally, we have to turn this into a map on $X^G$. Note that the right permutation action of $G$ on $X=\R^N$ partitions the standard basis $\{e_1,\dots,e_N\}$ into $L$ orbits $O_1,\dots,O_L$ corresponding to $[N]/G$. Next, the linear spans $X_j=\langle O_j \rangle$ of these orbits provide a decomposition of $X=\bigoplus_j X_j$ and $G$ acts transitively on every orbit. Hence, $\Pi_G(X_j)$ is one-dimensional and $\Pi_G(X)=\bigoplus_j \Pi_G(X_j)$ due to linearity. This implies that $\dim X^G = L = |[N]/G|$. Hence, the $y_O$ are the components of a vector $y\in X^G$ w.r.t.~the basis $\tilde{e}_O = \sum_{j\in O}e_j$. Note that if we normalise that basis as $e_O = \frac{1}{|O|}\tilde{e}_O$, then the new components are $\bar x_O = \frac{1}{|O|}\,y_O$, which are exactly the components of $\Pi_G(x)$. Hence, the induced map on $X^G$ is
\begin{equation}
 A^G(x) = \sum_{j=1}^L x_j w_j + v_0.
\end{equation}

As stated in the beginning of this subsection, the points $\mathcal{V}$ are the extremal points of a polytope $\Po$ and $G$ is as subgroup of the polytope symmetries $\Aut(\Po)$. We saw that the symmetry reduction corresponds to projecting the vertices of the polytope, and hence the polytope itself, onto the $G$-invariant subspace. This is equivalent to taking its intersection with this subspace as the following lemma states:

\begin{lemma}[Projection with Polytope Symmetries]
\label{lem:poly_proj}
 Be $G < \Aut(\Po)$ a subgroup. Then, the $G$-projection of $\Po$ is contained in $\Po$, $\Pi_G(\Po) \subset \Po$. More precisely, $\Pi_G(\Po) = \Po \cap X^G$.
\end{lemma}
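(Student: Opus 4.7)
The plan is to prove the two inclusions separately, each of which follows directly from two basic facts: that $\Pi_G$ is the average of the $G$-action on $X$, and that elements of $X^G$ are precisely the fixed points of $\Pi_G$ (Proposition~\ref{prop:G_twirl}, item~1).

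For the forward inclusion $\Pi_G(\Po) \subset \Po \cap X^G$, I would pick an arbitrary $p\in\Po$. Since $G$ acts on $X$ preserving $\Po$ (as $G<\Aut(\Po)$), every $g\cdot p$ lies in $\Po$. Then $\Pi_G(p) = \frac{1}{|G|}\sum_{g\in G} g\cdot p$ is a convex combination of elements of $\Po$, and therefore lies in $\Po$ by convexity. Since the image of $\Pi_G$ is exactly $X^G$, we also have $\Pi_G(p)\in X^G$, giving $\Pi_G(p) \in \Po \cap X^G$.

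For the reverse inclusion $\Po \cap X^G \subset \Pi_G(\Po)$, I would take $p \in \Po \cap X^G$. Being in $X^G$ means $p$ is fixed by every $g\in G$, so $\Pi_G(p) = p$. Since $p\in\Po$, this exhibits $p$ as the $G$-projection of an element of $\Po$, i.e.\ $p\in \Pi_G(\Po)$.

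I do not foresee any real obstacle: the lemma is essentially a restatement of the fact that averaging over a group action that preserves a convex set lands inside that set, combined with the idempotence of $\Pi_G$. The only point worth a brief sentence is the appeal to convexity of $\Po$ and to item~1 of Proposition~\ref{prop:G_twirl} to identify $\im\Pi_G = X^G$; everything else is immediate.
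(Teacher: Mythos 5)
Your proof is correct and follows essentially the same route as the paper: the forward inclusion via convexity of $\Po$ and $G$-invariance (so that $\Pi_G(p)$ is a convex combination of points of $\Po$ lying in $X^G$), and the reverse inclusion by noting that points of $\Po\cap X^G$ are fixed by $\Pi_G$ and hence lie in its image. Nothing is missing.
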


\begin{proof}
 For all $x\in\Po$, we have $G\cdot x\subset \mathcal{P}$ and $\Pi_G(x)$ is a convex combination of points in $\Po$, hence in $\Po$ itself. Moreover, it holds $\Po \cap X^G = \Pi_G(\Po \cap X^G) \subset \Pi_G(\Po)$. The converse direction follows since  $\Pi_G(\Po) \subset X^G$ and $\Pi_G(\Po) \subset \Po$, thus $\Pi_G(\Po)\subset \Po \cap X^G$, which shows $\Pi_G(\Po) = \Po \cap X^G$.
\end{proof}

Finally, we want to remark that for computing the projection of the vertices $\{v_1,\dots,v_M\}$, it is sufficient to compute $\Pi_G(w_O)$ for some representatives $w_O$ of the orbits $O\in \mathcal{V}/G$ since the projection only depends on the orbit.

\printbibliography

\end{document}